\renewcommand{\P}{\mathbb{P}}
\newcommand{\mc}{\mathcal}
\newcommand{\mb}{\mathbb}
\newcommand{\ms}{\mathsf}
\newcommand{\E}{\mathbb{E}}
\newtheorem{theorem}{Theorem}
\newtheorem{lemma}{Lemma}
\newtheorem{definition}{Definition}
\newtheorem{corollary}{Corollary}
\begin{document}

\title{Feedback Communication Systems with Limitations on Incremental Redundancy}
\author{Tsung-Yi~Chen,~\IEEEmembership{Student~Member,~IEEE,}
Adam~R.~Williamson,~\IEEEmembership{Student~Member,~IEEE,}
Nambi~Seshadri,~\IEEEmembership{Fellow,~IEEE,}
and~Richard~D.~Wesel,~\IEEEmembership{Senior~Member,~IEEE}
\thanks{Richard~D.~Wesel, Adam~R.~Williamson and Tsung-Yi Chen are with the Department of Electrical Engineering, University of California, Los Angeles, Los Angeles, CA, 90095, USA.}
\thanks{Nambi Seshadri is with Broadcom Corporation, Irvine, CA, 92617, USA.}%
\thanks{This research was supported by a gift from the Broadcom Foundation.  Dr. Wesel has also consulted for the Broadcom Corporation on matters unrelated to this research.}%
\thanks{This research was supported by National Science Foundation Grant CIF CCF 1162501.}
}

\markboth{}{Submitted paper}

\maketitle


\begin{abstract}

This paper explores feedback systems using incremental redundancy (IR) with noiseless transmitter confirmation (NTC). For IR-NTC systems based on {\em finite-length} codes (with blocklength $N$) and decoding attempts only at {\em certain specified decoding times}, this paper presents the asymptotic  expansion achieved by random coding, provides rate-compatible sphere-packing (RCSP) performance approximations, and presents simulation results of tail-biting convolutional codes.

The information-theoretic analysis shows that values of $N$ relatively close to the expected latency yield the same random-coding achievability expansion as with $N = \infty$.  However, the penalty introduced in the expansion by limiting decoding times is linear in the interval between decoding times. For binary symmetric channels, the RCSP approximation provides an efficiently-computed approximation of performance that shows excellent agreement with a family of rate-compatible, tail-biting convolutional codes in the short-latency regime.  For the additive white Gaussian noise channel, bounded-distance decoding simplifies the computation of the marginal RCSP approximation and produces similar results as analysis based on maximum-likelihood decoding for latencies greater than $200$. The efficiency of the marginal RCSP approximation facilitates optimization of the lengths of incremental transmissions when the number of incremental transmissions is constrained to be small or the length of the incremental transmissions is constrained to be uniform after the first transmission.  Finally, an RCSP-based decoding error trajectory is introduced that provides target error rates for the design of rate-compatible code families for use in feedback communication systems.

\end{abstract}

\begin{IEEEkeywords}
Feedback communications, Convolutional codes, Turbo codes.
\end{IEEEkeywords}

\section{Introduction}
This section presents a summary of related previous work on noiseless feedback in communication from both the perspective of information theory and that of system design.  After discussing previous work, we summarize this paper's main contributions.

\subsection{Feedback in Information Theory}
The first notable appearances of feedback in information theory are in 1956. Shannon showed the surprising result that even in the presence of full feedback (i.e., instantaneous feedback of the received symbol) the capacity for a single-user memoryless channel remains the same as without feedback  \cite{Shannon_zero_1956}. That same year Elias and Chang each showed that while feedback does not change the capacity, it can make approaching the capacity much easier.  Elias \cite{Elias_channel_1956}  showed that feedback makes it possible to achieve the best possible distortion using a simple strategy without any coding when transmitting a Gaussian source over a Gaussian channel if the channel bandwidth is an integer multiple of the source bandwidth.  Chang \cite{Chang_theory_1956} showed that feedback can be used to reduce the equivocation $H(X|Y)$ in a coded system to bring its rate closer to capacity for finite blocklengths.
   
The next phase in the information-theoretic analysis of feedback showed that it can greatly improve the error exponent.  Numerous papers including \cite{Schalkwijk_1966_1, Schalkwijk_1966_2, Kramer_1969, Zig_1970, Nakiboglu_2008, Gallager_2010} explicitly showed this improvement.  A key result in this area is the seminal work of Burnashev \cite{Burnashev_1976} which  provides an elegant expression of the optimal error-exponent for DMC with noiseless feedback.  Burnashev employed a technique that can be considered as a form of active hypothesis testing \cite{Naghshvar_active_2012}, which uses feedback to adapt future transmitted symbols based on the current state of the receiver.   

Although active hypothesis testing provides a performance benefit,  practical systems using feedback have thus far primarily used feedback to explicitly decide when to stop transmitting additional information about the intended message.  This can happen in two ways:  Receiver confirmation (RC) occurs when the receiver decides whether it has decoded with sufficient reliability (e.g. passing a checksum) to terminate communication and feeds this decision back to the transmitter.  The alternative to RC is transmitter confirmation (TC), in which the transmitter decides (based on feedback from the receiver) whether the receiver has decoded with sufficient reliability (or even if it has decoded correctly, since the transmitter knows the true message).  

TC schemes often use distinct transmissions for a message phase and a confirmation phase. Practical TC and RC systems can usually be assigned to one of two categories based on when confirmation is possible.   Single-codeword repetition (SCR) only allows confirmation at the end of a complete codeword and repeats the same codeword until confirmation.  In contrast, incremental redundancy (IR) systems \cite{Mandelbaum_adaptive_1974}  transmit a sequence of distinct coded symbols with numerous opportunities for confirmation within the sequence before the codeword is repeated.  In some cases of IR, the sequence of distinct coded symbols is infinite and is therefore never repeated.  If the sequence of symbols is finite and thus repeated, we call this a repeated IR system.

Forney's analysis  \cite{Forney_exponential_1968} provided an early connection between these practical system designs  and theoretical analysis by deriving error-exponent bounds for a DMC using an SCR-RC scheme. Following Forney's work, Yamamoto and Itoh \cite{Yamamoto_1979} replaced Forney's SCR-RC scheme with a SCR-TC scheme in which the receiver feeds back its decoding result (based only on the codeword sent during the current message phase). The transmitter confirms or rejects the decoded message during a confirmation phase, continuing with additional message and confirmation phases if needed.  This relatively simple SCR-TC scheme allows block codes to achieve the optimal error-exponent of Burnashev for DMCs. 


Error-exponent results are asymptotic and do not always provide the correct guidance in the low-latency regime.  Polyanskiy et al. \cite{PolyIT11} analyzed the benefit of feedback in the non-asymptotic regime. They studied random-coding IR schemes under both RC and TC, and showed that capacity can be closely approached in hundreds of symbols using feedback \cite{PolyIT11} rather than the thousands of symbols required without feedback \cite{PolyIT10}.   

In \cite{PolyIT11} the IR schemes provide information to the receiver one symbol at a time.  For the RC approach, the receiver provides confirmation when the belief for a prospective codeword exceeds a threshold.  For the TC approach, the transmitter provides confirmation using a special noiseless transmitter confirmation (NTC) allowed once per message when the transmitter observes that the currently decoded message is correct. This setup, referred to as variable-length feedback codes with termination (VLFT) in \cite{PolyIT11}, supports zero-error communication. VLFT codes are defined broadly enough to include active hypothesis testing in \cite{PolyIT11}, but the achievability results of interest use feedback only for confirmation.

When expected latency (average blocklength) is constrained to be short, there is a considerable performance gap shown in \cite{PolyIT11} between the SCR-TC scheme of Yamamoto and Itoh \cite{Yamamoto_1979} and the superior IR-TC scheme of \cite{PolyIT11}.  This is notable because the SCR-TC scheme of Yamamoto and Itoh achieves the best possible error-exponent, demonstrating that error-exponent results do not always provide the correct guidance in the non-asymptotic regime.

Using \cite{PolyIT11} as our theoretical starting point, this paper explores the information-theoretic analysis of IR-NTC systems at low latencies (i.e., with short average blocklengths) under practical constraints and with practical codes.

\subsection{Feedback in Practical Systems}

The use of feedback in practical communication systems predates Shannon's 1948 paper.  By 1940 a patent had been filed for an RC feedback communication system using ``repeat request'' in printing telegraph systems \cite{ARQpatent}. The first analysis of a retransmission system using feedback in the practical literature appears to be \cite{Benice_An_1964} in 1964, which  analyzed automatic repeat request (ARQ) for uncoded systems that employ error detection (ED) to determine when to request a repeat transmission. 

In the 1960s, forward-error correction (FEC) and ED-based ARQ were considered as two separate approaches to enhance transmission reliability. Davida proposed the idea of combining ARQ and FEC \cite{Davida_Forward_1972} using punctured, linear, block codes.  The possible combinations of FEC and ARQ came to be known as type-I and type-II hybrid ARQ (HARQ), which first appeared in \cite{Lin_hybrid_1982} (also see \cite{BookLin}). Type-I HARQ is an SCR-RC feedback system that repeats the same set of coded symbols with both ED and FEC until the decoded message passes the ED test.  Recent literature and the present paper refer to type-I HARQ as simply ARQ because today's systems rarely send uncoded messages.  Type-II HARQ originally referred to systems that alternated between uncoded data with error detection and a separate transmission of parity symbols.  Today, type-II HARQ has taken on a wider meaning including essentially all IR-RC feedback systems.  

Utilizing the soft information of the received coded symbols, Chase proposed a decoding scheme for SCR-RC \cite{Chase_1985} that applies maximal ratio combining to the repeated blocks of coded symbols. Hagenauer \cite{Hagenauer_Rate_1988} introduced rate-compatible punctured convolutional (RCPC) codes, which allow a wide range of rate flexibility for IR schemes. Rowitch et al. \cite{Rowitch_performance_2000} and Liu et al.\cite{Liu_punctured_2003} used rate-compatible punctured turbo (RCPT) codes in an IR-RC system to match the expected throughput to the binary-input AWGN channel capacity.    

When implementing an IR system using a family of rate-compatible codes, the IR transmissions will often repeat once all of the symbols corresponding to the longest codeword have been transmitted and the confirmation is not yet received. Chen et al. demonstrated in \cite{Chen_2010_ITA} that a repeated IR system using RCPC codes could deliver bit error rate performance similar to long-blocklength turbo and LDPC codes, but with much lower latency.  The demonstration of \cite{Chen_2010_ITA} qualitatively agrees with the error-exponent analysis and the random-coding analysis  in \cite{PolyIT11}.  Using \cite{Chen_2010_ITA} as our practical starting point, this paper explores the practical design of IR-NTC systems at low latencies (i.e., with short average blocklengths) under practical constraints and with practical codes. 

Practical design for an IR scheme is primarily concerned with optimizing the incremental transmissions, which includes design of a rate-compatible code family and optimization of the lengths of the incremental transmissions.  Practical constraints include limitations on the blocklength $N$ of the lowest-rate codeword in the rate-compatible code family, the number $m$ of incremental transmissions, and the lengths of the incremental transmissions. 

Several papers have provided inspiration in practical deign of feedback systems.  Uhlemann et al. \cite{Uhlemann_Optimal_IR_ISIT_2003} studied a similar optimization problem assuming that the error probabilities of each transmission are given. Finding code-independent estimates of appropriate target values for these error probabilities is one goal of this paper.


In related work, \cite{Visotsky_RBIR_TCOM_2005} provides expressions relating throughput and latency to facilitate optimization of the lengths of incremental transmissions in real-time for an IR-RC system. In contrast with the current paper, the receivers  in \cite{Visotsky_RBIR_TCOM_2005} must send the optimized transmission lengths in addition to the confirmation message in order to adapt the transmission lengths in real time. Moreover, \cite{Visotsky_RBIR_TCOM_2005} used relatively large transmission lengths in order to accurately approximate mappings from codeword reliability to block error rate, while our focus is on short blocklengths. 

Chande et al. \cite{Chande_ARQ_DP_ISIT_1998} and Visotsky et al. \cite{Visotsky_ARQ_DP_ISIT_2003} discussed how to choose the optimal blocklengths for incremental redundancy using a dynamic programming framework.   Their approach uses the error probabilities of a given code with specified puncturing to determine the optimal transmission lengths for that code.  That work focused on fading channels and on longer blocklengths than the current paper.  

Freudenberger's work \cite{Freudenberger_Unreliable_TCOM_2004} differs from the present work in that the receiver requests specific incremental retransmissions by determining which segments of the received word are unreliable.  This represents a step in the direction of a practical active hypothesis testing system.

In \cite{Fricke_Reliability_HARQ_TCOM_2009} the authors used a reliability-based retransmission criteria in a HARQ setting to show the throughput gains compared to using cyclic redundancy checks for error detection.  While \cite{Fricke_Reliability_HARQ_TCOM_2009} shows simulation results for several values of the message size $k$, the results are not discussed in terms of proximity to capacity at short blocklengths.

\subsection{Main Contributions}
\label{sec:MainContrib} 
This paper is concerned with IR-NTC feedback systems that use a rate-compatible code family to send $m$ incremental transmissions. The $j$th incremental transmission has length $I_j$ so that after $i$ incremental transmissions the cumulative blocklength is $n_i=\sum_{j=1}^i I_j$ and the decoder uses all $n_i$ symbols to decode.  We are also concerned with the blocklength of the lowest-rate codeword in the rate-compatible code family, which is $N=\sum_{j=1}^m I_j$.


Among its numerous other results, \cite{PolyIT11} provides achievability result for an IR-NTC system with zero-error probability in the non-asymptotic regime. The achievability uses random-coding analysis for the case where $N=\infty$ (and thus $m=\infty$) and $I_j=1 ,\forall j$. 

The main contributions of this paper are as follows:
\begin{enumerate}
\item This paper extends the IR-NTC random-coding result of \cite{PolyIT11}  to include the constraint of finite $N$  and of $I >1$ where $I_j = I ~ \forall j$. Though $N$ is finite, our setup still supports zero-error probability through an ARQ-style retransmission if decoding fails even after the $N$ symbols of the lowest-rate codeword have been transmitted. We refer this type of system as repeated IR-NTC.  Our analysis yields two primary conclusions:
\begin{itemize}
\item Constraining the difference between $N$ and the expected latency to grow logarithmically with the expected latency negligibly reduces expected throughput as compared with $N=\infty$.
\item An uniform incremental transmission length $I$ causes expected latency to increase linearly with $I$ as compared to the $I=1$ case.
\end{itemize}
\item This paper uses rate-compatible sphere-packing (RCSP) approximation as a tool for analysis and optimization of IR feedback systems as follows:
\begin{itemize}
\item Exact joint error probability computations based on RCSP and bounded-distance (BD) decoding are used to optimize the incremental lengths $I_1, \ldots I_m$ for small values of $m$ in a repeated IR-NTC system.
\item Tight bounds on the joint RCSP performance under BD decoding are used to optimize a fixed incremental length $I$ ($I_j = I ~ \forall j$) for repeated IR-NTC with larger values of $m$.
\item  For non-repeating IR-NTC, exact joint RCSP computations under BD decoding are used to optimize the incremental lengths $I_1, \ldots I_m$ to minimize expected latency under an outage constraint. 
\item RCSP approximations yield decoding error trajectory curves that provide design targets for families of rate-compatible codes for IR-NTC feedback systems.
\end{itemize}
\item This paper provides simulations of IR-NTC systems  based on randomly-punctured tail-biting convolutional codes that demonstrate the following:
\begin{itemize} 
\item For $I=1$ and large $N$ these simulations exceed the random-coding lower bound of \cite{PolyIT11} on both the BSC and the AWGN channel.  For the BSC, performance closely matches the RCSP approximation at low latencies.  For the AWGN channel, performance is also similar, but with a gap possibly due to the constraint of the convolutional codes to a binary-input alphabet.
\item Simulations were also performed on the AWGN channel using $m=5$ incremental transmissions with transmission increments $I_j$ optimized using RCSP under the assumption of BD decoding.  These simulations exceeded the random-coding lower bound of \cite{PolyIT11} at low latencies for the same values of $m$ and $I_j$.  Also at sufficiently low latencies, the simulations closely match the throughput-latency performance predicted by RCSP with ML decoding. 
\end{itemize}
\end{enumerate}

\subsection{Organization}

To conclude this section, we briefly summarize the organization of this paper. Sec.~\ref{sec:VLFT} investigates the rate penalty incurred by imposing the constraints of finite blocklength and limited decoding times on VLFT codes.  A numerical example for the binary symmetric channel (BSC) is presented in Sec.~\ref{sec:NumResults}.
Sec.~\ref{sec:VLFTexamples} describes the examples (used throughout the rest of the paper) of IR-NTC systems using families of practical rate-compatible codes based on convolutional and turbo codes. Sec.~\ref{sec:IR_RCU_RCSP} presents the RCSP approximations and applies the approximations to provide numerical examples for the BSC and the additive white Gaussian noise (AWGN) channel. Sec.~\ref{sec:IR_RCU_RCSP} also explores ARQ, Chase code combining, and IR-NTC systems for the AWGN channels using RCSP approximation.  Sec.~\ref{sec:Optimization} continues the RCSP-based exploration of the use of IR-NTC on the AWGN channels studying the optimization of the increments $\{I_j\}_{j = 1}^m$ based on RCSP approximation. Finally Sec.~\ref{sec:Conclusion} concludes the paper.

%
\section{Practical Constraints on VLFT}

\label{sec:VLFT}

This section studies IR-NTC systems using the VLFT framework in \cite{PolyIT11} with the practical constraints of finite $N$ and uniform increments $I > 1$.

\subsection{Review of VLFT Achievability}
\label{sec:Prelim}
We will consider discrete memoryless channels (DMC) in this section and use the following notation: $x^n = (x_1, x_2, \dots, x_n)$ denotes an $n$-dimensional vector, $x_j$ the $j$th element of $x^n$, and $x_i^j$ the $i$th to $j$th elements of $x^n$. We denote a random variable (r.v.) by capitalized letter, e.g., $X^n$, unless otherwise stated. The input and output alphabets are denoted as $\mc{X}$ and $\mc{Y}$ respectively.  Let the input and output product spaces be $\ms{X} = \mc{X}^n, \ms{Y} = \mc{Y}^n$ respectively. 
A channel used without feedback is characterized by a conditional distribution $P_{\ms{Y}|\ms{X}} = \prod_{i = 1}^n P_{Y_i|X_i}$ where the equality holds because the channel is memoryless. For codes that use a noiseless feedback link, we consider causal channels $\{P_{Y_i|X_1^iY^{i-1}}\}_{i = 1}^{\infty}$ and additionally focus on causal, memoryless channels $\{P_{Y_i|X_i}\}_{i = 1}^{\infty}$. 

Let the finite dimensional distribution of $(X^n, \bar{X}^n, Y^n)$ be: 
\begin{align}
\begin{split}
&P_{X^nY^n\bar{X}^n}(x^n,y^n,\bar{x}^n)
\\
&= P_{X^n}(x^n)P_{X^n}(\bar{x}^n)\prod_{j = 1}^nP_{Y_j|X^jY^{j-1}}(y_j|x^j, y^{j-1}) ,
\end{split}
\end{align}
i.e., the distribution of $\bar{X}^n$ is identical to $X^n$ but independent of $Y^n$. The information density $i(x^n;y^n)$ is defined as
\begin{align}
\label{eqn:InformationDensity}
i(x^n; y^n) &= \log \frac{dP_{X^nY^n}(x^n,y^n)}{d(P_{X^n}(x^n)\times P_{Y^n}(y^n))} 
\\
&= \log \frac{dP_{Y^n|X^n}(y^n|x^n)}{dP_{Y^n}(y^n)}.
\end{align}
In this paper we only consider channels with essentially bounded information density $i(X;Y)$.

This section extends the results in  \cite{PolyIT11} for VLFT codes.  In order to be self-contained, we state the definition of  VLFT codes in \cite{PolyIT11}:
\begin{definition}
An $(\ell,M,\epsilon)$ variable-length feedback code with termination (VLFT code) is defined as:
	\begin{enumerate} 
	\item A common r.v. $U \in \mc{U}$ with a probability distribution $P_U$ revealed to both transmitter and receiver before the start of transmission.
	\item A sequence of encoders $f_n:\mc{U}\times\mc{W}\times \mc{Y}^{n-1} \rightarrow \mc{X}$ that defines the channel inputs $X_n = f_n(U,W,Y^{n-1})$.  Here $W$ is the message r.v. uniform in $\mc{W} = \{1, \dots, M\}$.
	\item A sequence of decoders $g_n: \mc{U}\times\mc{Y}^n \rightarrow \mc{W}$ providing the estimate of $W$ at time $n$.
	\item A stopping time $\tau \in \mb{N}$ w.r.t. the filtration $\mc{F}_n = \sigma\{U, Y^n, W\}$ such that: 
	\begin{align}
	\E[\tau]\leq \ell.
	\end{align}
	\item The final decision $\hat{W} = g_\tau(U, Y^\tau)$ must satisfy:
	\begin{align}
	\P[\hat{W} \ne W]\leq \epsilon.
	\end{align}
	\end{enumerate}
\end{definition}

VLFT represents a TC feedback system because the stopping time defined in item 4) above has access to the message $W$, which is only available at the transmitter.   As observed in \cite{PolyIT11}, the setup of VLFT is equivalent to augmenting each channel with a special use-once input symbol, referred to as the termination symbol, that has infinite reliability. We will refer this concept as noiseless transmitter confirmation (NTC) for the rest of the paper. 

NTC simplifies analysis by separating the confirmation/termination operation from regular physical channel communication. The assumption of NTC captures the fact that many practical systems communicate control signals in upper protocol layers. 

The assumption of NTC increases the non-asymptotic, achievable rate to be larger than the original feedback channel capacity because it noiselessly provides the information of the stopping position.  This increases the capacity by the conditional entropy of the stopping position given the received symbols normalized by the average blocklength.

The following is the zero-error VLFT achievability in \cite{PolyIT11}:
\begin{theorem}[\cite{PolyIT11}, Thm. 10]
\label{thm:RC_Achieve}
Fixing $M > 0$, there exists an $(\ell, M, 0)$ VLFT code with
\begin{align}
\label{eqn:AchevVLFT}
\ell &\leq \sum_{n = 0}^{\infty} \xi_n
\end{align}
where $\xi_n$ is the following expectation: 
\begin{equation} \label{eqn:Xi_n}
\small
\xi_n=\E \min\left\{1, (\text{$M$$-$$1$})\P[i(X^n;Y^n)\leq i(\bar{X}^n;Y^n)|X^nY^n]\right\} \hspace{-.05in}.
\end{equation}
The expression in \eqref{eqn:Xi_n} is referred to as the random-coding union (RCU) bound.
We take the information density before any symbols are received,  $i(X^0;Y^0)$, to be $0$ and hence $\xi_0 = 1$.  Additionally, from the proof of \cite[Thm. 11]{PolyIT11}, we have: 
\begin {equation}
\label{eqn:VLFT_DT}
\xi_n \leq \E\left[ \exp\left\{-[i(X^n; Y^n) - \log (M-1)]^+\right \}\right]. 
\end{equation}
\end{theorem}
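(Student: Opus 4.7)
The plan is to follow the standard random-coding argument for VLFT codes, augmented with the NTC termination symbol so that the transmitter can reveal the stopping time at zero cost. Specifically, I would let the common randomness $U$ encode a codebook $\{C_j\}_{j=1}^{M}$ of $M$ infinite-length codewords drawn i.i.d.\ from $P_{X^\infty}$ (truncations giving $P_{X^n}$ for every $n$). Upon observing message $W$, the transmitter outputs the symbols of $C_W$ one at a time. Because the channel has noiseless feedback, the transmitter can track the partial outputs $Y^n$ and compute the information density $i(C_j;Y^n)$ for every codeword $j$. Define the stopping time
\begin{equation}
\tau = \inf\bigl\{n \geq 0 : i(C_W;Y^n) > i(C_j;Y^n) \text{ for all } j \neq W\bigr\},
\end{equation}
and have the transmitter send the NTC termination symbol at time $\tau$. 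The receiver's decoder $g_\tau$ outputs the unique $\arg\max_j i(C_j;Y^\tau)$; by the very definition of $\tau$ this is $W$, giving zero error.

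Next I would bound the expected stopping time. Writing $\E[\tau] = \sum_{n=0}^{\infty} \P[\tau > n]$, the event $\{\tau > n\}$ is the event that at time $n$ some incompetitor $C_j$ ($j\neq W$) still satisfies $i(C_j;Y^n) \geq i(C_W;Y^n)$. Averaging over the random codebook and using symmetry, I can condition on $(X^n,Y^n) = (C_W, Y^n)$ and union-bound over the $M-1$ independent challengers (each distributed as $\bar{X}^n$, independent of $Y^n$):
\begin{align}
\P[\tau > n \mid X^n, Y^n] \leq \min\bigl\{1,\, (M-1)\,\P[i(\bar{X}^n;Y^n) \geq i(X^n;Y^n) \mid X^n, Y^n]\bigr\}.
\end{align}
Taking expectations recovers $\xi_n$ and hence $\E[\tau] \leq \sum_{n=0}^{\infty}\xi_n$, which is the claimed bound \eqref{eqn:AchevVLFT}. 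The $n=0$ term uses the convention $i(X^0;Y^0)=0$ and $\xi_0 = 1$, corresponding to the receiver having no information before transmission begins.

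For the auxiliary DT-type bound \eqref{eqn:VLFT_DT}, I would apply the standard change-of-measure/Markov trick to the inner conditional probability: since $(\bar{X}^n, Y^n)\sim P_{X^n}\times P_{Y^n}$, one has $\E\bigl[e^{i(\bar{X}^n;Y^n)}\bigr] = 1$, and conditioning on $Y^n$ together with Markov's inequality gives $\P[i(\bar{X}^n;Y^n) \geq \gamma \mid Y^n] \leq e^{-\gamma}$ for any threshold $\gamma$. Setting $\gamma = i(X^n;Y^n)$ yields $(M-1)\P[i(\bar{X}^n;Y^n)\geq i(X^n;Y^n)\mid X^n,Y^n] \leq (M-1)e^{-i(X^n;Y^n)}$, so $\min\{1,\cdot\} \leq \exp\bigl(-[i(X^n;Y^n) - \log(M-1)]^+\bigr)$, and taking expectations gives \eqref{eqn:VLFT_DT}.

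The conceptual core of the argument is already present in \cite{PolyIT11}, and the mechanics are routine; the only place where I would need to be careful is verifying that the stopping time $\tau$ is a genuine stopping time with respect to the filtration $\mathcal{F}_n = \sigma\{U, Y^n, W\}$ (which is clear because $\tau$ is computable from $U$, $W$, and $Y^n$), and handling the measure-theoretic technicalities of conditioning on $(X^n, Y^n)$ in a continuous-alphabet setting, which is where the essentially-bounded information density assumption made just before the theorem statement becomes important. The main obstacle is really just bookkeeping to get the exact $\min\{1,\cdot\}$ form rather than the looser Gallager-style bound.
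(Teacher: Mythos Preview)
Your proposal is correct and follows essentially the same approach the paper describes (and that appears in \cite{PolyIT11}): a random infinite-length codebook drawn i.i.d.\ from $P_X$, an information-density maximizing decoder $g_n$, stopping time equal to the first $n$ at which $g_n(U,Y^n)=W$, and the RCU union bound over the $M-1$ competitors to control $\P[\tau>n]$; your derivation of \eqref{eqn:VLFT_DT} via the change-of-measure/Markov step is likewise the standard one attributed to \cite[Thm.~11]{PolyIT11}.
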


The proof of Thm.~\ref{thm:RC_Achieve} is based on a special class of VLFT codes called fixed-to-variable (FV) codes \cite{Verdu_10}. FV codes satisfy the following conditions:
\begin{align}
\label{eqn:StopFeedback}
&f_n(U,W,Y^{n-1}) = f_n(U,W)
\\
\label{eqn:StopZeroError}
&\tau = \inf\{ n \geq 1: g_n(U,Y^n) = W\}.
\end{align}
The condition in \eqref{eqn:StopFeedback} precludes active hypothesis testing since the feedback is not used by the encoder to determine transmitted symbols.  The condition in \eqref{eqn:StopZeroError} enforces zero-error operation since the stopping criterion is correct decoding.

In the proof of Thm.~\ref{thm:RC_Achieve} each codeword is randomly drawn according to the capacity-achieving input distribution $\prod_{j = 1}^{\infty}P_{X}$ on the infinite product space $X^{\infty}$. Given a codebook realization, the encoder maps a message to an infinite-length vector and the transmitter sends the vector over the channel symbol-by-symbol. Upon receiving each symbol, the decoder computes $M$ different information densities between the $M$ different codewords and the received vector. The transmitter sends the noiseless confirmation when the largest information density corresponds to the true message. Averaging over all possible codebooks gives the achievability result. 

\subsection{Introducing Practical Constraints to VLFT}
\label{sec:MainConstribution} Define a VLFT code with the constraints of finite blocklength $N$ and uniform increment $I$ as follows:
\begin{definition}
\label{def:PracticalVLFT}
An $(\ell, M, N, I, \epsilon)$ VLFT code modifies 2) and 4) in Definition $1$ as follows: 
	\begin{enumerate} 
	\item[2')] A sequence of encoders 
	$$f_{n+kN}: \mc{U}\times\mc{W}\times\mc{Y}_{kN+1}^{n+kN-1}\mapsto \mc{X}, k \in \mb{N}$$
	that satisfies $f_n  = f_{n+N}$.
	\item[4')] A stopping time $\tau\in\{n_1 + kI:k\in\mb{N}\}$ w.r.t. the filtration $\mc{F}_n$ s.t. $\E[\tau] \leq \ell$ and $n_1$ is a given constant such that $n_1 + kI | N$ for some $k\in\mb{N}$.
	\end{enumerate}
\end{definition}

Define the fundamental limit of message cardinality $M$ for an $(\ell, M, N, I, \epsilon)$ VLFT code as follows:
\begin{definition}
Let $M^*_t(\ell, N, I, \epsilon)$ be the maximum integer $M$ such that there exists an $(\ell, M, N, I, \epsilon)$ VLFT code. For zero-error codes where $\epsilon = 0$, we denote the maximum $M$ as $M^*_t(\ell, N, I)$ and for zero-error codes with $I=1$ (i.e., decoding attempts after every received symbol) we denote the maximum $M$ as $M^*_t(\ell, N)$.
\end{definition}



For a feedback system that conveys $M$ messages with expected latency $\ell$, the expected throughput $R_t$ is given as $R_t = \log M / \ell$. All of the results that follow assume an arbitrary but fixed channel $\{P_{Y_j|X_j}\}_{j = 1}^N$ and a channel-input process $\{X_j\}_{j = 1}^{N}$ taking values in $\mc{X}$ where $N$ could be infinity. 

\subsection{The Finite-Blocklength Limitation}
\label{sec:VLFT_FiniteLength} 
This subsection investigates $(\ell, M, N, I, \epsilon)$ VLFT codes with finite $N$ but retains decoding at every symbol ($I=1$). The achievability results are examples of IR-NTC systems (or FV codes as described in Sec.~\ref{sec:Prelim}), so that encoding does not depend on the feedback except that feedback indicates when the transmission should be terminated. 

In an IR-NTC system, the expected latency $\E[\tau]$ is given as:
\begin{align}
\label{eqn:SumsOfPtau}
\E[\tau]  &= \sum_{n = 1}^{\infty} n \P[\tau = n]
\\
&= \sum_{n \geq 0} \P[\tau > n]
\\
\label{eqn:SumsOfJoints}
&=\sum_{n \geq 0} \P[E_n]
\\
\label{eqn:MarginalZeta}
&\le \sum_{n \geq 0} \P[\zeta_n] \, ,
\end{align}
where 
\begin{equation}
\label{eqn:JointErrorEvent}
E_n = \cap_{j = 1}^{n}\zeta_j
\end{equation} 
and $\zeta_j$ is the marginal error event at the decoder immediately after the $j$th symbol is transmitted.   Equation \eqref{eqn:MarginalZeta} follows since $E_n\subset \zeta_n$ implies $\P[E_n] \le \P[\zeta_n]$.

Consider a code $\mc{C}_N$ with finite blocklength $N$  and symbols from $\mc{X}$.  Achievability results for an $(\ell, M, N, 1, \epsilon)$ ``truncated'' VLFT code follow from a random-coding argument. In particular we have the following: 
\begin{theorem}
\label{thm:FiniteVLFT}
For any $M > 0$ there exists an $(\ell, M, N, 1, \epsilon)$ truncated VLFT code with
\begin{align}
\ell &\leq \sum_{n = 0}^{N-1} \xi_n
\\
\epsilon &\leq \xi_N.
\end{align}
where $\xi_n$ is the same as \eqref{eqn:Xi_n}. 
\end{theorem}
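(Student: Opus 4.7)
The plan is to mirror the random-coding proof of Theorem~\ref{thm:RC_Achieve} but truncate the construction at blocklength $N$, treating the last decoding attempt as a potentially uncorrected error event. First I would construct the ensemble: draw $M$ codewords independently according to $\prod_{j=1}^{N} P_X$ on $\mc{X}^N$, where $P_X$ is the input distribution used in Theorem~\ref{thm:RC_Achieve}. The encoder is the FV-style map of \eqref{eqn:StopFeedback}, emitting the $n$th symbol of the selected codeword; after the $n$th channel use the decoder computes the maximum-information-density estimate $\hat W_n = g_n(U,Y^n)$. The transmitter, which has access to both $W$ and the simulated decoder outputs (via noiseless feedback), issues the termination symbol at the stopping time
\begin{align*}
\tau = \min\bigl\{\, n \in \{1,\dots,N\} : g_n(U,Y^n) = W \,\bigr\},
\end{align*}
with the convention $\tau = N$ when the set is empty. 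Thus $\tau \in \{1,\dots,N\}$ by construction, which is what the finite-$N$, $I=1$ definition in Def.~\ref{def:PracticalVLFT} requires, and the final decision is $\hat W = g_\tau(U,Y^\tau)$.

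Next I would bound the expected latency. Because $\tau\le N$ almost surely,
\begin{align*}
\E[\tau] \;=\; \sum_{n=0}^{N-1} \P[\tau > n],
\end{align*}
which is the truncated analogue of \eqref{eqn:SumsOfPtau}--\eqref{eqn:SumsOfJoints}. The event $\{\tau>n\}$ coincides with the joint error event $E_n = \cap_{j=1}^{n}\zeta_j$ from \eqref{eqn:JointErrorEvent}, and the marginal bound \eqref{eqn:MarginalZeta} gives $\P[E_n]\le \P[\zeta_n]$. The key ingredient is that the random-coding union argument that underlies Theorem~\ref{thm:RC_Achieve} applies verbatim to each marginal: averaging over the codebook, $\E_{\mc{C}}\,\P[\zeta_n]\le \xi_n$ for every $n\ge 0$, since the ensemble distribution of $(X^n,\bar X^n,Y^n)$ restricted to the first $n$ coordinates is unchanged by truncating the codeword length from $\infty$ to $N$. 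Summing over $n=0,\dots,N-1$ yields the ensemble-average latency bound $\sum_{n=0}^{N-1}\xi_n$.

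For the error probability, note that a decoding error under the stopping rule above can occur only when $\tau=N$ and $\hat W_N\ne W$, so $\{\hat W\ne W\}\subseteq \zeta_N$. Applying the RCU bound once more at the single index $n=N$ yields $\E_{\mc{C}}\,\epsilon\le \xi_N$. Finally, to deduce the existence claim, I would use the standard argument from the proof of \cite[Thm.~10, 11]{PolyIT11}: the pair of ensemble-average bounds are computed on the same random code, so there exists a realization simultaneously achieving both (the random code itself can be taken as the witnessing code, as is conventional in this setting). The main obstacle is a bookkeeping one rather than a conceptual one: verifying that the RCU bound $\xi_n$ of \eqref{eqn:Xi_n} is unchanged when the codebook is shortened to length $N$, which reduces to observing that $\xi_n$ depends only on the joint law of $(X^n,\bar X^n,Y^n)$ for $n\le N$, and that the stopping-time decomposition of $\E[\tau]$ terminates cleanly at $n=N-1$ because $\tau\le N$ deterministically.
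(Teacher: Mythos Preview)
Your proposal is correct and follows essentially the same route as the paper: the same random FV codebook of length $N$, the same maximum-information-density decoder, the same stopping rule $\tau=\inf\{n:g_n(U,Y^n)=W\}\wedge N$, the same latency bound via $\E[\tau]=\sum_{n=0}^{N-1}\P[\tau>n]\le\sum_{n=0}^{N-1}\P[\zeta_n]\le\sum_{n=0}^{N-1}\xi_n$, and the same error bound via $\{\hat W\ne W\}\subseteq\cap_{j=1}^N\zeta_j\subseteq\zeta_N$. The only cosmetic difference is that the paper writes out the intermediate equality $\P[\hat W\ne W]=\P[\cap_{j=1}^N\zeta_j]$ explicitly, and it relies directly on the common randomness $U$ (so the random codebook \emph{is} the VLFT code) rather than phrasing it as extracting a realization---your parenthetical already notes this, so there is no substantive gap.
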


The proof is provided in Appendix~\ref{sec:AppendixVLFT}.
Achievability results for $\epsilon=0$ can be obtained using an $(\ell, M, N, 1, 0)$ ``repeated'' VLFT code, which repeats the transmission if the blocklength-$N$ codeword is exhausted without successful decoding. The transmission process starts from scratch in this case, discarding the previous received symbols.  Using the original $N$ symbols through, for example, Chase code combining would be beneficial but is not necessary for our achievability result. Specifically, for an $(\ell, M, N, I, 0)$ repeated VLFT code we have the following result:
\begin{theorem}
\label{thm:FiniteFV}
For every $M > 0$ there exists an $(\ell, M, N, 1, 0)$ repeated VLFT code such that
\begin{align}
\label{eqn:FiniteFV}
\ell &\leq \frac{1}{(1-\xi_N)} \sum_{n = 0}^{N-1}\xi_n,
\end{align}
where $\xi_n$ is the same as \eqref{eqn:Xi_n}.  
\end{theorem}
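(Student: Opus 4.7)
The plan is to use Theorem~\ref{thm:FiniteVLFT} as a black box and add a geometric-retry layer on top. I would construct the $(\ell, M, N, 1, 0)$ repeated VLFT code by drawing an i.i.d.\ sequence of random codebooks $\mc{C}_1, \mc{C}_2, \ldots$ from the capacity-achieving ensemble of Thm.~\ref{thm:FiniteVLFT}, folding the sequence into the common randomness $U$, and letting $\mc{C}_j$ govern round $j$. Each round runs the truncated scheme from Thm.~\ref{thm:FiniteVLFT}: the transmitter sends up to $N$ symbols and fires NTC as soon as the receiver decodes correctly. If NTC has not fired after $N$ symbols, we discard everything received so far and begin round $j{+}1$ with codebook $\mc{C}_{j+1}$. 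Because NTC fires only on correct decoding and we never give up, the overall code is zero-error whenever the per-round failure probability is strictly less than one.

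Let $\tau_j$ denote the number of symbols used in round $j$ (capped at $N$) and let $F_j$ be the indicator that round $j$ fails to produce a correct decoding. The independence of the codebooks and of the channel noise across rounds makes the pairs $(\tau_j, F_j)$ i.i.d. Let $K$ be the index of the first successful round; then $\{K \geq j\} = \{F_1 = \cdots = F_{j-1} = 1\}$ depends only on rounds $1,\ldots,j{-}1$ and is therefore independent of $\tau_j$. Writing the total stopping time as $L = \sum_{j \geq 1} \tau_j \mathbf{1}\{K \geq j\}$ and taking expectations yields
\begin{align*}
\E[L] &= \sum_{j = 1}^{\infty} \E[\tau_j]\,\P[K \geq j]
= \bar\tau \sum_{j = 1}^{\infty} \bar{p}_f^{\,j-1}
= \frac{\bar\tau}{1 - \bar{p}_f},
\end{align*}
where $\bar\tau = \E[\tau_1]$ and $\bar{p}_f = \P[F_1 = 1]$ are the ensemble-averaged latency and failure probability of a single round.

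Plugging Thm.~\ref{thm:FiniteVLFT} into this identity closes the argument: $\bar\tau \leq \sum_{n=0}^{N-1}\xi_n$ and $\bar{p}_f \leq \xi_N$, so $\E[L] \leq \frac{1}{1-\xi_N}\sum_{n=0}^{N-1}\xi_n$. Since the bound holds after averaging, at least one realization of $(\mc{C}_1, \mc{C}_2, \ldots)$ attains it, giving a deterministic $(\ell, M, N, 1, 0)$ repeated VLFT code with the claimed latency. The step that I expect to require the most care is the independence of rounds: if a single common codebook were reused, then $\tau_j$ and $\{K \geq j\}$ would be coupled through $\mc{C}$, and Jensen's inequality would block the clean factorization into $\bar\tau/(1-\bar{p}_f)$. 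Drawing fresh codebooks per round (equivalently, enlarging $U$ to carry an infinite i.i.d.\ codebook sequence) is what makes the geometric-series identity exact and lets Thm.~\ref{thm:FiniteVLFT} be applied termwise.
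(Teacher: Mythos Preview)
Your proposal is correct, and it differs from the paper's proof in exactly the way you anticipate. The paper reuses a \emph{single} codebook $U$ across all rounds: it defines $(f'_n,g'_n)$ as $N$-periodic extensions of the truncated scheme of Thm.~\ref{thm:FiniteVLFT} and writes the renewal identity
\[
\E[\tau'] \;=\; \sum_{n=0}^{N-1}\P\Bigl[\bigcap_{j\le n}\zeta_j\Bigr] \;+\; \P\Bigl[\bigcap_{j\le N}\zeta_j\Bigr]\E[\tau'],
\]
then bounds the joint probabilities by marginals and applies RCU. As you correctly flag, with a shared $U$ this identity is exact only conditionally on $U$; unconditionally the second term is $\E_U\bigl[\P(E_N\mid U)\,\E(\tau'\mid U)\bigr]$ rather than $\P[E_N]\,\E[\tau']$, and the paper does not address this averaging step. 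Your fresh-codebook-per-round construction makes the rounds genuinely i.i.d.\ over the full ensemble, so the geometric factorization $\bar\tau/(1-\bar p_f)$ is exact and Thm.~\ref{thm:FiniteVLFT} applies termwise---a cleaner argument on precisely the point where the paper is terse.

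One caveat going the other way: Definition~\ref{def:PracticalVLFT} requires $f_n=f_{n+N}$ literally, which the paper's single-codebook construction satisfies but yours does not (rounds $j$ and $j{+}1$ extract different coordinates of the enlarged $U$). So the paper's code sits inside the stated class while yours proves the same bound for a slightly broader scheme; if strict conformity to the definition matters, you would still need to close the conditional-to-unconditional gap for a reused codebook.
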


The proof is provided in Appendix~\ref{sec:AppendixVLFT}.
The rate penalty of using a finite-length (length-$N$) codebook is quantified in the following theorem and its corollary:
\begin{theorem}
\label{thm:VLFTExpandFinite}
For an $(\ell, M, N, 1, 0)$ repeated VLFT code with $N = \Omega(\log M)$, we have the following upper bound on $\ell$ for a stationary DMC with capacity $C$:
\begin{align}
\label{eqn:ellExpansion1}
\ell \leq \frac{\log M}{C} + c\, .
\end{align}
Let  $C_\Delta = C- \Delta$ for some $\Delta > 0$ and $N = \log M / C_\Delta$. The $O(1) $ term $c$ due to the finite value of $N$ is upper bounded by the following expression:
\begin{align}
\label{eqn:ellExpansion2}
c \le \frac{b_2 \log M}{C(M^{b_3/C_\Delta})} 
+ \frac{b_0\log M}{C_\Delta M^{b_1\Delta/C_\Delta}} + a
\end{align}
where $a$ depends on the mean and uniform bound of $i(X;Y)$, and $b_j$'s are constants related to $\Delta$ and $M$.  The proof is provided in Appendix~\ref{sec:AppendixVLFT}.
\end{theorem}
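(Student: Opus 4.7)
The plan is to start from the bound in Theorem~\ref{thm:FiniteFV},
\begin{equation*}
\ell \;\leq\; \frac{\sum_{n=0}^{N-1}\xi_n}{1-\xi_N},
\end{equation*}
and treat numerator and denominator separately, keeping two different bounds on the numerator to pair with two different contributions to $\xi_N$. The fine bound is $\sum_{n=0}^{N-1}\xi_n \leq \sum_{n=0}^{\infty}\xi_n \leq \log M/C + a$, obtained by dominating the partial sum by the infinite one and reusing the $N=\infty$ expansion of \cite{PolyIT11}: split the sum at $n^* \approx \log M/C$, bound $\xi_n\leq 1$ for $n\leq n^*$, and use \eqref{eqn:VLFT_DT} together with the concentration of $i(X^n;Y^n)=\sum_{j=1}^n i(X_j;Y_j)$ for $n>n^*$, where $a$ depends only on the mean and essential bound of $i(X;Y)$. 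The coarse bound is the trivial estimate $\sum_{n=0}^{N-1}\xi_n \leq N = \log M/C_\Delta$.

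For the denominator I would apply the elementary decomposition, valid for any $t\geq 0$,
\begin{equation*}
\xi_N \;\leq\; e^{-t} \;+\; \P\!\left[i(X^N;Y^N)\leq \log(M-1)+t\right],
\end{equation*}
which follows from \eqref{eqn:VLFT_DT}. Since $\E[i(X^N;Y^N)]=NC=(C/C_\Delta)\log M$ exceeds $\log M$ by the margin $(\Delta/C_\Delta)\log M$, I would choose $t = b_1\Delta\log M/C_\Delta$ for some $b_1\in(0,1)$, which makes the first term exactly $M^{-b_1\Delta/C_\Delta}$. The second term is a large-deviation probability for a sum of $N$ independent essentially-bounded copies of $i(X;Y)$; Hoeffding (or Cram\'er) controls it by $\exp(-NJ)=M^{-J/C_\Delta}$ for a positive exponent $J$ determined by the residual gap and by the essential bound on $i(X;Y)$. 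Renaming this exponent $b_3$ gives $\xi_N \leq M^{-b_1\Delta/C_\Delta}+M^{-b_3/C_\Delta}$.

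Finally, for $M$ large enough that $\xi_N\leq 1/2$, the bound $1/(1-\xi_N)\leq 1+2\xi_N$ yields
\begin{equation*}
\ell \;\leq\; \left(\tfrac{\log M}{C}+a\right) + 2N\, M^{-b_1\Delta/C_\Delta} + 2\left(\tfrac{\log M}{C}+a\right) M^{-b_3/C_\Delta},
\end{equation*}
where for each cross-term I pick the version of the numerator best matched to the structure of \eqref{eqn:ellExpansion2}: the coarse bound $N=\log M/C_\Delta$ pairs with the $e^{-t}$ contribution (producing the $\log M/(C_\Delta M^{b_1\Delta/C_\Delta})$ term) and the fine bound pairs with the Hoeffding contribution (producing the $\log M/(CM^{b_3/C_\Delta})$ term). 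The residual $M$-independent pieces are absorbed into the final $a$. The main obstacle is the concentration step for the denominator: one must choose $t$ and apply the Hoeffding/Cram\'er bound so that the two resulting exponents appear cleanly as $b_1\Delta/C_\Delta$ and $b_3/C_\Delta$; this bookkeeping relies essentially on the bounded-information-density hypothesis stated at the start of Section~\ref{sec:VLFT}.
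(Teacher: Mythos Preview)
Your overall strategy is sound and does lead to the stated bound, but it differs structurally from the paper's argument and one step of your sketch would not work as written.

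The paper does not simply bound the numerator by the $N=\infty$ sum. It introduces the hitting time $\tilde{\tau}=\inf\{n:S_n\geq \log M\}\wedge N$, splits on the event $E=\{\tilde\tau<N\}$, and uses the strong Markov property to show that on $E$ one has $\sum_{n<N}\E[e^{-[S_n-\log M]^+}] \leq \E[\tilde\tau]+a_3$ with $\E[\tilde\tau]\leq \log M/C+a_4$, while on $E^c$ the contribution is $N\,\P[E^c]$. Thus the two correction terms in \eqref{eqn:ellExpansion2} come from different places: the $b_0$ term is $N\,\P[E^c]$ sitting in the \emph{numerator}, controlled by a Chernoff bound on $\P[S_N<\log M]$; the $b_2$ term comes from $\P[\zeta_N]$ in the \emph{denominator}, controlled by the random-coding error exponent. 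Your route instead keeps the numerator clean ($\leq \log M/C+a$) and pushes both corrections into the denominator via the threshold decomposition $\xi_N\leq e^{-t}+\P[S_N<\log M+t]$, then pairs each piece with the coarse or fine numerator bound to reproduce the form of \eqref{eqn:ellExpansion2}. Both routes are legitimate, and yours is more economical once the $N=\infty$ bound on $\sum_n\xi_n$ is available.

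The caution: your parenthetical description of how that $N=\infty$ bound is obtained---``split at $n^*\approx\log M/C$, bound $\xi_n\leq 1$ below and use concentration above''---does \emph{not} yield an $O(1)$ remainder. For $n=n^*+k$ the deviation $\P[S_n<\log M]$ is of order $\exp(-ck^2/n^*)$ by Hoeffding or Cram\'er, and summing over $k$ gives $O(\sqrt{\log M})$, not $O(1)$. The argument in \cite{PolyIT11} (and in this paper) obtains $O(1)$ precisely by using the random hitting time and the strong Markov property to restart the walk once the threshold is crossed, so that the post-crossing tail is $\sum_k\E[e^{-[S_k]^+}]<\infty$. If you are merely citing the result your argument goes through; if you intend to re-derive it, you need this stopping-time step rather than a deterministic split.
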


This choice of $N$ may have residual terms decaying with $M$ very slowly.  However, our non-asymptotic numerical results in Sec. \ref{sec:NumResults} for a BSC indicate that this decay is fast enough for excellent performance in the short-blocklength regime. 

An asymptotic expansion of $\log M_t^*(\ell, N)$ needs to be independent of $M$ and requires $N$ growing with $\ell$.   However, the components of the correction term $c$ in Thm. \ref{thm:VLFTExpandFinite} depend on both $N$ (as  $\log M / C_\Delta$) and $M$.  Indeed for a fixed $\ell$, the smallest $M$ satisfying \eqref{eqn:ellExpansion1} and \eqref{eqn:ellExpansion2} is achievable.  The argument we make below is that for any fixed constant $c=c_0 > 0$, there is an $\ell_0$ that depends logarithmically on $c_0^{-1}$ such that the expansion $\log M_t^*(\ell, N) \geq C \ell - c_0$ is true for all $\ell \geq \ell_0$. 
We first invoke the converse for an $(\ell, M, \infty, 1, 0)$ VLFT code:
\begin{theorem}[\cite{PolyIT11}, Thm. 11] \label{thm:poly11}
Given an arbitrary DMC with capacity $C$ we have the following for an $(\ell, M,\infty, 1, 0)$ VLFT code:
\begin{align}
\log M_t^* \leq \ell C + \log(\ell+1) + \log e\,.
\end{align}
\end{theorem}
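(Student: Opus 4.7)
The plan is to exploit the zero-error constraint together with a chain-rule decomposition and then bound each resulting piece separately.

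First I would start from $\log M = H(W)$, and since the zero-error condition forces $W = g_\tau(U, Y^\tau)$ to be a deterministic function of $(U, Y^\tau, \tau)$, we have $\log M = I(W; U, Y^\tau, \tau)$. Independence of the message $W$ from the common randomness $U$ collapses this to $\log M = I(W; Y^\tau, \tau \mid U)$, and a single application of the chain rule gives the decomposition
\begin{equation*}
\log M = I(W; \tau \mid U) + I(W; Y^\tau \mid U, \tau).
\end{equation*}

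For the first summand I would use $I(W; \tau \mid U) \leq H(\tau)$. Since $\tau$ is a positive-integer random variable with $\E[\tau] \leq \ell$, the maximum-entropy distribution subject to this mean constraint is geometric, which yields $H(\tau) \leq (\ell{+}1)\log(\ell{+}1) - \ell\log\ell \leq \log(\ell+1) + \log e$ via the elementary estimate $(1 + 1/\ell)^\ell \leq e$. This produces exactly the $\log(\ell+1) + \log e$ residual appearing in the theorem statement.

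The main work is the second summand, where I need $I(W; Y^\tau \mid U, \tau) \leq C\ell$. The obstacle is that $\tau$ is a stopping time depending on $(U, W, Y^\tau)$, the encoders $f_n$ may actively exploit the feedback link, and conditioning on the event $\{\tau = n\}$ generally perturbs the joint channel statistics, so a direct ``$I(X^n;Y^n) \leq nC$'' bound does not apply pointwise in $n$. My approach is a martingale/optional-stopping argument on the information density: for a DMC of capacity $C$, the per-symbol conditional expected information gain is at most $C$ even in the presence of feedback by the memoryless structure, so $i(X^n;Y^n) - nC$ is a supermartingale with respect to the natural filtration. Applying Doob's optional stopping theorem to the truncated stopping time $\tau \wedge N$ and letting $N \to \infty$ (justified by $\E[\tau] \leq \ell$ together with the essentially-bounded information density assumption stated at the start of Sec.~\ref{sec:Prelim}) yields $\E[i(X^\tau; Y^\tau)] \leq C\,\E[\tau] \leq C\ell$, and a data-processing step transfers this to $I(W; Y^\tau \mid U, \tau) \leq C\ell$. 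Summing the two bounds then gives the claim $\log M_t^* \leq \ell C + \log(\ell+1) + \log e$. The most delicate step, and the one I would verify most carefully, is the optional-stopping argument with feedback: the channel conditional law remains memoryless step-by-step, but the stopping rule can couple past and future, so the uniform integrability needed to pass to the limit requires the essentially-bounded information density hypothesis.
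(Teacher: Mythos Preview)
The paper does not prove this theorem; it is quoted verbatim from \cite{PolyIT11} and invoked as a black box. There is therefore no in-paper proof to compare your proposal against.

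That said, your sketch is broadly the standard converse strategy from \cite{PolyIT11}: decompose $\log M$ via $H(\tau)$ plus a channel-information term, bound $H(\tau)$ by the geometric maximum-entropy estimate, and control the remaining term by $C\E[\tau]$. The $H(\tau)$ portion is fine. The part you yourself flag as delicate is genuinely incomplete as written: from $\E[i(X^\tau;Y^\tau)]\le C\E[\tau]$ there is no one-line ``data-processing step'' that yields $I(W;Y^\tau\mid U,\tau)\le C\E[\tau]$, because conditioning on the event $\{\tau=n\}$ alters the joint law of $(X^n,Y^n)$ and the per-block channel is no longer the original memoryless one under that conditioning. The usual fix is not to condition on $\tau$ after the fact but to sum per-symbol contributions using the predictability of $\{\tau\ge n\}$ with respect to $(U,W,Y^{n-1})$; equivalently, one works with $\sum_{n\ge 1}I(W;Y_n\mid U,Y^{n-1},\mathbf 1\{\tau\ge n\})$ and uses that the channel law of $Y_n$ given $X_n$ is unaffected by the predictable indicator. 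Your optional-stopping intuition is morally the same statement, but the bridge you call ``data processing'' needs to be replaced by this predictable-decomposition argument to be a proof.
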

After some manipulation, Thms. \ref{thm:VLFTExpandFinite} and \ref{thm:poly11} imply the following:
\begin{corollary}
\label{cor:ScalingForEll}
For an $(\ell, M, N, 1, 0)$ repeated VLFT code, we can pick $\delta > \frac{\log(\ell+1) + \log e}{C}$  and let $N = (1+\delta) \ell = \ell + \Omega(\log\ell)$ such that the following holds for a stationary DMC with capacity $C$:\footnote{As opposed to the expression in \cite{PolyIT11}, we use a minus sign for $O(1)$ term to make the penalty clear.}
\begin{align}
\log M_t^*(\ell, N) \geq \ell C - O(1)\,.
\end{align}
\end{corollary}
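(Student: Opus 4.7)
My plan is to combine the achievability bound of Thm.~\ref{thm:VLFTExpandFinite} with the converse of Thm.~\ref{thm:poly11} and to verify that a suitable choice of back-off parameter $\Delta$ produces the claimed scaling. I would start by parametrizing $N$ through the rate back-off, setting $N = \log M / C_\Delta$ with $C_\Delta = C - \Delta$, which is the form required by Thm.~\ref{thm:VLFTExpandFinite}. The bound \eqref{eqn:ellExpansion1} then rearranges to $\log M \geq C(\ell - c) = C\ell - Cc$, so the task reduces to showing both that $c$ in \eqref{eqn:ellExpansion2} is $O(1)$ and that the resulting $N$ has the form $(1+\delta)\ell$ with $\delta$ meeting the prescribed lower bound.

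Next I would use the converse to control $N$. Since $M_t^*(\ell,N) \leq M_t^*(\ell,\infty)$, Thm.~\ref{thm:poly11} gives $\log M \leq C\ell + \log(\ell+1) + \log e$. Substituting into $N = \log M / C_\Delta$ yields
\begin{align*}
N \leq \ell + \frac{\Delta\ell + \log(\ell+1) + \log e}{C - \Delta}.
\end{align*}
Choosing $\Delta$ of order $\log\ell/\ell$ (so that $\Delta\ell$ is itself $\Theta(\log\ell)$) makes the right-hand side $\ell + \Theta(\log\ell)$, matching $N = (1+\delta)\ell$ with a $\delta$ that satisfies the lower bound stated in the corollary and witnesses $N = \ell + \Omega(\log\ell)$.

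The remaining step is to verify $c = O(1)$ under this scaling. The term $a$ in \eqref{eqn:ellExpansion2} is already $O(1)$ by the assumption of bounded $i(X;Y)$. Since $M = 2^{\Theta(\ell)}$ by the two-sided control above, the summand $b_2 \log M / (C M^{b_3/C_\Delta})$ decays exponentially in $\ell$ because $b_3/C_\Delta$ is bounded away from zero. The delicate summand is $b_0 \log M / (C_\Delta M^{b_1\Delta/C_\Delta})$: with $\Delta = c_1 \log\ell/\ell$ one has $M^{b_1\Delta/C_\Delta} = 2^{\Theta(\log\ell)} = \ell^{\Theta(1)}$, and by taking $c_1$ large enough (depending on $b_1$ and $C$) the exponent on $\ell$ can be made at least $1$, which overwhelms the $\log M = \Theta(\ell)$ numerator and leaves an $O(1)$ remainder. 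Plugging $c = O(1)$ back into $\log M \geq C\ell - Cc$ completes the argument.

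The main obstacle is precisely the tension in choosing $\Delta$: it must be small enough to keep $N - \ell$ logarithmic in $\ell$, yet large enough that $M^{b_1\Delta/C_\Delta}$ beats $\log M$. The borderline $\Delta = \Theta(\log\ell/\ell)$ threads this needle, and showing that the constants $b_0, b_1, b_2, b_3$ inherited from the proof of Thm.~\ref{thm:VLFTExpandFinite} admit a valid $c_1$ is where the bookkeeping lies. Once that is confirmed, the existence of the threshold $\ell_0$ beyond which $\log M_t^*(\ell, N) \geq C\ell - O(1)$ holds follows immediately, and the corollary is proved.
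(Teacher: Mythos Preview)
Your approach mirrors the paper's: combine the achievability of Thm.~\ref{thm:VLFTExpandFinite} with the converse of Thm.~\ref{thm:poly11} and check that the residual terms are $O(1)$. The paper's execution is dual to yours: it fixes $N=(1+\delta)\ell$, uses the converse to force $\log M/N \le C-\delta'$ for some $\delta'>0$, and then returns to the intermediate inequality \eqref{eqn:EllBoudnAtN} (rather than the packaged constants of \eqref{eqn:ellExpansion2}) to bound $\P[\zeta_N]$ and $\P[E^c]$ directly by Chernoff. Your dual parametrization through $\Delta$ is fine in principle.

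The step that fails is your control of the ``delicate summand'' at the borderline scaling $\Delta = c_1\log\ell/\ell$. Thm.~\ref{thm:VLFTExpandFinite} explicitly states that the $b_j$ in \eqref{eqn:ellExpansion2} are ``constants related to $\Delta$'': in its proof, $b_1$ comes from writing the Chernoff bound $\P[S_N - NC < -N\Delta]\le b_0\exp\{-b_1 N\Delta\}$, and for bounded i.i.d.\ summands the Cram\'er rate at deviation $\Delta$ is $\Theta(\Delta^2)$, so necessarily $b_1=\Theta(\Delta)$ as $\Delta\to 0$. Hence $b_1\Delta/C_\Delta=\Theta(\Delta^2)=\Theta\bigl((\log\ell)^2/\ell^2\bigr)$ and
\[
M^{b_1\Delta/C_\Delta}=\exp\bigl\{\Theta((\log\ell)^2/\ell)\bigr\}\to 1,
\]
not $\ell^{\Theta(1)}$ as you claim; the second summand in \eqref{eqn:ellExpansion2} is then $\Theta(\ell)$, not $O(1)$. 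Your ``choose $c_1$ large depending on $b_1$'' is circular because $b_1$ depends on $c_1$ through $\Delta$. The easy fix is to take $\Delta$ (equivalently $\delta$) to be a \emph{fixed} positive constant: then all $b_j$ are genuinely constant, both residual terms in \eqref{eqn:ellExpansion2} decay exponentially in $\ell$, and $N=(1+\delta)\ell$ with constant $\delta$ still satisfies the stated lower bound on $\delta$ for large $\ell$ and witnesses $N=\ell+\Omega(\log\ell)$. (The paper's own Chernoff step is similarly informal at the exact threshold for $\delta$, but by working from \eqref{eqn:EllBoudnAtN} its final display $e^{-b_2'\ell}$ makes clear that a fixed positive exponent is being used.)
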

The proof is provided in Appendix~\ref{sec:AppendixVLFT}.

To conclude this discussion of the penalty associated with finite blocklength, we comment that $N$ only needs to be scaled properly, i.e., $N = \ell + \Omega(\log\ell)$, to obtain the infinite-blocklength expansion of $M^*_{t}(\ell, \infty)$ provided in \cite{PolyIT11}. Therefore the restriction to a finite blocklength $N$ does not restrict the asymptotic performance if $N$ is selected properly with respect to $\ell$.  The constant penalty terms in the expansion are different for infinite and finite $N$, which might not be negligible in the short-blocklength regime. Still, our numerical results in Sec.~\ref{sec:NumResults} indicate that relatively small values of $N$ can yield good results for short blocklengths.

\subsection{Limited, Regularly-Spaced, Decoding Attempts}
\label{sec:VLFT_TimeLimit1} 
This subsection investigates $(\ell, M, N, I, \epsilon)$ VLFT codes with $N=\infty$ but decoding attempted only at specified, regularly-spaced symbols ($I>1$).  The first decoding time occurs after $n_1$ symbols (which could be larger than $I$) so that the decoding attempts are made at the times $n_j = n_1 + (j - 1)I$.  The relevant information density process $i(X^{n_j}; Y^{n_j})$ is on the subsequence $n_j = n_1 + (j - 1)I$.  The main result here is that the constant penalty now scales linearly with $I$:
\begin{theorem}
\label{thm:VLFT_TimeLimit1}
For an $(\ell, M, N, I, 0)$ VLFT code with uniform increments $I$ and $N = \infty$ we have the following expansion for a stationary DMC with capacity $C$:
\begin{align}
\log M^*_t(\ell, \infty, I) \geq \ell C - O(I)\,.
\end{align}
The proof is provided in Appendix~\ref{sec:AppendixVLFT}.
\end{theorem}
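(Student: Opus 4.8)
The plan is to establish the bound by achievability. First I would build an $(\ell,M,\infty,I,0)$ VLFT code by the natural modification of the fixed-to-variable construction behind Thm.~\ref{thm:RC_Achieve}: draw $M$ codewords i.i.d.\ from the infinite product $\prod_{j=1}^\infty P_X$ of the capacity-achieving input distribution; let the encoder ignore the feedback and transmit the $W$-th codeword symbol by symbol; at each decoding time $n_j=n_1+(j-1)I$ (with $n_1\leq I$) let the receiver form the maximum-information-density estimate $\hat W_{n_j}=\arg\max_w i(w;Y^{n_j})$; and let the transmitter send the termination symbol at the first $n_j$ with $\hat W_{n_j}=W$. This has zero error. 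Since $\tau$ is supported on $\{n_1+kI:k\geq0\}$ we have $\E[\tau]=n_1+I\sum_{k\geq0}\P[\tau>n_1+kI]$, and $\P[\tau>n_1+kI]\leq\P[\hat W_{n_1+kI}\neq W]$; averaging over codebooks, the right side is at most $\xi_{n_1+kI}$ by exactly the RCU calculation in the proof of Thm.~\ref{thm:RC_Achieve}. Hence some codebook yields $\E[\tau]\leq n_1+I\sum_{k\geq0}\xi_{n_1+kI}$, and it remains only to show $I\sum_{k\geq0}\xi_{n_1+kI}\leq\frac{\log M}{C}+O(I)$.

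For this estimate I would invoke \eqref{eqn:VLFT_DT}, $\xi_n\leq\E\!\left[\min\big(1,e^{-(S_n-a)}\big)\right]$ with $a=\log(M-1)$ and $S_n=i(X^n;Y^n)=\sum_{j=1}^nZ_j$ a random walk whose increments are i.i.d., have mean $C$, and are bounded by $b$ in absolute value (essential boundedness of the information density). The quantity to control is therefore $\E\!\left[\sum_{k\geq0}\min\big(1,e^{-(S_{n_1+kI}-a)}\big)\right]$, the expectation of a functional of a positive-drift walk sampled on a sparse grid. The device is a first-passage decomposition at the \emph{inflated} level $a+bI$: let $\nu'=\inf\{n:S_n>a+bI\}$. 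Every grid point strictly before $\nu'$ contributes at most $1$, and there are at most $\nu'/I+1$ of them; by Wald's identity and the fact that the overshoot of $\nu'$ is at most $b$, $\E[\nu']\leq(a+bI+b)/C$. Let $m'$ be the first grid point with $m'\geq\nu'$; since the increments are bounded by $b$ and $m'-\nu'<I$ one has $S_{m'}>(a+bI)-bI=a$, i.e.\ $Q':=S_{m'}-a>0$. By the strong Markov property at the stopping time $m'$, the contribution of the grid points $m',m'+I,\dots$ is at most $\sum_{j\geq0}\min(1,e^{-\widetilde S_{jI}})$ for a fresh walk $\widetilde S$ independent of $\mc F_{m'}$ (using $Q'>0$), whose expectation is $O(1)$, uniformly in $I\geq1$: indeed $\sum_j\P[\widetilde S_{jI}\leq0]\leq\sum_je^{-jI\Lambda^*(0)}$ with $\Lambda^*(0)>0$, while $\sum_j\E[e^{-\widetilde S_{jI}};\widetilde S_{jI}>0]\leq\sum_j\E[e^{-\theta\widetilde S_{jI}}]=\sum_je^{jI\Lambda(-\theta)}$ for any fixed $\theta\in(0,1)$, and $\Lambda(-\theta)<0$ since $\E[e^{-Z}]=1$ and $\Lambda$ is convex with $\Lambda(0)=\Lambda(-1)=0$. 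Taking expectations and multiplying by $I$ gives $I\sum_{k\geq0}\xi_{n_1+kI}\leq\E[\nu']+O(I)\leq\frac{\log M}{C}+O(I)$. Combining with the construction, there is an $(\ell,M,\infty,I,0)$ code with $\ell\leq\frac{\log M}{C}+O(I)$, i.e.\ $\log M^*_t(\ell,\infty,I)\geq C\ell-O(I)$.

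The sparse-sum estimate is the step I expect to be the real obstacle. Two tempting shortcuts both fail. Comparing $I\sum_k\xi_{n_1+kI}$ termwise to $\sum_n\xi_n$ would need $\xi_n$ to be eventually non-increasing, which is false for general random-walk functionals; and a first-passage decomposition at the level $a$ itself does not suffice, because the walk value at the first grid point after the upcrossing of $a$ can have fallen by as much as $bI$, which injects a factor $e^{\Theta(I)}$ and ruins the $O(I)$ bound. Inflating the crossing level to $a+bI$ is what rescues the argument: at an additive cost of only $bI/C=O(I)$ in $\E[\nu']$, it pre-pays for the worst possible excursion over the at most $I$ steps separating $\nu'$ from the next grid point, so the walk restarts above $a$ and the residual, geometrically summable contribution is $O(1)$. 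Everything else — the RCU calculation for the grid-sampled decoder, the Chernoff bound giving $\Lambda^*(0)>0$, Wald's identity, and the accounting of $O(1)$ versus $O(I)$ channel constants — is routine.
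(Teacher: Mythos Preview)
Your proof is correct and follows the same overall route as the paper: the random-coding FV construction, the bound $\E[\tau]\le n_1+I\sum_{j\ge1}\xi_{n_j}$, the weakening \eqref{eqn:VLFT_DT}, and a first-passage decomposition of the walk $S_n=i(X^n;Y^n)$ followed by Chernoff-type tail estimates showing the post-passage sum is $O(1)$ uniformly in $I$. The one substantive technical difference is where the auxiliary stopping time lives. You place $\nu'$ on the \emph{fine} time axis at the inflated level $a+bI$ and then round up to the next grid point, the inflation being your device to guarantee $S_{m'}>a$ after rounding. The paper instead defines the stopping time directly on the \emph{grid}: $\tilde j=\inf\{j:S_{n_j}\ge\log M\}$, so that $S_{n_{\tilde j}}\ge\log M$ is automatic and the grid-scale overshoot is at most $bI$ (a single grid step of $I$ bounded increments), which Doob's optional stopping absorbs into $\E[\tilde\tau]\le(\log M)/C+O(I)$. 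Thus the failure mode you carefully diagnosed---``first-passage at level $a$ on the fine scale does not suffice because the walk can drop by $bI$ before the next grid point''---is real, but the simpler cure is to sample the first passage on the grid rather than to inflate the level. After this point the two arguments coincide: both bound $\sum_k\E\big[e^{-[S_{kI}]^+}\big]$ by an $O(1)$ constant via the tilting identity $\E[e^{-i(X;Y)}]=1$ (your $\Lambda(-1)=0$) and a geometric series.
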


In view of Thm. \ref{thm:VLFT_TimeLimit1}, the penalty is linear in the increment $I$.  The increment $I$ can grow slowly, e.g., $I = O(\log \ell)$ and still permit an expected rate that approaches $C$ without the dispersion penalty incurred when feedback is absence \cite{PolyIT10}. In the non-asymptotic regime, however, the increment $I$ must be carefully controlled to keep the penalty small.   Our numerical results in Sec.~\ref{sec:NumResults} indicate that choosing $I = \lceil\log_2 \log_2 M\rceil$ yields good results for short blocklengths.  Also, varying $I_j$ to decrease with $j$ can avoid a substantial penally while keeping the penalty small.

\subsection{Finite Blocklength and Limited Decoding Attempts}
\label{sec:VLFT_TimeLimit2} 
This subsection investigates $(\ell, M, N, I, 0)$ repeated VLFT codes with {\em both} finite $N$ and $I>1$.   When these two limitations  are combined, a key parameter is $m$, the number of decoding attempts before the transmission process must start from scratch if successful decoding has not yet been achieved. The parameters $m$, $n_1$, $N$ and $I$ are related by the following equation:
\begin{equation}
N=  n_1 + (m-1)I
\end{equation}

Combining the results of Sec.~\ref{sec:VLFT_FiniteLength} and Sec.~\ref{sec:VLFT_TimeLimit1} yields the following expansion for  $(\ell, M, N, I, 0)$ repeated VLFT codes on a stationary DMC with capacity $C$:
\begin{theorem}
\label{thm:MainAsympResult2}
For an $(\ell, M, N, I, 0)$ repeated VLFT  with $N = \ell + \Omega(\log\ell)$ on a stationary DMC with capacity $C$ and $\ell$ sufficiently large
\begin{align}
\label{eq:Thm2}
\log M^*_t(\ell, N, I) \geq \ell C - O(I)\, .
\end{align}
\end{theorem}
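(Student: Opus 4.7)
The plan is to combine the bounding techniques from the proof of Thm.~\ref{thm:VLFTExpandFinite} (finite blocklength) and Thm.~\ref{thm:VLFT_TimeLimit1} (spaced decoding attempts) into a single chain of inequalities for a repeated VLFT code with both limitations active. First I would establish the analog of \eqref{eqn:FiniteFV} for the spaced-decoding case: using \eqref{eqn:SumsOfPtau}--\eqref{eqn:MarginalZeta} and the fact that $\tau$ can only take the values $n_j = n_1 + (j-1)I$ for $j = 1, \ldots, m$, the expected latency of one ``round'' of the repeated code is bounded by $n_1 + I\sum_{j=1}^{m-1}\xi_{n_j}$, and the geometric-series argument of Thm.~\ref{thm:FiniteFV} then gives
\begin{align}
\ell \;\leq\; \frac{1}{1 - \xi_{N}}\left( n_1 + I\sum_{j=1}^{m-1}\xi_{n_j}\right).
\end{align}

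Second, I would control the inner sum by invoking the DT-style bound \eqref{eqn:VLFT_DT} and splitting at the critical index $j^\star$ for which $n_{j^\star}$ first exceeds $(\log M)/C_\Delta$ for some small $\Delta > 0$, exactly as in the proof of Thm.~\ref{thm:VLFTExpandFinite}. The head of the sum contributes at most $j^\star \leq (\log M)/(IC_\Delta) + 1$ terms, each bounded by $1$, yielding a contribution of $(\log M)/C_\Delta + I$. The tail benefits from the exponential decay of $\xi_{n_j}$ in $n_j$; since consecutive terms are spaced by $I$ rather than $1$, the resulting geometric sum is still $O(1)$ but with a constant multiplied by $I$, producing the $O(I)$ correction in the same spirit as the proof of Thm.~\ref{thm:VLFT_TimeLimit1}. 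Choosing $N = \ell + \Omega(\log\ell)$ ensures $\xi_N$ decays polynomially in $\ell$, so $1/(1-\xi_N) = 1 + o(1)$ and it merely inflates the $O(I)$ penalty by a $1+o(1)$ factor.

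Third, solving the resulting inequality $\ell \leq (\log M)/C + O(I)$ for $\log M$ and specializing to the largest feasible $M$ gives $\log M^*_t(\ell,N,I) \geq \ell C - O(I)$, proving \eqref{eq:Thm2}. The main obstacle is the bookkeeping required to reconcile the two penalty mechanisms simultaneously: the choice $N = \log M/C_\Delta$ used in Thm.~\ref{thm:VLFTExpandFinite} must be reconciled with the divisibility constraint $n_1 + kI \mid N$ from Definition~\ref{def:PracticalVLFT}, which is handled by rounding $N$ up to the next valid decoding time and absorbing the $O(I)$ perturbation into the existing $O(I)$ term. A secondary subtlety is to verify that the head/tail split index $j^\star$ remains inside the range $\{1,\ldots,m-1\}$ under the scaling $N = \ell + \Omega(\log\ell)$, which follows since $\ell \approx (\log M)/C$ and $\log\ell$ dominates the $O(I)$ gap between $\ell$ and $n_{j^\star}$ as long as $I = o(\log\ell)$; for larger $I$ the inequality is trivially true because the claimed bound $\ell C - O(I)$ is vacuous.
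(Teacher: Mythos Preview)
Your overall architecture (derive an $\ell$-bound of the form $\ell\le(1-\xi_N)^{-1}\big(n_1+I\sum_j\xi_{n_j}\big)$, control the sum, then invert) matches the paper's Lemma~\ref{lem:MainAsympResult1}, but the way you propose to control $\sum_j\xi_{n_j}$ does not deliver the constant you need.  A deterministic split at the index $j^\star$ where $n_{j^\star}$ first exceeds $(\log M)/C_\Delta$ gives, as you say, a head contribution of $(\log M)/C_\Delta+I$ after multiplication by $I$.  But $(\log M)/C_\Delta$ differs from $(\log M)/C$ by $\frac{\Delta}{C\,C_\Delta}\log M$, which is $\Theta(\ell\Delta)$ and hence \emph{not} $O(I)$ for any fixed $\Delta>0$.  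Letting $\Delta\to 0$ does not help, because the constant hidden in your tail bound blows up as $\Delta\to 0$ (the exponential decay rate of $\xi_{n_j}$ for $n_j>(\log M)/C_\Delta$ is governed by a Chernoff exponent that vanishes as $\Delta\to 0$).  So the jump from ``head $=(\log M)/C_\Delta$'' to ``$\ell\le(\log M)/C+O(I)$'' in your step three is unjustified.

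Incidentally, this is not what the proof of Thm.~\ref{thm:VLFTExpandFinite} actually does: there (and in the paper's proof of the present result via Lemma~\ref{lem:MainAsympResult1}) the split is at the \emph{random} index $\tilde{j}=\inf\{j:S_{n_j}\ge\log M\}\wedge m$, and the head is estimated by Doob's optional stopping theorem, giving $\E[\tilde\tau]\le(\log M)/C+O(I)$ directly (the $O(I)$ is the bounded overshoot of the random walk with step size $I$).  The tail after $\tilde j$ is then handled by the strong Markov property, which restarts $S_{n_k}$ from above the threshold, so $\sum_k\E[\exp\{-[S_{n_k}]^+\}]$ converges with a rate independent of $\Delta$.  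The parameter $C_\Delta$ in those proofs is used only to size $N$ (so that $\P[\tilde j=m]$ and $\xi_N$ are small), never to locate the split in the sum.  If you replace your deterministic $j^\star$ by this random stopping index and invoke optional stopping for the head, the rest of your outline goes through.
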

Specifically, if we choose $N > \ell + \frac{\log(\ell+1) + \log e}{C}$ and have decoding attempts separated by an increment $I$, then the expansion is the same as the case with $I>0$ and  $N = \infty$ so that the penalty term is linear in $I$. 
\begin{proof}
The proof is based on the following lemma that provides an upper bound on $\ell$ for an $(\ell, M, N, I, 0)$ repeated VLFT codes :
\begin{lemma}
\label{lem:MainAsympResult1}
For an $(\ell, M, N, I, 0)$ repeated VLFT code with $N = \Omega(\log M)$, we have the following upper bound on $\ell$ for a stationary DMC with capacity $C$:
\begin{align}
\ell&\leq (1+\P[\zeta_N])^{-1}\frac{\log M}{C} +  \P[\tau_0 \geq m] + O(I)
\\
&\leq \frac{\log M}{C} + O(I) \,,
\end{align}
where $\tau_0$ is the stopping time in terms of the number of decoding attempts up to and including the first success. 
\end{lemma}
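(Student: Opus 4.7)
The plan is to combine the two techniques already developed in this section. Theorem~\ref{thm:FiniteFV} supplied the renewal argument that handles finite $N$ by dividing the one-round expected latency by the per-round success probability, while the proof of Theorem~\ref{thm:VLFT_TimeLimit1} supplied a summation-by-parts identity adapted to the sub-lattice $\{n_1, n_1{+}I,\dots,N\}$ of allowed stopping instants. Finally, the threshold argument used for Theorem~\ref{thm:VLFTExpandFinite} will extract the leading $\log M/C$ term.

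First I would write the repeated-code latency as a renewal ratio,
\begin{equation*}
\ell \;\leq\; \frac{\E[\tau\wedge N]}{1-\P[\tau_0\geq m]},
\end{equation*}
valid because, upon failure in a round of length $N$, the process restarts with all prior symbols discarded. Then I would decompose the one-round truncated latency over the stopping sub-lattice as $\E[\tau\wedge N]=n_1 + I\sum_{j=1}^{m-1}\P[\tau>n_j] + N\,\P[\tau_0\geq m]$, bounding the joint error probabilities $\P[\tau>n_j]=\P[E_{n_j}]$ by the marginal RCU bounds $\xi_{n_j}$ via \eqref{eqn:MarginalZeta}--\eqref{eqn:Xi_n}. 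The factor $I$ in front of the sum captures the penalty from being able to stop only every $I$ symbols.

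Second, I would invoke the threshold bound \eqref{eqn:VLFT_DT} at $\gamma=\log(M-1)$. Since $i(X^n;Y^n)$ is a sum of $n$ i.i.d. (essentially bounded) random variables with mean $C$, its concentration around $nC$ lets me split $\sum_j \xi_{n_j}$ into (i) sub-lattice indices with $n_j \lesssim \log M / C$, whose contribution is bounded by $\log M / C$, and (ii) the tail indices, whose $\xi_{n_j}$ decay geometrically via \eqref{eqn:VLFT_DT} and therefore sum to a constant. Multiplying by $I$ produces the $O(I)$ residual, yielding
\begin{equation*}
\E[\tau\wedge N]\;\leq\; \frac{\log M}{C} + O(I) + N\,\P[\tau_0\geq m].
\end{equation*}

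Finally, dividing by $1-\P[\tau_0\geq m]$ and using $N=\Omega(\log M)$, the term $N\P[\tau_0\geq m]$ is $o(1)$ because $\P[\tau_0\geq m]$ decays at least polynomially in $M$ once $N$ is a sufficient multiple of $\log M / C$. Writing $1/(1-x) = 1 + x/(1-x)$ absorbs the denominator correction into the additive $\P[\tau_0\geq m]$ term and gives the first displayed inequality. The cruder second inequality then follows from $(1+\P[\zeta_N])^{-1}\leq 1$ and $\P[\tau_0\geq m]\leq 1$. The main obstacle will be the bookkeeping needed to ensure that the $O(I)$ residual stays linear in $I$ and is not inflated to $I\log M$: this hinges on the exponential decay of $\xi_n$ past $n\approx\log M/C$, established via the same large-deviation concentration of the information density used in the proof of Theorem~\ref{thm:VLFTExpandFinite}, so that the sub-lattice sum of the $\xi_{n_j}$ tail converges to a constant independent of $M$.
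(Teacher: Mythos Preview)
Your proposal is correct and follows essentially the same route as the paper: the renewal identity from Theorem~\ref{thm:FiniteFV}, the sub-lattice summation from Theorem~\ref{thm:VLFT_TimeLimit1}, and the threshold argument from Theorem~\ref{thm:VLFTExpandFinite} are combined exactly as in the paper's display \eqref{eqn:6.1}--\eqref{eqn:6.4}, after which the choice $N\geq \log M/C_\Delta$ finishes as before. Two minor bookkeeping points: your lattice identity for $\E[\tau\wedge N]$ should not carry the extra $N\P[\tau_0\geq m]$ term (summation by parts on $\{n_1,\dots,n_m\}$ already absorbs the failure event into the $j=m{-}1$ summand), and the paper makes the ``geometric tail'' step rigorous via the auxiliary stopping index $\tilde{j}=\inf\{j:S_{n_j}\geq\log M\}\wedge m$ together with the strong Markov shift $[S_{n_{\tilde{j}+k}}-\log M]^+\geq[S_{n_k}]^+$, rather than asserting exponential decay of $\xi_{n_j}$ directly.
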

The proof of the lemma is similar to Thm.~\ref{thm:VLFTExpandFinite} and the details are in Appendix~\ref{sec:AppendixVLFT}. 

The proof of Thm.~\ref{thm:MainAsympResult2} now follows. For an $(\ell, M, N, I, 0)$ repeated VLFT code, pick $N$ as follows:
\begin{align}
N = (1+\delta)\ell, \text{ where } \delta > \frac{\log(\ell+1)+\log e}{\ell C}\,.
\end{align}
The result follows by a similar argument as for Cor.~\ref{cor:ScalingForEll} and by observing that for the condition of $\delta$, $(1+\delta)\ell = \ell + \Omega(\log\ell)$. The restriction on the initial blocklength $n_1$ only makes a constant difference. 
\end{proof}

Our earlier work \cite{Chen_2011_ICC} used an intuitive choice of the finite blocklength: $N = (1+\delta)\ell$ for a fixed $\delta > 0$, i.e., a blocklength that is larger than the target expected latency by the fraction $\delta$. Theorem \ref{thm:MainAsympResult2} validates such choice in terms of the optimality of the asymptotic expansion (since $\delta\ell \in \Omega(\log\ell)$), and indicates that the required overhead $\delta$ is decreasing as the target expected latency grows.


\begin{figure}[htp]
	 \centering
    \includegraphics[width=0.5\textwidth]{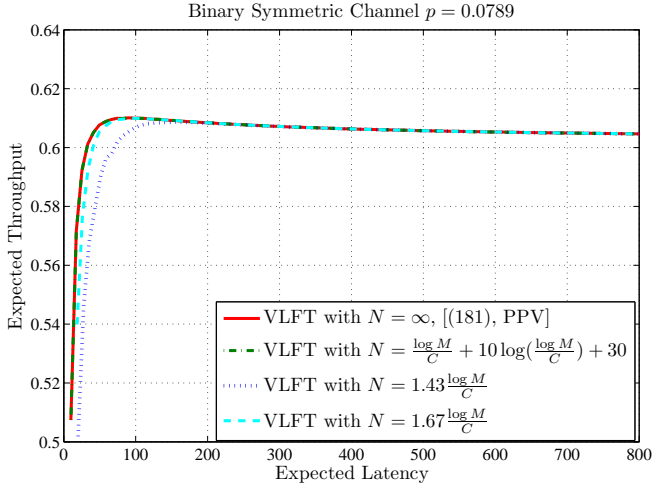}
    \caption{Performance comparison of VLFT code achievability based on the RCU bound with different codebook blocklengths.}
    \label{fig:VLFT_FiniteBlock}
\end{figure}

\subsection{Numerical Results}
\label{sec:NumResults}
For practical applications that apply feedback to obtain reduced latency, non-asymptotic behavior is critical.  This section gives numerical examples of non-asymptotic results for a BSC. For a BSC with transition probability $p$ we use the RCU bound in \cite{PolyIT10,PolyIT11}\footnote{We replace $(M-1)$ by $M$ for simplicity.}, which gives the following expression:
\begin{align*}
\xi_n \leq \sum_{t = 0}^{n} {n\choose t}p^t (1-p)^{n-t}\min\left\{1, M\sum_{j=0}^{t}{n\choose j} 2^{-n}\right\} \, .
\end{align*}

Fig.~\ref{fig:VLFT_FiniteBlock} shows expected throughput ($R_t$) vs. expected latency ($\ell$) performance of a VLFT code with $N=\infty$ and three finite-$N$ repeated VLFT codes  over a BSC with $p = 0.0789$.
Since $\ell$ scales linearly with $\frac{\log M}{C}$, for one repeated VLFT code $N$ scales as:
\begin{align}
N = \frac{\log M}{C} + a \log\left(\frac{\log M}{C}\right) + b\,,
\end{align}
so that $N = \ell + \Omega(\log\ell)$.  The constants $a, b > 0$ were selected experimentally to be  $a = 10$, $b = 30$. 

For the other two repeated VLFT codes, $N = \log M / C_\Delta$ where $C_\Delta = C - \Delta $.  We chose $\Delta = 0.3C$ and $0.4C$, which are $43\%$ and $67\%$ longer, respectively, than the blocklength $N = \log M / C$ that corresponds to capacity. In other words, $N = 1.43\log M/C$ and $N = 1.67\log M/C$ respectively. 

Expected throughput for the finite-$N$ repeated VLFT codes converges to that of VLFT with $N = \infty$ before expected latency has reached $200$ symbols. For expected latency above $75$ symbols, the only repeated VLFT code with visibly different throughput than the $N=\infty$ VLFT code is the $\Delta=0.3C$ code.

As mentioned in Sec.~\ref{sec:Prelim}, VLFT codes can have expected throughput higher than the original BSC capacity of $0.6017$ because of the NTC. This effect vanishes as expected latency increases.

\begin{figure}[htp]
	 \centering
    \includegraphics[width=0.5\textwidth]{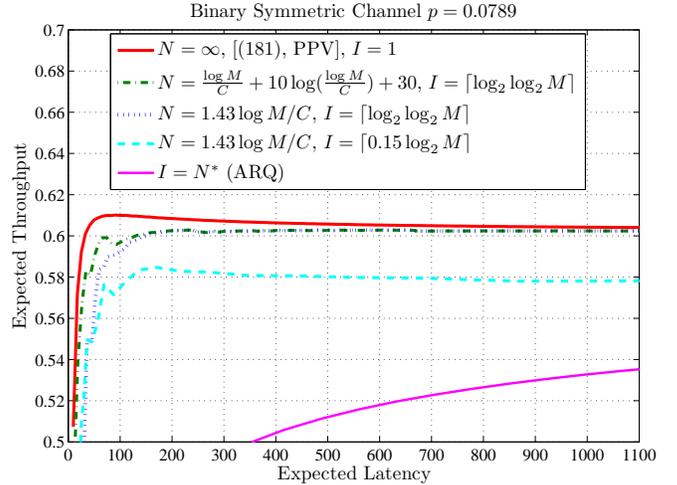}
    \caption{Performance comparison of VLFT code achievability based on the RCU bound with uniform increment and finite-length limitations.}
    \label{fig:VLFT_UnifInc}
\end{figure}

Fig.~\ref{fig:VLFT_UnifInc} shows the $R_t$ vs. $\ell$ performance of a VLFT code with $N=\infty$ and $I=1$ and repeated VLFT codes with various decoding-time increments $I$.  As in \eqref{eq:Thm2}, when $I$ grows linearly with $\log M$ (i.e., $\lceil 0.15\log_2 M\rceil$)  then there is a constant gap from the $I=1$ case.  However, if $I$ grows as $\lceil\log_2 \log_2 M\rceil$ then the gap from the $I=1$ case decreases as expected latency increases. ARQ performance (in which $I=N^*$, where $N^*$ is the  optimal blocklength for $M$) is also shown in the figure, which reveals a considerable performance gap from even the most constrained repeated VLFT implementation in Fig.~\ref{fig:VLFT_UnifInc}.

\section{IR-NTC with Convolutional and Turbo Codes}
\label{sec:VLFTexamples}
The achievability proofs in \cite{PolyIT11} and \ref{sec:VLFT} are based on an IR-NTC scheme using {\em random} codebooks. This section provides examples of IR-NTC based on {\em practical} codebooks: rate-compatible families of turbo codes and tail-biting convolutional codes.   The constraints of finite $N$ and $I>1$ studied analytically in the previous section appear naturally in the context of these codes. 

\subsection{Implementation of a repeated IR-NTC with Practical Codes}
\label{sec:SchemeIR}
A practical way to implement repeated IR-NTC is by using a family of rate-compatible codes with incremental blocklengths $\{n_i\}_{i = 1}^m$ where $n_i = \sum_{j = 1}^i I_j$. 
Defining an $(n, M)$ code to be a collection of $M$ length-$n$ vectors taking values in $\mc{X}$, we define a family of rate-compatible codes as follows:
\begin{definition}
\label{def:RCcode}
Let $n_1 < n_2 < \dots < n_m$ be integers. A collection of codes $\{\mc{C}_j\}_{j = 1}^m$ is said to be a family of rate-compatible codes if each $\mc{C}_j$ is an $(n_j,M)$ code that is the result of puncturing a common mother code, and all the symbols in the higher-rate code $\{\mc{C}_j\}$ are also in the lower rate code $\{\mc{C}_{j+1}\}$.
\end{definition}

A family of rate-compatible codes can be constructed by finding a collection of compatible puncturing patterns \cite{Hagenauer_Rate_1988} satisfying Def.~\ref{def:RCcode} for an $(N, M)$ mother code. Note that the puncturing becomes straightforward if we reorder the symbols of the mother code so that the symbols of $\mc{C}_1$ are first, followed by the symbols of $\mc{C}_2$ and so on. From this perspective, the symbols transmitted by VLFT in \cite{PolyIT11} can be seen as an infinite family of rate-compatible codes resulting from such an ordered puncturing.  

When implemented using a family of rate-compatible codes $\{\mc{C}_j\}_{j = 1}^m$, repeated IR-NTC works as follows.  A codeword of $\mc{C}_1$ with blocklength $n_1=I_1$ is transmitted to convey one of the $M$ messages. The decoding result is fed back to the transmitter and an NTC is sent if the decoding is successful. Otherwise the transmitter will send $I_2$ coded symbols such that the $n_2 = I_1 + I_2$ symbols form the codeword in $\mc{C}_2$ representing the same message. The decoder attempts to decode with code $\mc{C}_2$ and feeds back the decoding result. If decoding is not successful after the $m$th transmission where $n_m = N$, the decoder discards all of the previously received symbols and the process begins again with the transmitter resending the $I_1$ initial coded symbols. This repetition process continues until the decoding is successful. 



In the special case where $m=1$  the repeated IR-NTC reduces to SCR-NTC, which we refer to as ARQ.
\subsection{Randomly Punctured Convolutional and Turbo Codes}
\label{sec:SimSetup}
The practical examples of repeated IR-NTC provided in this paper use tail-biting RCPC codes and RCPT codes. The details of the rate-compatible codes used in this paper are given as follows: 


The two convolutional codes we used in this paper are a $64$-state code and a $1024$-state code with generator polynomials $(g_1, g_2, g_3) = (133, 171, 165)$ and $(2325,2731,3747)$ in octal, respectively. The  $64$-state code is from the 3GPP-LTE \cite{3GPP} standard and the $1024$-state code is the optimal free distance code from \cite[Table 12.1b]{BookLin}. Both of the codes are implemented as tail-biting codes \cite{Ma_86} to avoid rate loss at short blocklengths.

The turbo code used in this paper is from the 3GPP-LTE standard \cite{3GPP}, i.e., the turbo code with generator polynomials $(g_1, g_2) = (13, 15)$ in octal, with a quadratic permutation polynomial interleaver.

Pseudo-random puncturing, also referred to as circular buffer rate matching in \cite{3GPP}, provides the rate-compatible families for both the convolutional and the turbo codes. The process is shown in Fig.~\ref{fig:CBRM}: the encoder first generates a rate-$1/3$ codeword. Then the output of each of the encoder's three bit streams passes through a ``sub-block'' interleaver with a blocklength $K$. The interleaved bits of each sub-block are concatenated in a buffer, and bits are transmitted sequentially from the buffer to produce the increments $I_j$ as shown in Fig.~\ref{fig:BitCollection}.   The sub-block interleavers re-order the bits of the mother code so that sequential transmission of the bits produces an effective family of rate-compatible codes as discussed in Sec.~\ref{sec:SchemeIR}.

\begin{figure}[t]
\centering
\def\svgwidth{0.4\textwidth}
   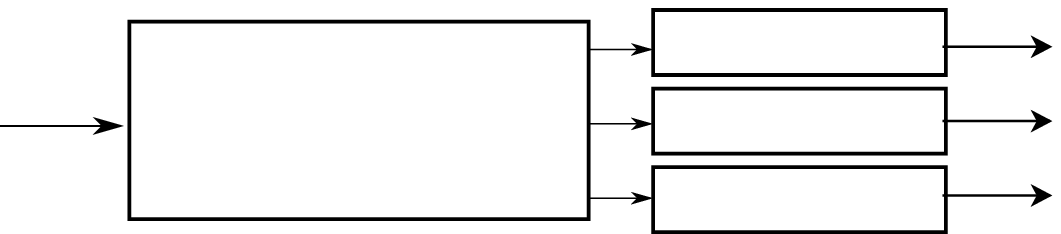
\caption{Pseudo-random puncturing (or circular buffer rate matching) of a convolutional code. At the bit selection block, a proper amount of coded bits are selected to match the desired code rate.}
\label{fig:CBRM}
\end{figure}
\begin{figure}[t]
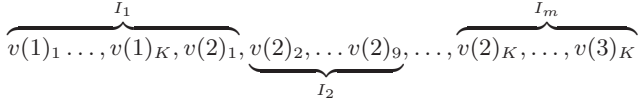

{\small
\begin{align*}
\overbrace{v(1)_1 \dots, v(1)_{K},v(2)_1}^{I_1}, \underbrace{v(2)_2,\dots v(2)_9}_{I_2}, \dots, \overbrace{v(2)_{K}, \dots, v(3)_{K}}^{I_m}
\end{align*}
}
\caption{Illustration of an example of transmitted blocks for rate-compatible punctured convolutional codes}
\label{fig:BitCollection}
\end{figure}
We will use simulation results of repeated IR-NTC systems based on these RCPC and RCPT codes to compare with our analysis in the following sections. 

\section{Rate-Compatible Sphere-Packing (RCSP)}
\label{sec:IR_RCU_RCSP}
The random-coding approach gives a tight achievable bound on expected latency when $M$ is sufficiently large. In the short-latency regime, however, practical codes can outperform random codes, as noted in \cite{Williamson_ISIT_2012}. To find a code-independent analysis that gives a better prediction of practical code performance, we introduce the rate-compatible sphere-packing (RCSP) approximation.  As we will see in Sec.~\ref{sec:Optimization}, RCSP can also facilitate the optimization of the increment lengths $I_j$ and provide a trajectory of target error rates for use in the design of a rate-compatible code family.

Shannon et al. \cite{Shannon_1967} derived lower bounds of channel codes for DMC by packing typical sets into the output space. The typical sets are related to the divergence of the channel and an auxiliary distribution on the output alphabet.  For the AWGN channel, Shannon \cite{Shannon_1959} showed both the lower bounds on the error probability by considering optimal codes on a sphere (the surface of the relevant ball). The bound turns out to be tight even in the finite-blocklength regime as indicated in \cite{PolyIT10}. One drawback for considering codes on a sphere is the computational difficulty involved, even for a single fixed-length code.

RCSP is an approximation of the performance of repeated IR-NTC using a family of rate-compatible codes. The idea of RCSP is an extension of the sphere-packing lower bound from a single fixed-length code to a family of rate-compatible codes. For the ideal family of rate-compatible codes, each code in the family would achieve perfect packing.  Our analysis will involve two types of packing: 1) perfect packing throughout the volume of the ball whose radius is determined from the signal and noise powers or 2) perfect packing on the surface of the ball whose radius is determined by the signal power constraint.  We will also consider both maximum-likelihood (ML) decoding and bounded-distance (BD) decoding.

Let $\{\mc{C}_j\}_{j = 1}^m$ be a family of rate-compatible codes. Let the marginal error event of the code $\mc{C}_j$ at blocklength $n_j$ be $\zeta_{n_j}$ and let the joint error probabilities $\P[E_{n_j}]$ be defined similar to \eqref{eqn:JointErrorEvent}: 
\begin{equation}
E_{n_j} = \cap_{i = 1}^j \zeta_{n_i}\,.
\end{equation}
The expected latency for a repeated IR-NTC can be computed as follows:
\begin{align}
 \label{eqn:EnLatency}
\ell &= \frac{I_1 + \sum_{j = 2}^{m} I_j\P[E_{n_{j-1}}]}{1 - \P[E_{n_m}]}.
\end{align}

Applying the ideal of RCSP throughout the volume of the ball to \eqref{eqn:EnLatency} leads to the joint RCSP approximation of the repeated IR-NTC performance.  Since $\P[\zeta_{n_j}] \ge\P[E_{n_j}]$, replacing $\P[E_{n_j}]$ with $\P[\zeta_{n_j}]$ produces a upper bound on expected latency as follows:
\begin{align}
\label{eqn:ZetaLatency}
\ell & \leq \frac{I_1 + \sum_{j = 2}^{m} I_j\P[\zeta_{n_{j-1}}]}{1 - \P[\zeta_{n_m}]}.
\end{align}
Since $\P[\zeta_{n_j}]$ is often a tight upper bound on $\P[E_{n_j}]$ (examples will be shown in Sec.~\ref{sec:ChernoffBounds}), applying the ideal of RCSP throughout the volume of the ball to \eqref{eqn:ZetaLatency} produces the marginal RCSP approximation of the expected latency in a repeated IR-NTC, which is very similar to the joint RCSP approximation and more easily computed.  

Performing ML decoding gives a lower bound on the expected latency for the repeated IR-NTC, but makes the error probabilities difficult to evaluate. We initially analyze using BD decoding to decode the ideal code family $\{\mc{C}_j\}_{j = 1}^m$.   Subsequently we refine the analysis by bounding ML decoding performance.  Thus we will be considering both the joint and marginal RCSP approximations and also considering both BD and ML decoding.

\subsection{Marginal RCSP Approximation for BSC}
\label{sec:RCSP_BSC}

For the BSC the optimal decoding regions are simply Hamming spheres.   RCSP upper bounds IR-NTC performance on the BSC by assuming that each code in the family of rate-compatible codes achieves the relevant Hamming bound.  

 For a BSC with transition probability $p$ the marginal error probability $\P[\zeta_{n_j}]$ is lower bounded as follows:\footnote{Formally, the sphere-packing lower bound for BSC follows from the fact that the tail of a binomial r.v. is convex.}    
\begin{equation}
\label{eqn:BSC_RCSP}
\P[\zeta_{n_j}] \geq \sum_{t = r_{j}+1}^{n_j} {n_j\choose t}p^t(1-p)^{n-t},
\end{equation}
where $r_{j}$ is chosen such that  
\begin{equation}
M\sum_{t=0}^{r_j - 1}{n_j \choose t} + \sum_{i = 1}^{M}A_{i} = 2^{n_j}
\end{equation}
and
\begin{equation}
0 < \sum_{j = 1}^{M} A_i < M{n_j \choose r_j}.
\end{equation}
We use \eqref{eqn:BSC_RCSP} to compute the marginal RCSP approximation for the BSC. Note that for the BSC and an $(n, M)$ code with $M$ uniform decoding regions that perfectly fill the space, ML decoding is also BD decoding with an uniform radius.

For the BSC with $p=0.0789$, Fig.~\ref{fig:BSC_Conv_Eg} shows the marginal RCSP approximation with $m = \infty$, the VLFT converse of \cite{PolyIT11}, the random-coding achievability of \cite{PolyIT11}, and simulations of repeated IR-NTC using the $64$-state convolutional code from Sec.~\ref{sec:SimSetup}. All simulation points have increment $I = 1$, finite codeword lengths $N=3k$, and initial blocklength $n_1 = k$ for $k = 16, 20, 32, 64$, respectively. To compare the choice of $N=3k$ with the results in Sec.~\ref{sec:VLFT}, take $k = 32$ as an example. For $k = 32$ the expected latency is $\ell = 49.62$ in Fig.~\ref{fig:BSC_Conv_Eg} and $N = 96$ corresponds to choosing a $\delta = 0.93$.  As we saw in Fig. \ref{fig:VLFT_FiniteBlock}, $\delta=0.67$ gives RCU performance indistinguishable from $N=\infty$ so that $N=96$ should be more than sufficient for this example.

The convolutional code simulations give throughput-latency points that are very close to the marginal RCSP approximation for expected latency less than $50$. The simulation points for $k = 16, 20, 32$ are significantly higher than the random-coding achievability result of \cite{PolyIT11}. The simulation point for $k = 64$ falls below the random-coding achievability because the expected latency is larger than the analytic trace-back depth of $60$ bits (or 20 trellis state transitions) for the $64$-state convolutional code.

Note that in Fig.~\ref{fig:BSC_Conv_Eg}, the convolutional code simulation points, the marginal RCSP approximation curve, the VLFT achievability, and the VLFT converse curves are all above the BSC capacity because the NTC assumption provides additional information to the receiver as discussed in Sec.~\ref{sec:Prelim}.
\begin{figure}[t]
\centering
   \includegraphics[width=0.5\textwidth]{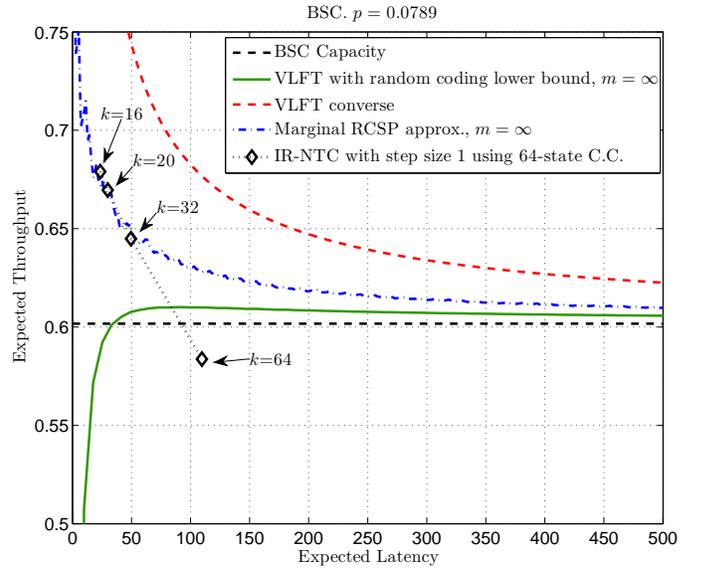}
\caption{In the short-latency regime, the marginal RCSP approximation can be more accurate for characterizing the performance of good codes (such as the convolutional code in this figure) than random-coding analysis. The additional information provided by the error-free termination symbol of NTC leads to a converse and an operational rate for the convolutional code that are above the original BSC capacity.}
\label{fig:BSC_Conv_Eg}
\end{figure}

\subsection{RCSP Approximation for AWGN Channel}
\label{sec:RCSP_AWGN}
This subsection derives joint and marginal RCSP approximations for the AWGN channel under both BD and ML decoding. Consider an AWGN channel $Y=X+Z$ with an average power constraint $P$.\footnote{We use the expectation average power constraint $\E\left(\sum_{j=1}^n X_j^2\right) \leq nP $, which is not as strict as the summation average power constraint $\sum_{j=1}^n X_j^2 \leq nP$ since it allows the codeword power be larger than $P$ with low probability. As shown in the Appendix~\ref{sec:PowerConstraintRCSP}, the sphere-packing property combined with the expectation average power constraint will satisfy the summation average power constraint if the rate is less than capacity.} 
Let the signal-to-noise ratio (SNR) be $\eta = P/\sigma^2$ where $P$ is the signal power and $\sigma^2$ is the noise power. Assume without loss of generality (w.l.o.g.) that each noise sample has a unit variance. The average power of a length-$n$ received word is $\E[ \|Y^n\|^2] \leq n(P + \sigma^2) = n(1 + \eta)$.  Sphere packing seeks a codebook that has $M$ equally separated codewords within the $n$-dimensional norm ball with radius $r_{\text{outer}} = \sqrt{n(1 + \eta)}$.

One can visualize a large outer sphere that contains $M = 2^{nR}$ decoding regions, $D_i, i = 1, \dots, M$, each with the same volume. Considering the spherical symmetry of i.i.d. Gaussian noise $Z$, our bounded-distance decoding region is an $n$-dimensional Euclidean ball centered around the message point. By conservation of volume, the largest radius of the decoding region has a volume satisfying
\begin{align*}
 \text{Vol}(D_i) &= K_n   r^n  
 \\
 &\leq 2^{-nR}\times{ \text{Vol}({\text{Outer sphere}})} 
 \\
 &= 2^{-nR}{K_n \left( {\sqrt {n(1 + \eta )} } \right)^n }
\end{align*}
where $K_n$ is the volume of the $n$-dimensional unit sphere. Solving for the radius of the decoding region yields
\begin{align}
\label{eq:r_inner}
r \leq  2^{-R}\sqrt {n(1 + \eta )}.
\end{align}

\subsubsection {RCSP Approximation for AWGN under BD decoding}
Using bounded-distance decoding on the AWGN channel gives the following decoding rule: the decoder selects message $j$ if the received word $Y^n \in D_i$ for a unique $i\in\{ 1, \dots, M\}$ and declares an error otherwise. Regardless of the transmitted codeword, the marginal error event for BD decoding is $\zeta_{n} = \{Z^{n}: \|Z^{n}\|^2 > r^2\}$ where $r$ is the decoding radius. The error probability for decoding region with radius $r$ is then simply given by the tail of a chi-square r.v. with $n$ degrees of freedom:
\begin{align}
\P[\zeta_{n}] &= 1-F_{\chi_n^2}(r^2)
\\
\label{eq:chi-square-tail}
&= G_{\chi_n^2}(r^2) 
\end{align}
where $F_{\chi_n^2}$ is the CDF of a chi-square distribution with $n$ degrees of freedom. 

Let $M = 2^k$ for an integer $k$. Assuming perfect sphere-packing (i.e., achieving \eqref{eq:r_inner} with equality) at each incremental code rate, the radius at incremental code rate $k/n_j$ is given as:
\begin{align}
\label{eqn:r_i}
r_j^2 = \frac{n_j(1+\eta)}{2^{2k/n_j}}\,.
\end{align}

Note that the probability of a decoding error in the $j$th transmission depends on previous error events. Conditioning on previous decoding errors $\zeta_{n_j}, j = 1\dots, m-1$ makes the error event  $\zeta_{n_m}$ more likely than the marginal distribution would suggest. 
Recall that $E_{n_m}$ denotes $\cap_{j \leq m} \zeta_{n_j}$; the joint probability $\P[E_{n_j}], 1\le j\le m$ is given as:
\begin{align}
	\label{eqn:Pzeta_i}
	\nonumber
	\P[E_{n_j}] = \int_{r_1^2}^{\infty} &\int_{r_2^2-t_1}^{\infty} \dots \int_{r_{j-1}^2- 
	\sum_{i=1}^{j-2}t_i}^{\infty}  G_{\chi_{I_{j}}^2} \left (r_j^2 - \sum_{i=1}^{j-1}t_i\right ) 
	\\
	 &dF_{\chi_{I_1}^2}(t_1) \dots dF_{\chi_{I_{j-1}}^2}(t_{j-1}) ,
\end{align}
where the increments $I_{j}$ are as defined in Sec. \ref{sec:SchemeIR} and $G_{\chi_{I_{j}}^2}$ is defined in \eqref{eq:chi-square-tail}.

Using  $\P[E_{n_j}]$ or $\P[\zeta_{n_j}]$ as derived above, we can compute the expected latency $\ell$ of the repeated IR-NTC by using the joint BD RCSP approximation as in \eqref{eqn:EnLatency} or the marginal BD RCSP approximation as in \eqref{eqn:ZetaLatency}. The expected throughput $R_t$ is given by $k/\ell$.

\begin{figure}[t]
\centering
   \includegraphics[width=0.5\textwidth]{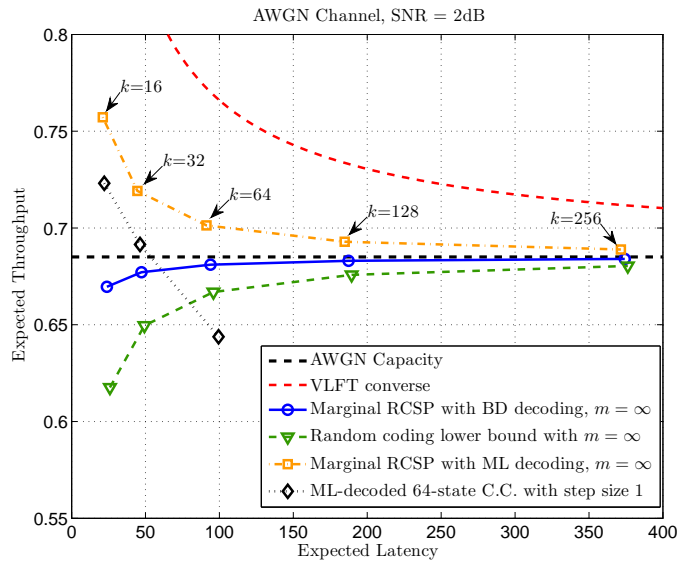}
\caption{The marginal RCSP approximation with BD decoding, and marginal RCSP with ML decoding, for the AWGN channel. Similar to the BSC, the additional information provided by the error-free termination symbol of NTC leads to a converse and an operational rate for the convolutional code that are above the original BSC capacity.}
\label{fig:AWGN_RCSP_VLFT_mInfty}
\end{figure}
\subsubsection{Marginal RCSP Approximation under ML decoding}

To compute the marginal RCSP approximation with ML-decoding, we apply Shannon's sphere-packing lower bound\footnote{We bound the error probability using Shannon's argument: the optimal error probability for 
codewords inside an $n$-dimensional ball with radius $\sqrt{nP}$ is lower bounded by the optimal error 
probability for codewords on an $n+1$-dimensional sphere surface with radius $\sqrt{(n+1)P}$.} \cite{Shannon_1967}. We use the asymptotic approximation in \cite{Shannon_1967} to estimate the lower bounds of the marginal error probabilities. The approximation formula in \cite{Shannon_1967} does not work well for rates above capacity, providing negative values that are trivial lower bounds. To obtain a good estimate in these cases we replace those trivial probabilities with the ones suggested by the marginal RCSP approximation using BD decoding. 

For the $2$dB AWGN channel, Fig. \ref{fig:AWGN_RCSP_VLFT_mInfty} shows the VLFT converse of  \cite{PolyIT11}, the random-coding lower bound of  \cite{PolyIT11}, the marginal RCSP with BD decoding, and marginal RCSP with ML decoding, all with $m = \infty$ and $I=1$. To approximate the behavior of $m=\infty$ we found numerically that $m = 10k$ is sufficient in this example. Also shown in Fig. \ref{fig:AWGN_RCSP_VLFT_mInfty} is the simulation of the repeated IR-NTC with step size $I=1$ and $k = 16,~32,$ and $64$ using the $64$-state convolutional code from Sec.~\ref{sec:SimSetup}. 

Similar to the BSC example, Fig.~\ref{fig:AWGN_RCSP_VLFT_mInfty} shows that the VLFT converse based on Fano's inequality \cite{PolyIT11} and the marginal RCSP approximation with ML decoding are both above the asymptotic capacity due to the information provided by NTC. The marginal RCSP approximation with BD decoding is not as accurate as the curve with ML decoding in the short-latency regime. The difference between the two curves, however, becomes small beyond the expected latency of $200$.

For $k = 16$ and $32$, the convolutional code simulations of repeated IR-NTC (even with a finite $N$ of $3k$) with binary modulation outperform the corresponding points on the curve for VLFT random-coding achievability  with $N = \infty$ and unconstrained input to the channel. The two simulation points generally follow the curve of the marginal RCSP approximation with ML decoding, but with a gap. We suspect that this gap is due to the use of binary modulation rather than an unconstrained input in the simulation. As with the BSC example, for $k = 64$ the performance of the simulation falls short as the expected latency goes beyond the analytic trace-back depth of the $64$-state convolutional code. 


\begin{figure}[t]
\centering
    \includegraphics[width=0.5\textwidth]{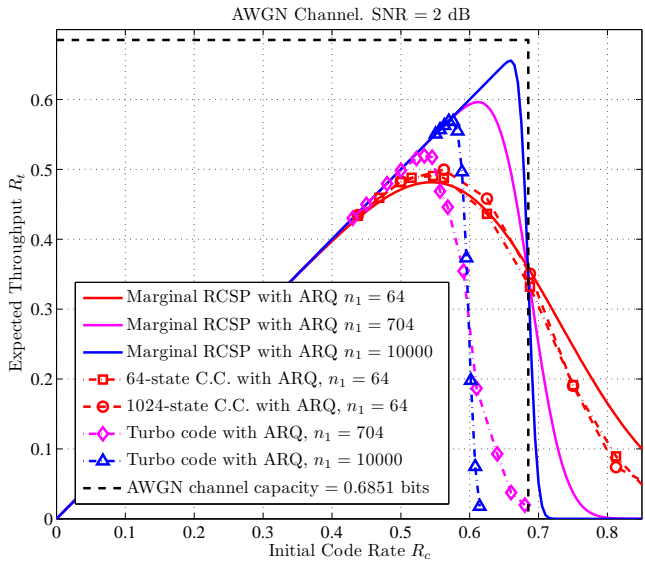}
    \caption{ARQ for RCSP, turbo code, $64$-state and $1024$-state convolutional codes simulations with different initial blocklengths.}
    \label{fig:Rc_Rt_CC_TC}
\end{figure}

\subsection{Marginal BD RCSP for ARQ and IR-NTC in AWGN}
\label{sec:MRCSP_Example}
This subsection provides examples of applying the marginal RCSP approximation using BD decoding for the AWGN channel. Note that all examples use BD decoding and we will simply use the term ``marginal RCSP approximation''. 
\subsubsection{ARQ}
\label{sec:SimpleARQ}
Consider ARQ (SCR-NTC) on an AWGN channel. Based on the marginal RCSP approximation the expected latency $\ell$ is given as
\begin{align}
	\ell &= \frac{n_1}{1-\P[\zeta_{n_1}]} 
	\\
	&= \frac{n_1}{F_{\chi_{n_1}^2}( r_1^2 )}\,,	
\end{align}
and expected throughput $R_t = k/\ell$ is given as
\begin{equation}
\label{eqn:Rt_ARQ}
	R_{t,\text{ARQ}} = R_c F_{\chi^2_{n_1}}(r_1^2)
\end{equation}
where $R_c = k/n_1$ is the initial code rate and $r_1$ is the decoding radius of the codeword with length $n_1$.  Note that in the case of ARQ (SCR-NTC), the joint RCSP and marginal RCSP approximations are identical since there is only a single transmission before repetition.

\begin{figure}[t]
\centering
    \includegraphics[width=0.5\textwidth]{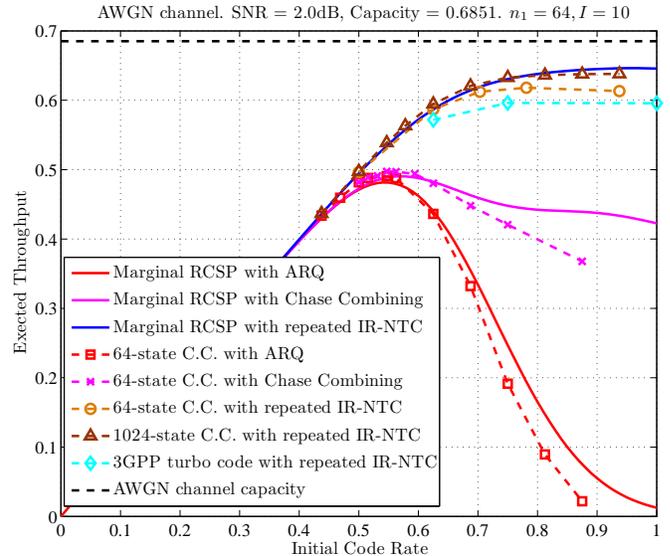}
    \caption{$R_t$ vs. $R_c$ for ARQ, ARQ with Chase combining and IR-NTC with $n_1 = 64$.}
    \label{fig:SARQ_Chase_IR}
\end{figure}

	Fig.~\ref{fig:Rc_Rt_CC_TC} shows the $R_t$ vs. $R_c$ (expected throughput vs. initial code rate) curve for the RCSP with ARQ. The SNR is $2$ dB and the initial blocklengths are $n_1=64$, $n_1=704$ and $n_1=10,000$. For each curve in Fig.~\ref{fig:Rc_Rt_CC_TC}, the initial blocklength $n_1$ is fixed as we vary the number of messages $M = 2^k$ and the initial code rate $R_c = k/n_1$ changes accordingly. Note that for ARQ, the initial blocklength is also the length of the possible subsequent transmissions. 
	
	For blocklengths $n_1=704$ and $n_1=10000$, Fig.~\ref{fig:Rc_Rt_CC_TC} compares RCSP with the 3GPP-LTE turbo codes.  Each point of the dash-dot turbo-code curves represents a different turbo code with the same blocklength but different code rate (different $k$). Interestingly, after the initial negligible-codeword-error region where expected throughput equals code rate, the RCSP curve and the turbo code curve are very similar for both $n_1=704$ and $n_1=10,000$.  Furthermore, in this region the difference between the code rate $R_c$ associated with a given throughput for RCSP and for the turbo code is about $0.1$ bits for both $n_1=704$ and $n_1=10,000$ despite the large difference between these two blocklengths.
	
	For blocklength $n_1=64$, Fig.~\ref{fig:Rc_Rt_CC_TC} compares the marginal RCSP approximation to the performance of $64$-state and $1024$-state tail-biting convolutional codes. The marginal RCSP approximation closely predicts the performance of these good tail-biting convolutional codes.  Thus, while the turbo codes achieve higher throughputs than the convolutional codes, the convolutional codes perform closer to the RCSP approximation for their (short) blocklengths than do the turbo codes relative to the RCSP approximation for their (longer) blocklengths.

RCSP with its optimistic decoding radius $r_1$ and suboptimal bounded-distance decoding is a mixture of optimistic and pessimistic assumptions.  These plots, however, show that RCSP can provide accurate guidance for good short-blocklength codes such as tail-biting convolutional codes.

\subsubsection{Chase Combining and IR-NTC}
\label{sec:FixedIR}
	In the ARQ scheme, the same complete codeword is transmitted at each retransmission. One way to utilize these repetitions is to apply the combining scheme proposed by Chase \cite{Chase_1985}. The Chase scheme uses maximal ratio combining of $L$ repeated codewords at the receiver. If the $L$ codewords are transmitted directly through the AWGN channel with the same SNR, the combining process increases the effective SNR at the receiver by a factor of $L$. 
	
The side information that the previous block was decoded unsuccessfully (available from the absence of the NTC) implies that the instantaneous noise power in the previous packet is larger than the expected noise power. The error probability is lower bounded (and the throughput is upper bounded) by ignoring this side information. 
		
As shown in Fig.~\ref{fig:SARQ_Chase_IR}, Chase combining of all received packets for a message does not significantly increase the highest possible throughput.  Note that the RCSP curve for Chase combining uses the throughput upper bound that ignores the side information of previous decoding failures. Chase combining does provide a substantial throughput improvement for higher-than-optimal initial code rates, but these improved throughputs could have also been achieved by using a lower initial code rate and no Chase combining.

We now present an RCSP approximation and code simulations of repeated IR-NTC as described in Sec.~\ref{sec:SchemeIR}. 
The exact computation of the joint RCSP approximation is challenging when $m$ is large. We use the marginal RCSP approximation here to compute the $R_t$ vs. $R_c$ curve for the repeated IR-NTC. 

%
%
%
%
%

Fig.~\ref{fig:SARQ_Chase_IR} shows an $R_t$ vs. $R_c$ curve computed based on the approximation of RCSP for the repeated IR-NTC scheme with $n_1=64$, $m=10$ and a uniform increment $I=10$. We use a uniform increment to avoid the need to optimize each increment although the computation also admits non-uniform increments. We choose $m = 10$ because experimental results show that for $m>10$ the throughput improvement is diminishing. Note that we include AWGN channel capacity in Fig.~\ref{fig:SARQ_Chase_IR} only as a point of reference since finite-latency capacity with NTC is higher than the asymptotic capacity as shown in Sec.~\ref{sec:RCSP_AWGN}.

Fig.~\ref{fig:SARQ_Chase_IR} also shows repeated IR-NTC simulations of two tail-biting convolutional codes and a turbo code, where the relevant codes are described in Sec.~\ref{sec:SimSetup}.   The simulated turbo code saturates at a lower throughput than the $64$-state and $1024$-state tail-biting convolutional codes, which are both ML decoded.  We expect that turbo codes with better performance at short latencies can be found.  Still, the convolutional code performance is outstanding in this short-latency regime.

Fig.~\ref{fig:SARQ_Chase_IR} shows that ARQ achieves a throughput of less than $0.5$ bits and Chase combining provides little improvement over ARQ, with less than $0.01$ bits increase in expected throughput.  In contrast, the $1024$-state convolutional code simulation of repeated IR-NTC achieves a throughput of $0.638$ bits. This coincides with the observation in \cite{PolyIT11} that IR is essential in achieving high throughput with feedback in the finite-blocklength regime, even though the SCR scheme proposed by Yamamoto and Itoh~\cite{Yamamoto_1979} achieves the optimal error-exponent shown by Burnashev~\cite{Burnashev_1976}. 

We conclude this section by comparing Fig.~\ref{fig:Rc_Rt_CC_TC} and Fig.~\ref{fig:SARQ_Chase_IR}. As shown in Fig.~\ref{fig:Rc_Rt_CC_TC}, the expected throughput of an ARQ using the blocklength-$10,000$ turbo code is $0.56$ bits with an expected latency of $10,016$ bits. Fig.~\ref{fig:SARQ_Chase_IR} shows that the repeated IR-NTC using the $1024$-state convolutional code achieves a higher expected throughput of $0.64$ bits with an expected latency of only $100$ bits!

\section{Optimization of Increments}
\label{sec:Optimization}
Sec.~\ref{sec:MRCSP_Example} studied throughput by fixing the initial blocklength $n_1$ and varying the number of information symbols $k$, which correspondingly varied the initial code rate $R_c$.  This produces curves of expected throughput $R_t$ vs. initial code rate $R_c$.  In contrast, this section fixes $k$ and studies throughput by optimizing the set of increments $\{I_j\}_{j = 1}^m$ for the repeated IR-NTC scheme as presented in Sec.~\ref{sec:SchemeIR}. This produces curves of expected throughput $R_t$ vs. expected latency $\ell$. The optimization uses BD decoding to compute the expected throughput for both joint and marginal RCSP approximation. For  the rest of the paper we will assume BD decoding unless otherwise stated.  However, numerical results for ML decoding are also presented and compared to BD decoding. 

We first consider exact computations of the joint RCSP approximation for relatively small values of $m$. We begin with the special case of $m=1$, which is ARQ. We then provide results for cases where $m>1$ and study how increasing $m$ improves performance.    We introduce Chernoff bounds on the joint RCSP approximation and compare these bounds with the marginal RCSP approximation.  We then use the marginal RCSP approximation to optimize the performance of repeated IR-NTC for large $m$ but constrained to have uniform increments after the initial transmission.  Then we optimize increments for non-repeating IR-NTC constraining both the maximum number of incremental transmissions and the probability of outage.  Finally, we introduce the concept of the decoding error trajectory, which provides the RCSP approximation of error probability at each incremental transmission.  It is a useful guide for rate-compatible code design for feedback systems.

\subsection{Choosing $I_1$ for the $m=1$ Case (ARQ)}
\label{sec:optARQ}

%

Recall that for repeated IR-NTC the special case of $m=1$ is ARQ. In this case, when the message is fixed to be $k$ bits, the RCSP approximation\footnote{Note that for the $m=1$ case there is no distinction between the joint RCSP approximation and the marginal RCSP approximation.} of expected throughput is a quasi-concave function of the code rate $R_c$ in \eqref{eqn:Rt_ARQ}.  Thus a unique optimal code rate $R_c^*$ for the repeated codewords can be found numerically \cite{Boyd_2004_CO} to maximize the RCSP approximation of $R_t$ for a given $k$.   

Table~\ref{tbl:ARQ} presents the optimal code rates $R_c^*=k/I_1^*$ and transmission lengths $n_1=I_1^*$ for ARQ.  Fig.~\ref{fig:latVthroughput} plots the maximum RCSP approximation of throughput vs. expected latency $\ell$ for ARQ as the red  curve with diamond markers.  Both Table~\ref{tbl:ARQ} and Fig.~\ref{fig:latVthroughput} apply the constraint that the lengths $I_1^*$ must be integers. These results for ARQ are discussed together with the results of $m>1$ in the next subsection.

\begin{table}[t]
\centering
  \caption{Optimized  transmission lengths $n_1=I_1^*$ and initial code rates $R_c^*$ for ideal-sphere-packing ARQ  with information lengths $k$ and SNR $\eta = 2$dB.} 
\begin{tabular}{ c | c  c  c  c  c }
  $k$ & $16$ & $32$ & $64$ & $128$ & $256$ \\
  \hline
  $n_1=I_1^*$ &  $31$ &  $60$ &  $116$ &  $222$ & $429$ \\
  \hline
  $R_c^*$ & $0.516$   & $0.533$ &  $0.552$   & $0.577$ & $0.597$ \\
  \end{tabular}
\label{tbl:ARQ}
\end{table}

\subsection{Choosing Increments $\{I_j\}$ to Maximize Throughput}
\label{sec:OptimizeNonUniformI}

In Sec.~\ref{sec:FixedIR} we demonstrated one repeated IR-NTC scheme with $m=10$ transmissions that could approach capacity with low latency based on RCSP.  Specifically, the transmission lengths were fixed to $I_1=64$ and $I_2, \ldots, I_{10}=10$, while $k$ was varied to maximize throughput. 

This subsection presents optimization results based on exact numerical integrations computing the joint RCSP approximation. Both $k$ and the number of transmissions $m$ are fixed, and a search identifies the set of transmission lengths $\{I_j\}_{j = 1}^m$ that maximizes the joint RCSP approximation of expected throughput using BD decoding. Based on the optimized increments, this subsection also provides the marginal RCSP approximation of the expected throughput using ML decoding.

For $m>1$, identifying the transmission lengths $I_{j}$ which minimize the latency $\ell$ is not straightforward due to the joint decoding error probabilities in \eqref{eqn:Pzeta_i}.  Restricting to a small $m$ allows exact computation of  \eqref{eqn:Pzeta_i} in Mathematica, avoiding the marginal approximation. We study the cases when $m\le 6$ based on numerical integration. 

The computational complexity of numerical integration for \eqref{eqn:Pzeta_i} increases with the transmission index $j$.  Because of this increasing complexity we limited attention to a well-chosen subset of possible transmission lengths.  Thus our results based on numerical integration may be considered as lower bounds to what is possible with a fully exhaustive optimization. 

\begin{table}[t]
\centering
 \caption{Optimized RCSP transmission lengths for $m=5$ and SNR$=2$ dB using non-uniform increments. } 
\begin{tabular}{ c | c  c  c  c  c | c }
  $k$ & $I_1$ & $I_3$ & $I_3$ & $I_4$ & $I_5$ & $R_t$ Opt. \\
  \hline
  16 &   19 & 4 &    4     & 4     & 8  & 0.6019\\
  \hline
  32 &   38 &    8 &    8  &   8 &   12 & 0.6208 \\
  \hline
    64   & 85 &   12   &  8  &  12   & 16 & 0.6363 \\
  \hline
   128  & 176  &  14  &  14   & 14  &  28 & 0.6494\\
  \hline
   256 &  352  &    24 &   24  &  24  &  48 &0.6593\\
\end{tabular}
 \label{table:m5_steps}
\end{table}
Table~\ref{table:m5_steps} shows the results of the $m=5$ optimization (i.e., the set of lengths $I_j$ found to achieve the highest throughput) and the corresponding throughput for the joint RCSP approximation. 
Table~\ref{table:m5_steps} also shows that for every value of $k$ the initial code rate $k/I_1$ is above the channel capacity of $0.6851$.  These high initial code rates indicate that feedback is allowing the decoder to capitalize on favorable noise realizations by attempting to decode and terminate early.

\begin{figure}[t]
\centering
    \includegraphics[width=0.5\textwidth]{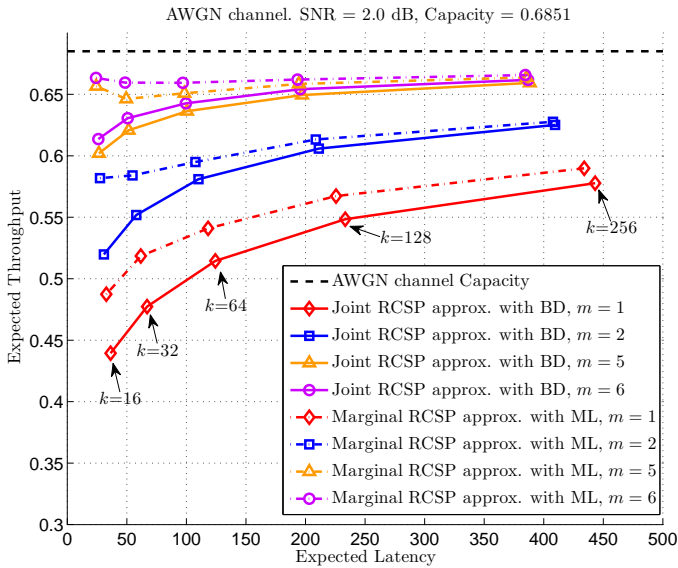}
    \caption{The $R_t$ vs. $\ell$ curves using the joint RCSP approximation with BD decoding and the marginal RCSP approximation with ML decoding for $m = 1, 2, 5, 6$. The transmission lengths $\{I_j\}$ are identified by joint RCSP approximation. }
    \label{fig:latVthroughput}
\end{figure}

Fig.~\ref{fig:latVthroughput} shows the optimized joint RCSP approximation of $R_t$ vs. $\ell$ for $m = 1, 2, 5, 6$ on an AWGN channel with SNR $2$ dB. As $m$ increases, each additional retransmission increases the expected throughput but the amount of that increase diminishes. The points on each curve in Fig.~\ref{fig:latVthroughput} represent values of $k$ ranging from $16$ to $256$ information bits.  Fig.~\ref{fig:latVthroughput} shows, for example, that by allowing up to four retransmissions ($m=5$) with $k=64$, the joint RCSP approximation has a throughput $R_t=0.636$ bits or $93$\% of the original AWGN capacity\footnote{As shown in Sec.~\ref{sec:IR_RCU_RCSP} the finite-latency capacity with NTC is higher than the asymptotic capacity. However, for small $m$ the capacity increase due to NTC is small, and so we include the AWGN capacity as a point of reference.} with an expected latency of $101$ symbols. Similar results are obtained for other SNRs.
 
Fig.~\ref{fig:latVthroughput} also shows the $R_t$ vs. $\ell$ curves for marginal RCSP approximation using ML decoding. The increments in Table~\ref{table:m5_steps} are used in the computation. The curves for ML decoding shows that the effect of NTC starts to manifest at low latencies as $m$ increases. Moreover, the differences between the BD decoding curves and ML decoding curves are negligible for $m>1$ and expected latencies larger than $200$. This observation motivates us to focus on the expected throughput optimization using BD decoding, which simplifies the computation.

\subsection{Chernoff Bounds for RCSP over AWGN Channel}
\label{sec:ChernoffBounds}
Even using the less complex BD decoding, the computation of the joint RCSP approximation based on numerical integration becomes unwieldy for $m > 6$.  In this subsection we study upper and lower bounds based on the Chernoff inequality and compare these bounds with the marginal RCSP approximation of throughput, which is itself a lower bound on the joint RCSP approximation. Similar to previous subsection we assume BD decoding unless otherwise stated. 

Assume w.l.o.g. that the noise has unit variance. Let $r_j$ and $n_j$ be the decoding radius and blocklength for the $j$th decoding attempt. As in earlier sections let $\zeta_{n_j}$ be the marginal error event and $E_{n_i} = \cap_{j=1}^{i}\zeta_{n_j}$ is the joint error event. The main result of applying the Chernoff inequality to bound the joint RCSP approximation is the following theorem.
\begin{theorem}
\label{thm:ChernoffBounds}
Using the joint RCSP approximation with BD decoding for AWGN channel we have for all $1 < i \le m$ that
\begin{align}
\label{eqn:LowerMain}
\P[E_{n_i}] 
&\geq  \max\left\{0, \P[\zeta_{n_i}] - \sum\limits_{j = 1}^{i-1}\P\left[\zeta_{n_i} \cap \zeta_{n_j}^c\right]\right\}\,,
\\
\label{eqn:UpperMain}
\P[E_{n_i}]
&\leq \min\left\{P_1, P_2, 1\right\}\,,
\end{align}
where $P_1 = \P[\zeta_{n_i}]$ and
$P_2 = \P[\zeta_{n_i} \cap \zeta_{n_{i-1}}]$. 
The pairs of joint events $\zeta_{n_m}\cap\zeta_{n_j}, j = 1, \dots, m-1$ can be bounded as follows:
\begin{align}
\label{eqn:TwoJoints1}
&\P[\zeta_{n_j} \cap \zeta_{n_m}] \leq \inf\limits_{0\leq u < 1/2} 
\frac{\P\left[\chi_{n_j}^2 > (1-2u)r_m^2\right]}{e^{u r_m^2}(1-2u)^{n_m/2}}\,,
\\
\label{eqn:TwoJointsComp}
&\P[\zeta_{n_j}^c \cap \zeta_{n_m}] \leq \inf\limits_{0\leq u < 1/2} 
\frac{\P\left[\chi_{n_j}^2 \leq (1-2u)r_m^2\right]}{e^{u r_m^2}(1-2u)^{n_m/2}}\,.
\end{align}
The bounds on pairs of joint events \eqref{eqn:TwoJoints1} and \eqref{eqn:TwoJointsComp} can be extended to joints of more than two events which leads to a slightly tighter upper bound for $\P[E_{n_i}]$. The proof of this extension and the proof of Thm.~\ref{thm:ChernoffBounds} are provided in Appendix~\ref{sec:AppendixChernoff}. 
\end{theorem}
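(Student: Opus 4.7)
The plan is to separate the theorem into three independent pieces: the set-theoretic lower bound \eqref{eqn:LowerMain}, the elementary upper bound \eqref{eqn:UpperMain}, and the Chernoff-style bounds \eqref{eqn:TwoJoints1}--\eqref{eqn:TwoJointsComp} on the pairwise joint events. The first two pieces rely on purely combinatorial manipulations of events; only the third uses the Gaussian structure of the noise, through the fact that the squared-norm process $\{\|Z^{n_j}\|^2\}_j$ has independent chi-square increments.

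For the lower bound \eqref{eqn:LowerMain}, I would start from the decomposition $\zeta_{n_i} = (\zeta_{n_i}\cap E_{n_{i-1}})\cup(\zeta_{n_i}\cap E_{n_{i-1}}^c)$, rewrite $E_{n_{i-1}}^c = \bigcup_{j<i}\zeta_{n_j}^c$, and apply the union bound to the second piece. Since $E_{n_i} = \zeta_{n_i}\cap E_{n_{i-1}}$, this gives $\P[E_{n_i}]\geq \P[\zeta_{n_i}] - \sum_{j<i}\P[\zeta_{n_i}\cap\zeta_{n_j}^c]$, and combining with the trivial lower bound $\P[E_{n_i}]\geq 0$ yields \eqref{eqn:LowerMain}. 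For the upper bound \eqref{eqn:UpperMain}, the inclusions $E_{n_i}\subseteq \zeta_{n_i}$ and $E_{n_i}\subseteq \zeta_{n_{i-1}}\cap \zeta_{n_i}$ immediately produce the $P_1$ and $P_2$ bounds, and $\P[E_{n_i}]\leq 1$ is automatic.

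For the Chernoff bound \eqref{eqn:TwoJoints1} on a pair, I would split the noise norm as $\|Z^{n_m}\|^2 = \|Z^{n_j}\|^2 + \|Z_{n_j+1}^{n_m}\|^2$, a sum of independent chi-square variables with $n_j$ and $n_m-n_j$ degrees of freedom. For $0\leq u<1/2$, I would bound the outer indicator by $\mathbf{1}\{\zeta_{n_m}\}\leq \exp(u(\|Z^{n_m}\|^2 - r_m^2))$, leave the inner indicator $\mathbf{1}\{\zeta_{n_j}\}$ unweighted, take expectations, and factor using independence. The result is a product of $e^{-ur_m^2}$, the MGF $(1-2u)^{-(n_m-n_j)/2}$ of the trailing increment, and a truncated MGF of $\chi^2_{n_j}$ of the form $\E[\mathbf{1}\{V>r_j^2\}e^{uV}]$. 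The truncated MGF evaluates in closed form via the change of variables $w=(1-2u)v$, yielding a factor $(1-2u)^{-n_j/2}$ times a chi-square tail probability; optimizing over $u$ produces the stated bound. The complementary bound \eqref{eqn:TwoJointsComp} follows by the same argument with $\mathbf{1}\{\zeta_{n_j}^c\}$ replacing $\mathbf{1}\{\zeta_{n_j}\}$, which swaps the upper chi-square tail for its lower tail.

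The extension to joints of three or more events follows the same template: decompose $\|Z^{n_m}\|^2$ into all of its independent chi-square increments, apply the exponential Markov inequality once to $\zeta_{n_m}$, and use independence to factor the resulting expectation into a product of single-increment (possibly truncated) MGFs, each evaluated by the same change-of-variable trick. The main obstacle is precisely this Chernoff step: one must decide which indicators to ``spend'' the exponential on, carry out the change of variables uniformly over $0\leq u<1/2$, and verify that optimizing over $u$ produces a bound strictly better than the single-event estimate $\P[\zeta_{n_m}]$. Once that step is in place, the remainder of the theorem amounts to careful bookkeeping.
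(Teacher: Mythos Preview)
Your proposal is correct and follows essentially the same route as the paper: the same set-theoretic decomposition with a union bound for \eqref{eqn:LowerMain}, the same inclusions $E_{n_i}\subseteq\zeta_{n_i}$ and $E_{n_i}\subseteq\zeta_{n_{i-1}}\cap\zeta_{n_i}$ for \eqref{eqn:UpperMain}, and the same Chernoff step (exponential Markov on $\zeta_{n_m}$, MGF of the trailing chi-square increment, and the change of variables $w=(1-2u)v$) for \eqref{eqn:TwoJoints1}--\eqref{eqn:TwoJointsComp}. The only cosmetic difference is that the paper carries out the Chernoff computation starting from the integral representation \eqref{eqn:Pzeta_i} and performs the scaling in the Gaussian coordinates $z_i'=(1-2u)^{1/2}z_i$, whereas you work directly with the indicator bound $\mathbf{1}\{\zeta_{n_m}\}\le e^{u(\|Z^{n_m}\|^2-r_m^2)}$ and the truncated MGF; the two computations are line-by-line equivalent.
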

%
%

We observed numerically that the marginal probability $\P[\zeta_{n_m}]$ in \eqref{eqn:UpperMain}, which can be evaluated directly via the tail of a single chi-square random variable, is surprisingly tight for short blocklengths.   The tightness of the marginal was used in \cite{PolyIT11} for \eqref{eqn:AchevVLFT}, where the upper bound on the error probability of each time instance is the marginal. 

Using the marginal $\P[\zeta_{n_i}]$ as an upper bound on $\P[E_{n_i}]$, the lower bound \eqref{eqn:LowerMain} shows that the gap between the joint and marginal probability is upper bounded by applying \eqref{eqn:TwoJointsComp} to $\sum_{j = 1}^{i-1}\P\left[\zeta_{n_i} \cap \zeta_{n_j}^c\right]$. We found numerically that setting $u = 1/2 - n_m/(2r_m^2+2k)$  in \eqref{eqn:TwoJointsComp} where $k = \log_2 M$, gives a tight upper bound on the gap, although this convenient choice of $u$ is not the optimal value. 

\begin{figure}[t]
\centering
	\includegraphics[width=0.5\textwidth]{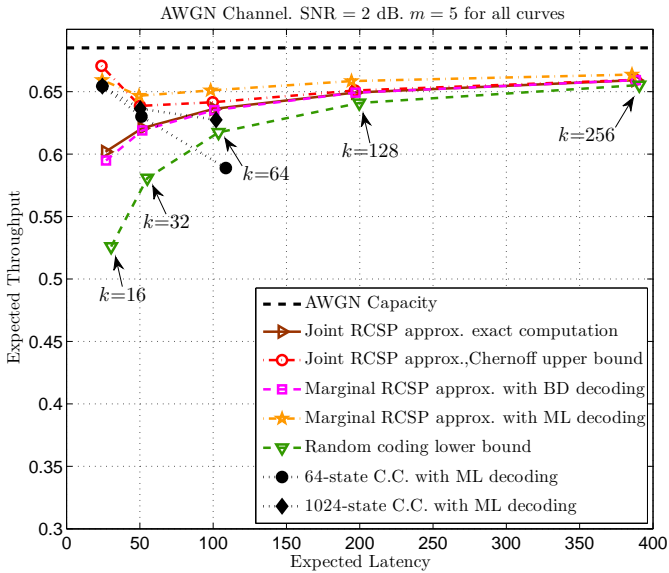}
    \caption{The $R_t$ vs. $\ell$ curves for the joint RCSP approximation with $m = 5$ over the $2$dB AWGN channel. All curves use the optimized increments in Table~\ref{table:m5_steps}.}
    \label{fig:m5compare}
\end{figure}

Fig.~\ref{fig:m5compare} shows the $R_t$ vs. $\ell$ curves for $m=5$ using the optimized step sizes provided in Table~\ref{table:m5_steps} of Sec.~\ref{sec:Optimization} and the values of $k$ are shown in the figure. The channel SNR is $2$ dB and the asymptotic capacity is $0.6851$ bits. The $m=5$ curves shown include exact numerical integration of the joint RCSP approximation, the marginal RCSP approximation using both ML decoding and BD decoding, the upper bound on the joint RCSP approximation using \eqref{eqn:LowerMain} and \eqref{eqn:TwoJointsComp}, and the random-coding lower bound using \eqref{eqn:VLFT_DT}. Evaluating \eqref{eqn:AchevVLFT} would give a slightly better bound than \eqref{eqn:VLFT_DT} for random coding but is very time-consuming to compute.  

The throughput upper bound using \eqref{eqn:LowerMain} and \eqref{eqn:TwoJointsComp} becomes tight for latencies larger than $100$. The lower bound on the joint RCSP approximation using \eqref{eqn:UpperMain} and \eqref{eqn:TwoJoints1} is not shown separately because it turns out to be identical to  the marginal RCSP approximation.  This is because the Chernoff bound of the pairwise joint probabilities are often larger than the marginal probabilities.  

Fig.~\ref{fig:m5compare} also plots the $R_t$ vs. $\ell$ points of the $1024$-state and $64$-state convolutional codes presented in Sec.~\ref{sec:SimSetup}. The simulation results demonstrate that both codes achieve throughputs higher than the random-coding lower bound for  $k = 16$ and $k=32$. The more complex $1024$-state code gives expected throughput higher than random coding even for $k=64$. For $k = 16$ and $k=32$ the simulation points of the $1024$-state code closely approach the marginal RCSP approximation of the expected throughput using ML decoding.

 Note that the lower bound of the expected throughput using random coding is significantly below the RCSP approximations (joint RCSP approximation using BD decoding and marginal RCSP using ML or BD decoding) for low expected latencies.  However, for expected latencies above $350$ symbols the RCSP approximations and the lower bound based on random coding produce very similar expected throughputs.
 
\begin{figure}[t]
\centering
    \includegraphics[width=0.5\textwidth]{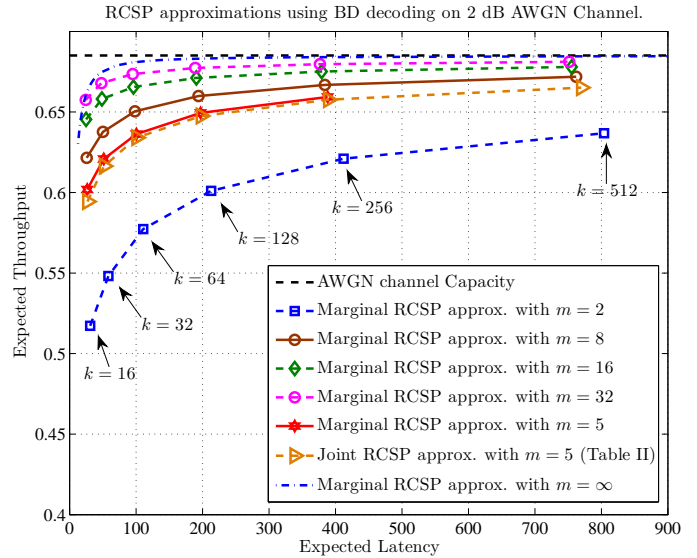}
    \caption{Comparing marginal RCSP approximation with optimized uniform increment $I_j =I$ and various $m$, and joint RCSP approximation with optimized $\{I_j\}_{j = 1}^m$ and $m = 5$.}
	\label{fig:RCSP_k512_m64}
\end{figure}

For random coding, an i.i.d. codebook is drawn using a Gaussian distribution with a zero mean and a variance equal to the power constraint $\eta$.  This type of random codebook generation will sometimes produce a codeword that violates the power constraint.  To address this, the average power should be slightly reduced or  codewords violating the power constraint should be purged, either of which will lead to a slight performance degradation. 

To conclude this subsection, we summarize two relevant observations to motivate the next subsection: (1) When using BD decoding, the difference in expected throughput between the joint RCSP approximation and the marginal RCSP approximation is negligible. (2) The difference between the joint RCSP approximation using BD decoding and marginal RCSP approximation using ML decoding is small for reasonably large expected latencies, e.g., $200$ for $m > 1$. These two observations allow us to focus on efficient optimizations based on marginal RCSP approximation using BD decoding.

\subsection{RCSP with Uniform Increments}
\label{sec:RCSP_FixedK}

For a specified $k$ and for a fixed finite number of transmissions $m$, there are $m$ variables $\{I_j\}_{j = 1}^{m}$ that can be varied to optimize the throughput. The number of possible combinations of $I_j$'s increases rapidly as $m$ increases. Sec.~\ref{sec:OptimizeNonUniformI} addressed the optimization problem for $m \le 6$ by using the joint RCSP approximation.  Motivated by the pattern seen in Table~\ref{table:m5_steps}, this subsection considers the large $m$ case by restricting the transmissions to use uniform increments $I_j = I$ for $j>1$.   This yields a two parameter optimization: the initial blocklength $n_1$ and the increment $I$.  To reflect practical constraints, we restrict the increment $I$ to be an integer.

We also reduce computational burden by replacing the joint RCSP approximation with the marginal RCSP approximation, and we only use BD decoding.  Note that in Sec.~\ref{sec:ChernoffBounds} we saw that the marginal RCSP approximation is operationally identical to lower bound on the joint RCSP approximation using \eqref{eqn:UpperMain} and \eqref{eqn:TwoJoints1} and is a tight lower bound to the joint RCSP approximation.  It is also relatively simple to compute using BD decoding since the relevant probability of error is simply the tail of a chi-square.  

Fig.~\ref{fig:RCSP_k512_m64} presents the optimized performance with uniform increments for various $m$ ranging from $2$ to $\infty$.  In the optimization that produced this figure, the longest possible blocklength was constrained to be less than $\lceil 6k/C \rceil$ where $C$ is the capacity of the AWGN channel. 

\begin{table}[t]
\centering
\caption{Optimized RCSP $n_1$ and $n_m$ for $m=5$ and uniform increments.} 
\begin{tabular}{ c |c c c | c}
  $k$ & $n_1$ & $n_5$ & $I$ & $R_t$ \\
  \hline
  16 &  17 & 47 & 6 &0.5944\\
  \hline
  32 &  39 & 79 & 8  & 0.6164\\
  \hline
   64 & 83 & 143 & 12  & 0.6341\\
  \hline
   128 & 172 & 262 & 18  & 0.6475\\
  \hline
   256 & 353 & 483 & 26  & 0.6576\\
\end{tabular}
\label{table:m5_UnifSteps}
\end{table}

For the $m=5$ case, it is instructive to compare optimized uniform increments with the unconstrained optimal increments of Table \ref{table:m5_steps}.  Table~\ref{table:m5_UnifSteps} shows the numerical results of the uniform-increment optimization for $m=5$ on the $2$ dB AWGN channel.  Comparing the $m=5$ curves in Fig.~\ref{fig:RCSP_k512_m64} and the parameters $n_1$, $n_m$, $I$,  and $R_t$ in Tables \ref{table:m5_steps} and \ref{table:m5_UnifSteps} shows that the constraint of constant increments $I_j = I$ for $j>1$ negligibly reduces expected throughput in this case.

Looking at the uniform-increment curves in Fig. \ref{fig:RCSP_k512_m64}, we observe diminishing returns even for $m$ increasing exponentially. This implies that for a practical system it suffices to consider an $m$ smaller than $16$.

\begin{figure}[t]
\centering
    \includegraphics[width=0.5\textwidth]{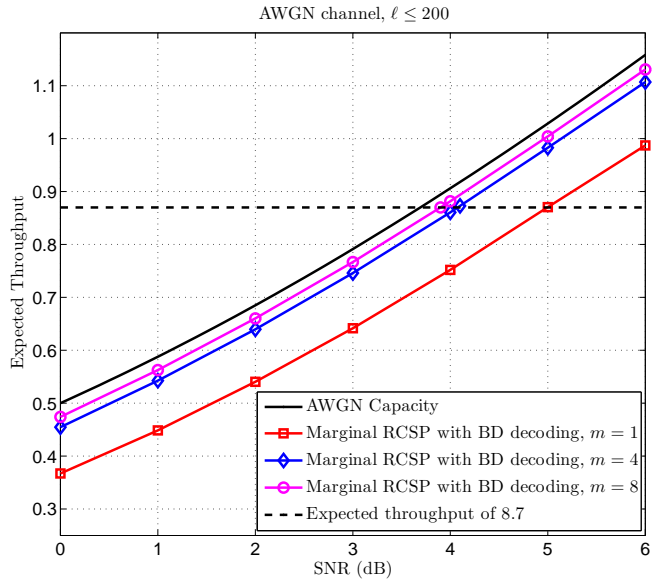}
	 \caption{$R_t$ vs. $\eta$ for $\ell \leq 200$, $\eta = 0, 1, \dots, 5$dB and $m = 1, 4, 8$. }
    \label{fig:VariousSNR_Absolute}
\end{figure}

\subsection{Performance across a range of SNRs}
To allow easy comparison across the various plots above, we have focused attention on the specific case of the 2 dB AWGN channel.  The uniform-increment approach of the previous subsection allows us to efficiently explore performance across a range of SNRs.  For expected latencies constrained to be close to (but not greater than) $200$ symbols, Fig.~\ref{fig:VariousSNR_Absolute} plots the marginal RCSP approximation of $R_t$ vs. $\eta$ for $m = 1, 4, 8$ and $\eta$ ranging from $0$ to $5$ dB. The expected throughput $R_t$ is obtained by finding the largest integer $k$ such that the optimized initial blocklength $n_1$ and the uniform increment $I$ yield expected latency $\ell \leq 200$.  The actual expected latencies ranged only between $197$ and $200$.  We chose the constraint to be $200$ since the difference between ML decoding and BD decoding for the marginal RCSP approximation is small. 

This plot shows the significant benefit of even limited IR as compared to ARQ over a range of SNRs.  For example, at 4 dB the curve for $m = 1$ (ARQ) is $0.155$ bits from the original AWGN capacity, but this gap reduces to $0.046$ bits for $m=4$ and $0.025$ bits for $m = 8$. To see these gaps from an SNR perspective, the horizontal line at expected throughput $8.7$ bits shows that $m=1$ (ARQ) performs within $1.3$ dB of the original AWGN capacity while $m=4$ is within $0.4$ dB and $m=8$ is within $0.2$ dB.  

Recall that the marginal RCSP curves in Fig.~\ref{fig:VariousSNR_Absolute} are for repeated IR-NTC, which generally can have throughputs above capacity because of the extra information communicated by the NTC.  We saw this in Figs. \ref{fig:BSC_Conv_Eg} and \ref{fig:AWGN_RCSP_VLFT_mInfty}.  However, this extra information is quite limited for small values of $m$.  A simple upper bound on the extra information per transmitted symbol to communicate $\tau$ for repeated IR-NTC with initial blocklength $n_1$ is $\log_2(m+1)/n_1$. We examine this upper bound for the case of $4$ dB. For $m=1$, $n_1$ is $192$ symbols and the upper bound is $0.0052$ bits.  For $m=4$ and $m=8$ the values of $n_1$ are $182$ and $178$ respectively, and the upper bounds on the NTC per-symbol extra information are $0.0127$ and $0.0177$ bits respectively.

Thus, the small values of $m$ along with practically reasonable expected latencies of around $200$ symbols considered in Fig.~\ref{fig:VariousSNR_Absolute} cause the extra information provided by NTC to be negligible.   Note that for larger expected latencies, the per-symbol extra information of NTC will become even smaller.

\subsection{Optimizing Increments for Non-Repeating IR-NTC}
\label{sec:outage} 
Repeated IR-NTC has an outage probability of zero because it never stops trying until a message is decoded correctly.  However, this leads to an unbounded maximum latency.  Using a non-repeating IR-NTC scheme, optimization of transmission lengths using the joint RCSP approximation can incorporate a strict constraint on the number of incremental transmissions so that the transmitter gives up after $m$ transmissions.  This optimization can also include a constraint on the outage probability, which is nonzero for non-repeating IR-NTC.

To handle these two new constraints, we fix $m$ and restrict $\P[E_{n_m}]$ to be less than a specified $p_{\text{outage}}$. Without modifying the computations of $\P[E_{n_j}]$, the optimization is adapted to pick the set of lengths that yields the maximum throughput s.t. $\P[E_{n_m}] \leq p_{\text{outage}}$.  When there is a decoding error after the $m$th transmission, the transmitter declares an outage event and proceeds to encode the next $k$ information bits. This scheme is suitable for delay-sensitive communications, in which data packets are not useful to the receiver after a fixed number of transmission attempts. The expected number of channel uses $\ell$ is given by
\begin{equation}
\ell = I_1+\sum\limits_{j=2}^{m} I_j \P[E_{n_{j - 1}}].
\label{eqn:lambda_tau_noRep}
\end{equation}
The expected throughput $R_t$ is again given by $k/\ell$ and the outage probability is $\P[E_{n_m}]$.

\begin{figure}[t]
  \centering
   \includegraphics[width=0.5\textwidth]{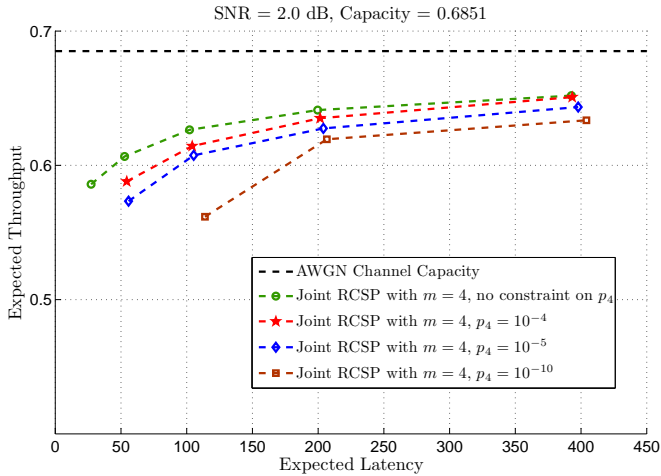}
	\caption{The effect of specifying a constraint on the outage probability $p_4$ on the latency vs. throughput.}
\label{fig:Outage}
\end{figure}
\begin{figure}[h]
\centering
	\includegraphics[width=0.5\textwidth]{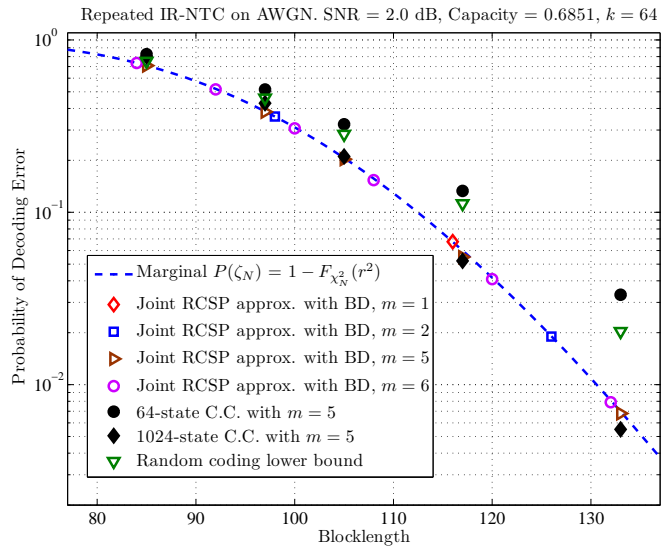}
\caption{A comparison of the decoding error trajectories of joint RCSP approximation, marginal RCSP approximation, simulated ML-decoded convolutional codes and random-coding lower bound for $k=64$.}
\label{fig:PrErrorM5K64}
\end{figure}

Fig.~\ref{fig:Outage} shows how the outage probability constraint affects the $R_t$ vs. $\ell$ curve. The maximum number of transmissions is fixed to be $m = 4$.  The constraint values we considered for the outage probability (error probability of the fourth transmission $\P[E_{n_4}] = p_4$)  are $1 \text{ (no constraint)}, 10^{-4}, 10^{-5}$ and $10^{-10}$. Stricter constraints on outage probability increase the average latency $\ell$. According to the joint RCSP approximation, however, it is exciting to see that the loss in the expected throughput is only $0.022$ bits around latency of $200$ symbols compared to the unconstrained $R_t$ even with the outage constraint $p_4 = 10^{-10}$.


\subsection{Decoding Error Trajectory with BD Decoding}

The optimization of increments in Sec.~\ref{sec:OptimizeNonUniformI} uses the joint RCSP approximation\footnote{Note that for joint RCSP approximation we only use BD decoding.} to find the highest expected throughput $R_t$. The joint RCSP approximation provides a set of joint decoding error probabilities $\P[E_{n_j}], j = 1, \dots, m$, which we call the ``decoding error trajectory''.  If we can find a family of rate-compatible codes that achieves this decoding error trajectory, then we can match the throughput performance suggested by the joint RCSP approximation. For the short-latency regime, e.g. $k = 16$ and $k=32$, one should use the marginal RCSP approximation with ML decoding to study the decoding error trajectory for a better approximation. To demonstrate an example of the decoding error trajectory we focus our attention on the case of $k=64$ and use BD decoding throughout this subsection.

Fig.~\ref{fig:PrErrorM5K64} presents the decoding error trajectories for $k=64$ and $m = 1, 2, 5, 6$ using the joint RCSP approximation. Each trajectory corresponds to a $k=64$ point on the $R_t$ vs. $\ell$ curve in Fig.~\ref{fig:latVthroughput}. For example, the decoding error trajectory for $k=64$ and $m = 2$ consists of the two blue square markers in Fig.~\ref{fig:PrErrorM5K64}  and corresponds to the point on the blue solid curve in Fig.~\ref{fig:latVthroughput} with $k = 64, m = 2$.

Fig.~\ref{fig:PrErrorM5K64} also shows the decoding error trajectories for the random-coding lower bound using \eqref{eqn:VLFT_DT} with $m=5$, as well as the simulations of the two tail-biting convolutional codes presented in Sec.~\ref{sec:SimSetup} with $m=5$.  The dashed line is the decoding error trajectory using the marginal RCSP approximation. The marginal RCSP approximation provides a good estimate that can serve as a performance goal for practical rate-compatible code design across a wide range of blocklengths.

While the $64$-state code is not powerful enough to match the trajectory suggested by the joint RCSP approximation, the $1024$-state code closely follows the trajectory for $m=5$ and therefore has a performance very close to the joint RCSP (c.f. Fig.~\ref{fig:m5compare}). Thus there exist practical codes, at least in some cases, that achieve the idealized performance of RCSP.

Fig.~\ref{fig:PrErrorM4K128} shows how the outage probability constraints affect the decoding error trajectory for $k = 128$ and $m = 4$ using the joint RCSP approximation. Curves are shown with no constraint on the outage probability $p_4$ and for $p_4$ constrained to be less than $10^{-4}, 10^{-5}$ and $10^{-10}$. For ease of comparison, the $x$-axis is labeled with the transmission index rather than blocklength as in Fig.~\ref{fig:PrErrorM5K64}.  

At each index, the blocklengths corresponding to the curves with different constraints are different. For example, at transmission index two, the curve with $p_4 = 10^{-4}$ has blocklength $191$ whereas the curve with $p_4 = 10^{-10}$ has blocklength $211$.  An important observation is that even for relatively low outage probability constraints such as $p_4 = 10^{-10}$, the initial transmission  should still have a relatively high decoding error rate in order to take advantage of instantaneous information densities that may be significantly higher than capacity. 

\begin{figure}[t]
\centering
    \includegraphics[width=0.5\textwidth]{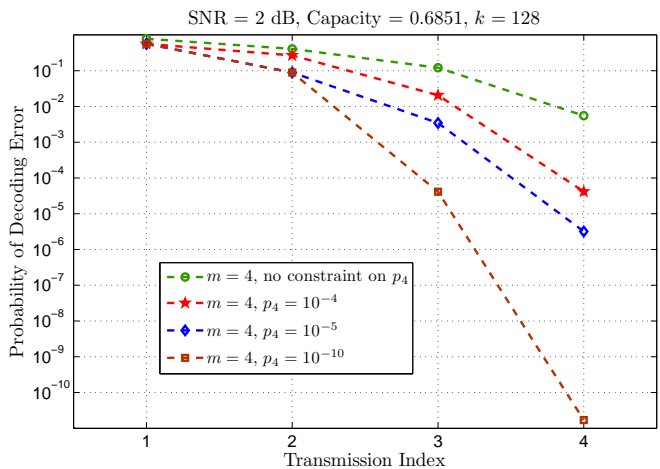}
\caption{The effect of specifying a constraint on the outage probability $\P[E_4] = p_4$ on the decoding error trajectory for $k=128$.}
\label{fig:PrErrorM4K128}
\end{figure}

\section{Concluding Remarks}
\label{sec:Conclusion}

Inspired by the achievability and converse results in \cite{PolyIT11} and practical simulation results in \cite{Chen_2010_ITA}, this paper studies feedback communication systems that use incremental redundancy. We focus on the convenient model of IR-NTC, in which a stream of incremental redundancy concludes when a noiseless confirmation symbol is sent by the transmitter once the receiver has successfully decoded.  

VLFT achievability in \cite{PolyIT11} uses a non-repeating IR-NTC system with an infinite-length mother code and decoding attempted after each received symbol.  The first part of this paper shows that a finite-length mother code (implying repeated IR-NTC to achieve zero-error communication) with decoding  attempted  only at certain specified times can still approach the VLFT achievability curve.  The finite-length constraint introduces only a slight penalty in expected latency as long as the additional length of the mother code beyond the blocklength corresponding to capacity grows logarithmically with the expected latency.  This is a requirement that is easily met by practical systems.  In contrast, the expected latency penalty associated with decoding time limitations is linear in the interval between decoding times.  This forces the intervals to grow sub-linearly in the expected latency for systems to approach capacity.  


The second part of this paper introduces rate-compatible sphere-packing (RCSP) and uses this tool  to analyze and optimize IR-NTC systems for the AWGN channel.  The joint RCSP approximation with BD decoding optimizes the incremental lengths $I_1, \dots, I_m$ for small values of $m$ in a repeated IR-NTC system.  We found that under BD decoding, the marginal RCSP approximation is a tight lower bound of the joint RCSP approximation of expected throughput and simplifies the computation. This simplification allows optimization of the uniform incremental length $I$ for repeated IR-NTC with larger values of $m$.  The marginal RCSP approximation can also be computed for ML decoding, and the difference between ML decoding and BD decoding is significant for short expected latencies. For expected latencies larger than $200$ symbols, however, we observed that the difference between ML and BD decoding becomes small. 

For relatively small values of $m$ and $N$, a repeated IR-NTC system can approach the capacity of the original AWGN channel with expected latencies around 200 symbols.   We applied the marginal RCSP approximation assuming BD decoding across a range of SNRs to an IR-NTC system with $m=8$ and expected latencies at or below 200 symbols.    The results showed throughputs consistently within about 0.2 dB of the performance corresponding to the original AWGN capacity.    The NTC introduces additional information that can generally cause IR-NTC achievable rates to be above capacity.  When $m$ is less than $8$ and the expected latency is above $200$ symbols, however, this increase in throughput is limited to negligible values (less than $0.02$ bits) so that comparisons with the original AWGN channel capacity are reasonable.

For non-repeating IR-NTC, we can use the joint RCSP approximation with BD decoding to optimize the incremental lengths $I_1, \dots, I_m$ under an outage constraint. Numerical result shows that for an expected latency above $200$ symbols, strict outage probability constraints can be met with minimal loss in throughput.

From a practical code design perspective, this paper demonstrates an IR-NTC system for $m=5$ incremental transmissions based on a $1024$-state, randomly punctured, tail-biting convolutional code with optimized transmission increments. At short expected latencies, the resulting IR-NTC system exceeds the random-coding lower bound of \cite{PolyIT11} and closely matches the throughput-latency performance predicted by RCSP for the AWGN channel at low latency.

Rate-compatible codes for IR-NTC systems that match the performance predicted by RCSP remain to be identified for expected latencies between $200$ and $600$ symbols.  This paper demonstrates that the decoding error trajectory based on the marginal RCSP approximation can provide the target error probabilities for designing such rate-compatible codes. Approximations based on both ML decoding and BD decoding can be used. BD decoding is easier to compute and we showed that the difference between ML and BD decoding becomes small for $m>1$ and expected latencies larger than $200$. The design of rate-compatible codes matching the marginal-RCSP decoding error trajectory for expected latencies between $200$ and $600$ symbols is a challenging open problem in channel code design.

\appendices
\section{Proofs for VLFT with Practical Constraints}
\label{sec:AppendixVLFT}

\begin{proof}[Proof of Thm. \ref{thm:FiniteVLFT}]
Consider a random codebook $\mc{C}_N = \{\ms{C}_1, \dots, \ms{C}_M\}$ with $M$ codewords of length-$N$ and codeword symbols independent and identically distributed according to $P_X$. To construct a VLFT code consider the following $(U, f_n, g_n, \tau)$:
The common random variable
\begin{align}
U \in \mc{U} = \overbrace{\mc{X}^N\times\dots\times\mc{X}^N}^{M \text{times}}.
\end{align}
is distributed as:
\begin{align}
U \sim \prod_{j=1}^M P_{X^N}.
\end{align}
A realization of $U$ corresponds to a deterministic codebook  $\{\ms{c}_1, \dots, \ms{c}_M\}$. 
Let $\ms{C}_W(n)$ denote the $n$th symbol of the codeword $\ms{C}_W$, and let $[\ms{C}_j]^n$ denote the first $n$ symbols of the codeword $\ms{C}_j$. The sequence $(f_n, g_n)$ is defined as
\begin{align}
f_n(U, W) &= \ms{C}_W(n)
\\
g_n(U, Y^n) &=  \arg\max_{j = 1, \dots, M} i([\ms{C}_j]^n; Y^n),
\end{align}
and the stopping time $\tau$ is defined as:
\begin{align}
\tau &= \inf\{n: g_n(U,Y^n) = W\}\wedge N\,.
\end{align}
The $n$th marginal error event $\zeta_n$ is given as:
\begin{align}
\zeta_n = \left\{\bigcup_{j \ne W}i(\ms{C}_j^n; Y^n) > i(\ms{C}_W^n; Y^n)\right\}\,.
\end{align}
Following \eqref{eqn:SumsOfPtau}-\eqref{eqn:MarginalZeta} we have 
\begin{align}
\E[\tau] &= \sum_{n = 0}^{N-1} \P[\tau > n]
\\
&\leq \sum_{n = 0}^{N-1}  \P[\zeta_n].
\end{align}
As in \cite[(151)-(153)]{PolyIT11}, the union bound $\P(\zeta_n) \le \xi_n$ provides an upper bound on \eqref{eqn:MarginalZeta} as follows:
\begin{equation}
\E[\tau]\leq \sum_{n = 0}^{N-1} \xi_n\,,
\end{equation}
where $\xi_n$ is given in \eqref{eqn:Xi_n}.

With a similar bounding technique, the error probability can be upper bounded as:
\begin{align}
\P[g_\tau(U, Y^\tau)\ne W]
&= \P[g_N(U, Y^N) \ne W, \tau = N]
\\
&= \P\left[\bigcap_{j = 1}^N \zeta_j\right]
\\
&\leq \P[\zeta_N]
\\
&\leq  \xi_N\,.
\end{align}
In other words, the error probability is upper bounded by the error probability of the underlying finite-length code $\mc{C}_N$.
\end{proof}

\begin{proof}[Proof of Thm. \ref{thm:FiniteFV}]
The proof follows from random coding and the following modification of the triplet $(f_n, g_n, \tau)$ of Thm. \ref{thm:FiniteVLFT}: For $k = 1, 2, \dots $ let $(f'_n, g'_n)$ be defined as:
\begin{align}
\nonumber
f'_n(U,W) &= \begin{cases} 
   f_n(U, W) &\text{ if } n \leq N  \\
   f_{n-kN}(U, W)    &\text{ if } kN < n \leq (k+1)N
 \end{cases}
\\
\nonumber
g'_n(U,Y^n) &= \begin{cases} 
   g_n(U,Y^n)  &\text{ if } n \leq N  \\
   g_{n-kN}(U,Y_{kN+1}^n)   &\text{ if }  kN < n \leq (k+1)N
\end{cases}
\end{align}
Let the new stopping time $\tau'$ be defined as:
\begin{align}
\tau' &= \inf\{n: g'_n(U,Y^n) = W\}\,.
\end{align}
The error probability is zero because the definition of the stopping time $\tau'$ ensures that decoding stops only when the decision is correct. As mentioned above, the new encoder/decoder sequence $(f'_n, g'_n)$ is simply an extension of the VLFT code in Thm.~\ref{thm:FiniteVLFT} by performing an ARQ-like repetition. The expectation of $\tau'$ is thus given as:
\begin{align}
\E[\tau'] &=\sum_{n = 0}^{N-1} \P\left[\bigcap_{j = 1}^{n}\zeta_j\right] + \P\left[\bigcap_{j = 1}^{N}\zeta_j\right]\E[\tau']
\\
&\leq \sum_{n = 0}^{N-1} \P[\zeta_n] + \P[\zeta_N]\E[\tau'] \,,
\end{align}
which implies that:
\begin{equation}
 \E[\tau'] \leq (1-\P[\zeta_N])^{-1}\sum_{n = 0}^{N-1} \P[\zeta_n] \,.
\end{equation}
Applying the RCU bound to replace each $\P[\zeta_n]$ with $\xi_n$ completes the proof.  
\end{proof}

\begin{proof}[Proof of Thm.~\ref{thm:VLFTExpandFinite}]
We define a pair of random walks to simplify the proofs:
\begin{align}
\label{eqn:RW1}
S_n &\triangleq i(X^n;Y^n)
\\
\label{eqn:RW2}
\bar{S}_n &\triangleq i(\bar{X}^n;Y^n)\,.
\end{align}
Referring to \eqref{eqn:InformationDensity}, note that for any measurable function $f$ we have the property:
\begin{align}
\label{eqn:tilting}
\E[f(\bar{X}^n,Y^n)] = \E[f(X^n,Y^n)\exp\{-S_n\}].
\end{align}
Letting $P_{X}$ to be a capacity-achieving input distribution, observe that $S_n$ and $\bar{S}_n$ are sums of i.i.d. r.v.s with positive and negative means:
\begin{align}
\E[i(X;Y)] &= C\\
\E[i(\bar{X};Y)] &= -L \, ,
\end{align}
where $C$ is the channel capacity and $L$ is the lautum information \cite{Palomar08}.  The sequence $\{S_n - nC\}_n$ is a bounded martingale based on our assumption that the information density of each symbol is essentially bounded.  Hence, by Doob's optional stopping theorem we have for a stopping time $\tau$:
\begin{align}
\label{eqn:DoobOpt}
\E[S_\tau] = C\E[\tau]\,.
\end{align}
Properties \eqref{eqn:tilting}-\eqref{eqn:DoobOpt} are used in the rest of this appendix. 

Using the definitions of \eqref{eqn:RW1} and \eqref{eqn:RW2} in \eqref{eqn:Xi_n} produces \eqref{eqn:XiWithS1}.  Weakening the RCU bound using \eqref{eqn:VLFT_DT} and replacing $M-1$ with $M$ in  \eqref{eqn:VLFT_DT} for simplicity produces \eqref{eqn:XiWithS2}:
\begin{align}
\label{eqn:XiWithS1}
\xi_n&=\E\left[\min\left\{(1,(M-1)\P[\bar{S}_n\geq S_n|X^nY^n]\right\}\right] 
\\
\label{eqn:XiWithS2}
&\leq \E\left[\exp\{-[S_n - \log M]^+\}\right]\,.
\end{align} 
Applying \eqref{eqn:XiWithS2} to Thm. \ref{thm:FiniteFV} yields the following:
\begin{align}
\label{eqn:XiWithS3}
\ell \leq \frac{1}{(1-\xi_N)}\sum_{n = 0}^{N-1}\E\left[\exp\left\{-[S_n - \log M]^+\right\}\right]\, .
\end{align}
Consider an auxiliary stopping time $\tilde{\tau}$ w.r.t. the filtration $\mc{F}_n = \sigma\{X^n, \bar{X}^n, Y^n\}$:
\begin{align}
\tilde{\tau} = \inf\{n \geq 0: S_n \geq \log M\}\wedge N\,.
\end{align}

For a specified set $E$, use $\E[X; E]$ to denote $\E[X 1_E]$ where $1_E$ is the indicator function of the set $E$. We now turn our attention to computing the summation in \eqref{eqn:XiWithS3}. Letting $E$ be the set $\{\tilde{\tau}< N \}$,  we have the following:
\begin{align}
\nonumber
	&\sum_{n = 0}^{N-1}\E\left[\exp\{-[S_n - \log M]^+\}\right] 
\\
\nonumber
	&= \E\left[\tilde{\tau}-1 + \sum_{k = 0}^{N-1-\tilde{\tau}}\exp\left\{-[S_{\tilde{\tau}+k} - \log M ]^+\right\};E\right]
\\
\label{eqn:summation}
	&\quad + N\P[E^c]\,.
\end{align}
On $E$ we have $i(X^{\tilde{\tau}};Y^{\tilde{\tau}}) \geq \log M$ and hence:
\begin{align}
\label{eqn:sk1}
\left[S_{\tilde{\tau}+k} - \log M\right]^+ 
&= \left[S_{\tilde{\tau}+k} - S_{\tilde{\tau}} + S_{\tilde{\tau}} - \log M\right]^+
\\
\label{eqn:sk2}
&\geq  \left[S_{\tilde{\tau}+k} -  S_{\tilde{\tau}}\right]^+
\\
\label{eqn:sk3}
& {\buildrel d \over =} \left[S_{k}\right]^+ 
\end{align}
where the last equality is equality in distribution and is true almost surely by the strong Markov property of random walks. Applying \eqref{eqn:sk1}-\eqref{eqn:sk3} to \eqref{eqn:summation} yields:
\begin{align}
\nonumber
&\sum_{n = 0}^{N-1}\E\left[\exp\{-[S_n - \log M]^+\}\right] 
\\
\nonumber
&\leq 
\E\left[\tilde{\tau}-1 + \sum_{k = 0}^{N-1-\tilde{\tau}}\exp\{-\left[S_k\right]^+ \};E\right] 
\\
&\quad+ N\P[ E^c ]\,.
\end{align}
Using \eqref{eqn:tilting} we have the following:
\begin{align}
\E\left[\exp\{-\left[S_k\right]^+ \} \right]
&= \P\left[\bar{S}_k > 0\right] + \P\left[S_k\leq 0\right].
\end{align}

$S_k$ and $\bar{S}_k$ are sums of i.i.d. r.v.s with positive and negative means, respectively. 
Thus by the Chernoff inequality, both terms decay exponentially in $k$, yielding  
\begin{align}
\P\left[\bar{S}_k > 0\right] + \P\left[S_k\leq 0\right] \leq a_1e^{-ka_2} \, ,
\end{align}
for some positive constants $a_1$ and $a_2$.

 Thus there is a constant $a_3 > 0$ such that:
\begin{align}
\label{eqn:a3start}
\sum_{k = 0}^{N-1-\tilde{\tau}}\E\left[\exp\{-[S_k]^+\} \right]
&\leq \sum_{k = 0}^{N-1}\E\left[\exp\{-[S_k]^+\} \right] \\
&\leq \sum_{k = 0}^{N-1} a_1e^{-ka_2} 
\\\label{eqn:a3}
&=\frac{a_1 e^{-a_2}(1-e^{-(N-1)a_2})}{1-e^{-a_2}}\\
\label{eqn:a3stop}
&=a_3\, .
\end{align}
We assume that $S_n$ has bounded jumps, and hence on the set $E$ there is a constant $a_4$ such that
\begin{align}
S_{\tilde{\tau}} - \log M \leq a_4 C \,. 
\end{align}
Therefore from \eqref{eqn:DoobOpt} we have that  on the set $E$:
\begin{align}
\label{eqn:a4}
\E[\tilde{\tau}] \leq \frac{\log M }{C} + a_4 \,.
\end{align}
We are now ready to provide a bound on \eqref{eqn:XiWithS3}.  Letting $a_5=a_3+a_4$ and applying \eqref{eqn:a3start}-\eqref{eqn:a3stop} and \eqref{eqn:a4} to \eqref{eqn:summation} we have:
\begin{align}
\label{eqn:EllBoudnAtN}
\ell \leq (1-\P[\zeta_N])^{-1}\left(\frac{\log M}{C} + a_5 + N\P[E^c] \right) \, .
\end{align} 
For a fixed $M$ with random coding, there is a constant $\Delta > 0$ such that with $C_\Delta = C - \Delta$ and $N = \log M / C_\Delta$  we have the following bound on error probability:
\begin{align}
\label{eqn:ErrorAtN}
\P[\zeta_N]\leq b_2\exp(-N b_3)\, ,
\end{align}
for some constants $b_2>0$ and $b_3 >0$.

\setcounter{equation}{112}
\begin{figure*}[t]
\normalsize
\begin{align}
\label{eqn:6.1}
(1-\xi_{N}) \ell
&\leq n_1 + I\sum_{j = 1}^{m-1}\E[\exp\{-[S_{n_j} - \log M]^+\}]
\\
&=  
\E\left[  n_1+(\tilde{j}-1)I + I\sum_{k = 0}^{m-1-\tilde{j}}\exp\left\{-\left[S_{n_{k+\tilde{j}}} - \log M\right]^+\right\};E\right] + \E[n_1 + (m-1)I;E^c] 
\\
\label{eqn:6.3}
& \leq  \E[\tilde{\tau} ;E]
+ I\E\left[\sum_{k = 0}^{m-1-\tilde{j}}\exp\left\{-\left[S_{n_{k+\tilde{j}}} - \log M\right]^+\right\};E\right] + N\P[E^c]
\\
& \leq  \E[\tilde{\tau} ;E]
+ I\E\left[\sum_{k = 0}^{m-1}\exp\left\{-\left[S_{n_{k}}\right]^+\right\};E\right] + N\P[E^c]
\\
\label{eqn:6.4}
&\leq \frac{\log M}{C} + N\P[E^c]] + O(I).
\end{align}
\hrulefill
\end{figure*}
\setcounter{equation}{90}

Recalling that $S_n$ is a sum of i.i.d. r.v.'s with mean $C$, let  $N = \log M / C_\Delta$ we have by the Chernoff inequality that: 
\begin{align}
\P[E^c] & = \P[S_N < \log M]
\\
&\leq b_0 \exp\left\{-b_1 N\left(C - \frac{\log M }{N}\right)\right\} 
\\ \label{eqn:ChernoffAtN}
& = b_0 \exp\left\{-b_1\frac{\log M}{C_\Delta}\Delta \right\}\,.
\end{align}
Combining \eqref{eqn:EllBoudnAtN} and \eqref{eqn:ChernoffAtN}  we have the following for $\ell$:
\begin{align}
\ell \leq \frac{ \frac{\log M}{C} 
+a_5 + \frac{\log M}{C_\Delta} \frac{ b_0}{ M^{b_1\Delta/C_\Delta}}}{1-\P[\zeta_N]}\,.
\end{align}
Now applying \eqref{eqn:ErrorAtN}, notice that we are only interested in the first two terms of the expansion $(1-\P[\zeta_N])^{-1} = 1 + \P[\zeta_N] + \P[\zeta_N]^2 + \dots$ on $[0,1)$. Thus
\begin{align}
\ell &\leq \frac{\log M}{C} + \frac{ b_0\log M}{C_\Delta M^{b_1\Delta/C_\Delta}}+ \frac{ b_2\log M}{C(M^{b_3/C_\Delta})} + a_6
\end{align}
for some $a_6> 0$. Hence for $M$ large enough we have \eqref{eqn:ellExpansion1}.
\end{proof}

\begin{proof}[Proof of Cor.~\ref{cor:ScalingForEll}]
We first choose $N$ to scale with $\ell$ with a factor $\delta$ to be chosen later:
\begin{align}
N = (1+\delta)\ell\,.
\end{align}
Then by the converse (Thm. \ref{thm:poly11}) we have:
\begin{align}
\frac{\log M}{N} &\leq C + \frac{\log(\ell+1)+\log e - \delta\ell C}{(1+\delta)\ell}
\\
&\leq C-\delta' \, .
\end{align}
The term $\delta'$ on the right is positive by setting:
\begin{align}
\label{eq:mindelta}
\delta > \frac{\log(\ell+1)+\log e}{\ell C}\,.
\end{align}
Again by the Chernoff inequality we have:
\begin{align}
\P[\tau' \geq N] &= \P[S_N  < \log M  ]
\\
&\leq \P[S_N - NC < -N\delta' ]
\\
&\leq b_0'\exp\left\{-\ell(1+ \delta)b_1'\delta' \right\}\, .
\end{align}
Since $\delta$ is chosen such that $\log M / N$ is less than capacity, we also have \eqref{eqn:ErrorAtN}. By reordering \eqref{eqn:EllBoudnAtN} we have for some $b_2', b_3' > 0$ such that:
\begin{align}
\frac{\log M}{C} 
\geq \ell\left[1-b_2e^{-\ell(1+\delta)b_3} - (1+\delta)b_0'e^{-b_2'\ell}\right] - b_3' \,,
\end{align}
which implies $\log M_t^*(\ell, N) \geq \ell C - O(1)$ for $N = (1+\delta)\ell$ and large enough $\ell$.
\end{proof}
\begin{proof}[Proof of Thm.~\ref{thm:VLFT_TimeLimit1}]
Consider the same random-coding scheme as in Thm. \ref{thm:FiniteFV} with encoders $f'_n(U,W)$ and decoders $g'_n(U,Y^n)$.  The auxiliary stopping time $\tilde{\tau}$ of Thm. \ref{thm:VLFTExpandFinite} is altered to reflect the limitation on decoding times as follows:  
\begin{equation}
\tilde{\tau} = n_1 + (\tilde{j}-1)I,
\end{equation}
where $\tilde{j}$ is also a stopping time given as:
\begin{align}
\tilde{j}= \inf\{j > 0: S_{n_j} = i(X^{n_j}; Y^{n_j}) \geq \log M\}\,.
\end{align}
The rest is similar to the proof of Thm. \ref{thm:VLFTExpandFinite}, but without the complication of a finite $N$:
\begin{align}
\ell &\leq n_1 + I\sum_{j = 1}^{\infty}\P[\zeta_{n_j}]
\\
&\leq n_1 + I\sum_{j = 1}^{\infty}\E\left[\exp\left\{-\left[S_{n_j} - \log M\right]^+\right\}\right]
\\
\nonumber
&\leq n_1 + I\E[\tilde{j} - 1]
\\ 
&\quad+ I\sum_{k = 0}^{\infty}\E\left[\exp\left\{-\left[S_{n_{\tilde{j} + k}} - \log M\right]^+\right\}\right]
\\
&\leq \E[\tilde{\tau}] 
+ I\sum_{k = 0}^{\infty}\E\left[\exp\left\{-[S_{n_k}]^+\right\}\right]
\\
\label{eqn:UnifIncBd1}
&\leq \E[{\tilde{\tau} }] + I a_3
\\
\label{eqn:UnifIncBd2}
&\leq \frac{\log M}{C} + I a_4 + I a_3\,,
\end{align}
for some $a_3>0$ and some $a_4>0$, where \eqref{eqn:UnifIncBd1} follows by applying the Chernoff inequality and \eqref{eqn:UnifIncBd2} is a consequence of the jumps $S_{n_k}$ being bounded by $I a_4 C$ for some $a_4 > 0$. 
Reordering the equations gives the result. 
\end{proof}

\begin{proof}[Proof of Lem.~\ref{lem:MainAsympResult1}]
Consider the same random-coding scheme as in Thm. \ref{thm:VLFT_TimeLimit1} with encoders $f'_n(U,W)$ and decoders $g'_n(U,Y^n)$,
an initial blocklength $n_1$, and a uniform increment $I$.  However, the number of increments is now limited to a finite integer $m$. The finite block-length is then given by $N = n_m$ where $n_j = n_1 + (j-1)I$. Similar to Thm. \ref{thm:VLFTExpandFinite}, define the auxiliary stopping time as:
\begin{align}
\tilde{j} = \inf\{j > 0: S_{n_j} \geq \log M\} \wedge m \,.
\end{align}
Similar to Thm.~\ref{thm:VLFTExpandFinite} we have \eqref{eqn:6.1} to \eqref{eqn:6.4}, shown at the top of the page.   In \eqref{eqn:6.1} to \eqref{eqn:6.4}, $E$ is the set $\{\tilde{j} < m\}$.

Let the scaling of $m$ be 
\setcounter{equation}{117}
\begin{align}
m = \left\lceil{\left(\frac{\log M}{IC_\Delta} - 
\frac{n_1}{I}\right) + 1}\right\rceil
\end{align}
which yields a similar choice of $N$ as in the proof of Thm.~\ref{thm:VLFTExpandFinite}:
\begin{align}
\label{eqn:deltaforM}
N = n_1 + (m-1)I \geq \frac{\log M }{C_\Delta}\,.
\end{align}
The rest of the proof follows as in the proof of Thm. \ref{thm:VLFTExpandFinite}.
\end{proof}

\section{Chernoff Bounds for RCSP}
\label{sec:AppendixChernoff}
This appendix gives the proofs of the upper and lower bounds provided in Sec.~\ref{sec:ChernoffBounds}. Let $f_{\chi_{n}^2} = dF_{\chi_n^2}$ be the density function of a chi-square distribution with $n$ degrees of freedom. The following lemma gives upper and lower bounds on the joint of a pair or error events. 

\begin{lemma}
\label{lem:ChernoffTwoTrans}
For an AWGN channel, let $\{\zeta_{n_j}\}_{j = 1}^m$ be the error events of RCSP with bounded-distance decoding. We have the following upper and lower bounds on the probability of a pair of joint error events $\zeta_{n_j}$ and $\zeta_{n_i}$ where $j < i \leq m$:
\begin{align}
\label{eqn:TwoJoints1App}
&\P[\zeta_{n_j} \cap \zeta_{n_i}] \leq \inf\limits_{0\leq u < 1/2} 
\frac{\P\left[\chi_{n_j}^2 > (1-2u)r_i^2\right]}{e^{u r_i^2}(1-2u)^{n_i/2}},
\\
\label{eqn:TwoJointsCompApp}
&\P[\zeta_{n_j}^c \cap \zeta_{n_i}] \leq \inf\limits_{0\leq u < 1/2} 
\frac{\P\left[\chi_{n_j}^2 \leq (1-2u)r_i^2\right]}{e^{u r_i^2}(1-2u)^{n_i/2}},
\\
\label{eqn:TwoJoints2App}
&\P[\zeta_{n_j} \cap \zeta_{n_i}] \geq \max\left(\P[\zeta_{n_j}]-w_1, \P[\zeta_{n_i}]-w_2 \right)\,,
\end{align}
where $w_1$ and $w_2$ are given as:
\begin{align}
w_1 &= \inf\limits_{v\geq 0}\frac{e^{v r_i^2}\P\left[\chi_{n_j}^2 > (1+2v)r_j^2\right]}{(1+2v)^{n_i/2}},
\\
\label{eqn:w2}
w_2 &= \inf\limits_{0\leq v\leq 1/2}\frac{e^{-v r_i^2}\P\left[\chi_{n_j}^2 \leq (1-2v)r_j^2\right]}{(1-2v)^{n_i/2}}.
\end{align}
\begin{proof}
It suffices to show the bounds for $\zeta_{n_1}\cap\zeta_{n_2}$.
Let the set $A_i^j(r_k^2)$ be defined as:
\begin{align}
A_{i}^j(r_k^2) = \left\{z_{i+1}^j:\|z_{i+1}^j\|^2 > r_k^2\right\}.
\end{align} 
Applying the Chernoff inequality to \eqref{eqn:Pzeta_i} for $m = 2$, we have

\begin{align}
&\P[\zeta_{n_1}\cap\zeta_{n_2}]\leq  \int_{r_1^2}^{\infty} \frac{\E e^{u\chi_{I_2}^2}f_{\chi_{I_1}^2}(t_1)}{e^{u(r_2^2 - t_1)}}  dt_1
\\ 
&= \int_{r_1^2}^{\infty} (1-2u)^{-I_2/2}e^{-u(r_2^2 - t_1)} f_{\chi_{I_1}^2}(t_1) dt_1
\\ \label{eqn:ChangeVar}
&=(1-2u)^{-I_2/2}e^{-u r_2^2}
\int_{A_0^{I_1}(r_1^2)}\frac{e^{-\frac{(1-2u)}{2}\sum\limits_{i = 1}^{I_1}z_i^2}}{(2\pi)^{I_1/2}} dz_1^{I_1} 
\\ \label{eqn:ChangeVar2}
&=\frac{\int_{A_0^{I_1}\left((1-2u)r_1^2\right)}\frac{e^{-\sum\limits_{i = 1}^{I_1}z_i'^2/2}}{(2\pi)^{I_1/2}} dz_1'^{I_1}}
{(1-2u)^{I_2/2}(1-2u)^{I_1/2}e^{u r_2^2}}
\\ \label{eqn:UpperBound}
&=\frac{e^{-u r_2^2}\P\left[\chi_{I_1}^2>(1-2u)r_1^2\right]}{(1-2u)^{N_2/2}}\,,
\end{align}
where \eqref{eqn:ChangeVar2} follows from a change of variable $z_i' = (1+2u)^{1/2}z_i$. Taking the infimum over $u < 1/2$ gives the result. 

We provide a sketch of the proof for \eqref{eqn:TwoJointsComp} and the lower bound follows from \eqref{eqn:TwoJointsComp}. Observe that $\P[\zeta_{n_1} \cap \zeta_{n_2}] = \P[\zeta_{n_1}] - \P[\zeta_{n_1} \cap \zeta_{n_2}^c] = \P[\zeta_{n_2}] - \P[\zeta_{n_1} \cap \zeta_{n_2}^c]$. Let $w_1 = \P[\zeta_{n_1} \cap \zeta_{n_2}^c]$, $w_2 = \P[\zeta_{n_1}^c \cap \zeta_{n_2}]$ and finding the upper bounds of them yield the lower bound. The upper bound on $w_2$ and also \eqref{eqn:TwoJointsComp} follows from the above derivation by changing the integration interval from $(r_1, \infty)$ to $[0 , r_1]$. For the upper bound on $w_1$, apply the Chernoff inequality with the form $\P[X\leq r] \leq \E[ e^{-vX}]e^{vr}$. 
Taking the infimum over $v\geq 0$ gives the result.
\end{proof}
\end{lemma}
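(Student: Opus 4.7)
Plan: The nested rate‑compatible structure lets us write $\|Z^{n_i}\|^2 = X + Y$ with $X \triangleq \|Z^{n_j}\|^2 \sim \chi^2_{n_j}$ and $Y \triangleq \|Z_{n_j+1}^{n_i}\|^2 \sim \chi^2_{n_i-n_j}$ independent; under bounded‑distance decoding the marginal error events are $\zeta_{n_j} = \{X > r_j^2\}$ and $\zeta_{n_i} = \{X+Y > r_i^2\}$, so by relabeling it suffices to treat $j=1,\ i=2$.

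For the upper bound on the joint event, I would introduce the Chernoff tilt $\mathbb{1}_{\{X+Y>r_i^2\}} \leq e^{u(X+Y-r_i^2)}$ for $u\ge 0$ while retaining the indicator on $\{X > r_j^2\}$, factor the expectation using $X \perp Y$, substitute the chi‑square MGF $\E[e^{uY}] = (1-2u)^{-(n_i-n_j)/2}$ (valid for $u<1/2$), and reduce the remainder to the truncated moment $\int_{r_j^2}^{\infty} e^{ux}f_{\chi^2_{n_j}}(x)\,dx$. The change of variables $x' = (1-2u)x$ rescales the chi‑square density and converts this into $(1-2u)^{-n_j/2}\,\P[\chi^2_{n_j}>(1-2u)r_j^2]$; collecting factors and taking the infimum over $u$ gives the claimed bound on $\P[\zeta_{n_j}\cap\zeta_{n_i}]$. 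The bound on $\P[\zeta_{n_j}^c \cap \zeta_{n_i}]$ follows identically except that the integration domain shrinks to $[0,r_j^2]$, turning the chi‑square tail into the chi‑square CDF.

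For the lower bound I would use the two complementary identities $\P[\zeta_{n_j}\cap\zeta_{n_i}] = \P[\zeta_{n_j}] - \P[\zeta_{n_j}\cap\zeta_{n_i}^c]$ and $\P[\zeta_{n_j}\cap\zeta_{n_i}] = \P[\zeta_{n_i}] - \P[\zeta_{n_j}^c\cap\zeta_{n_i}]$. The second residual is precisely the upper bound just derived, so $w_2$ appears immediately. The first residual $\P[\zeta_{n_j}\cap\zeta_{n_i}^c]$ needs the reverse Chernoff tilt $\mathbb{1}_{\{X+Y\le r_i^2\}} \le e^{v(r_i^2-X-Y)}$ for $v\ge 0$, which introduces $\E[e^{-vY}]=(1+2v)^{-(n_i-n_j)/2}$ (finite for all $v\ge 0$) and, via the substitution $x'=(1+2v)x$, produces the $(1+2v)r_j^2$ threshold that defines $w_1$. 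Taking the maximum of the two resulting lower bounds and optimizing over the tilt parameters closes the argument.

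The main obstacle is pure bookkeeping: the three factors $(1-2u)^{-(n_i-n_j)/2}$ from $\E[e^{uY}]$, $(1-2u)^{-n_j/2}$ from rescaling the density of $X$, and $e^{-ur_i^2}$ from the Chernoff tilt on the joint event must combine into the advertised $e^{-ur_i^2}(1-2u)^{-n_i/2}$ prefactor, and one has to keep straight that $u<1/2$ is required for the positive MGF while $v\ge 0$ suffices for the negative MGF. The complementary identities, in particular, must be applied before the infimum is taken so that the best $u$ for $w_2$ and best $v$ for $w_1$ can be chosen independently.
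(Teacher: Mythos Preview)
Your approach is essentially identical to the paper's: both split $\|Z^{n_i}\|^2$ into independent $\chi^2_{n_j}$ and $\chi^2_{n_i-n_j}$ pieces, apply the Chernoff tilt $e^{u(X+Y-r_i^2)}$ to the second event, substitute the chi-square MGF $(1-2u)^{-(n_i-n_j)/2}$, and then rescale the remaining truncated integral via $x'=(1-2u)x$ to recover a chi-square tail; the lower bound is obtained in both cases from the two complementary identities together with the reverse tilt $e^{-v(X+Y)}$ for $w_1$. Your derivation (like the paper's own chain of equalities) produces the threshold $(1-2u)r_j^2$ rather than the $(1-2u)r_i^2$ printed in the lemma statement, so you have correctly reproduced what the paper actually proves.
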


The following theorem extends the upper bound of the above lemma for the $m$-transmission joint error probability given in \eqref{eqn:Pzeta_i}, which depends on the initial blocklength $n_1 = I_1$ and the $m-1$ step sizes $I_2, \dots, I_m$:
\begin{theorem}
\label{thm:GeneralUpper}
Let $u_i<1/2, i = 1, 2, \dots, m$ be the parameters for each use of Chernoff bound in the integral. Define $h_i, g_i(u_1^m)$ by the following recursion:
\begin{align*}
h_1 &= u_1
\\
g_1 &= e^{-u_1r_m^2}(1-2u_1)^{\frac{-I_{m}}{2}},
\\
h_i &= h_{i-1}+u_i(1-2h_{i-1}),
\\
g_i &= g_{i-1}e^{-u_i(1-2h_{i-1})r_{m-i+1}^2}\left(1-2h_{i-1}\right)^{\frac{-I_{m-i+1}}{2}}.
\end{align*}
Note the property that $1-2h_{i} = \prod_{j \leq i} (1-2u_j)$. We have

\begin{align}
&\P\left[E_{n_m}\right] 
\leq \inf\limits_{u_1^m} \frac{g_{m-1}(u_1^m)\P\left[\chi_{I_1}^2 > (1-2h_{m-1})r_1^2\right] }{(1-2h_{m-1})^{I_1/2}}.
\end{align}
\end{theorem}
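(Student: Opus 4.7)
The plan is to prove the bound by iterated peeling of the $m-1$ nested integrals in \eqref{eqn:Pzeta_i}, generalizing the one-shot argument used in the proof of Lemma \ref{lem:ChernoffTwoTrans}. Each peeling step combines a change of variable that collapses an inner integral to a single chi-square tail probability with a Chernoff inequality that converts that tail back into an exponential factor suitable for the next peeling step. The key algebraic identity that will drive the recursion is
\begin{equation*}
\int_a^{\infty} e^{h t} f_{\chi_k^2}(t)\, dt = (1-2h)^{-k/2}\, \P[\chi_k^2 > (1-2h) a], \qquad h < 1/2,
\end{equation*}
which is the same change of variable used in \eqref{eqn:ChangeVar}--\eqref{eqn:ChangeVar2}.

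The base case (iteration $i=1$) applies the Chernoff bound with parameter $u_1<1/2$ directly to the innermost factor $G_{\chi_{I_m}^2}(r_m^2 - \sum_{i=1}^{m-1} t_i)$. This produces the prefactor $g_1 = e^{-u_1 r_m^2}(1-2u_1)^{-I_m/2}$ and replaces $G_{\chi_{I_m}^2}$ by the exponential $e^{u_1 \sum_{j=1}^{m-1} t_j}$, so that what remains is an $(m-1)$-fold nested integral with an exponential integrand having rate $h_1 = u_1$ on every variable.

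Next I would iterate, for $i = 2, 3, \ldots, m-1$, the following peeling step. At the start of iteration $i$, the remaining integral is over $t_1, \ldots, t_{m-i+1}$ with the nested tail constraints and carries an exponential factor $e^{h_{i-1}(t_1 + \cdots + t_{m-i+1})}$. Integrating out the innermost variable $t_{m-i+1}$ via the identity above converts its contribution into a chi-square tail probability, to which I then apply the Chernoff inequality with parameter $u_i$. The exponential produced by this Chernoff step merges with the existing exponent on the remaining $t_1, \ldots, t_{m-i}$, updating the rate from $h_{i-1}$ to $h_i = h_{i-1} + u_i(1 - 2h_{i-1})$, which is equivalent to $1 - 2h_i = (1-2h_{i-1})(1-2u_i)$ and hence keeps $h_i < 1/2$ as required for the next iteration. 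All constant factors accumulated in this step fold into the multiplicative update for $g_i$ asserted in the theorem.

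After $m-1$ peeling steps only the outermost integral over $t_1$ with constraint $t_1 > r_1^2$ and integrand $e^{h_{m-1} t_1} f_{\chi_{I_1}^2}(t_1)$ remains. One final application of the identity above (with no further Chernoff needed) converts it into $(1-2h_{m-1})^{-I_1/2}\,\P[\chi_{I_1}^2 > (1-2h_{m-1}) r_1^2]$, which multiplied by $g_{m-1}$ gives the claimed bound before the infimum. The infimum is then taken over all feasible $u_1^m$, i.e., those for which $h_{m-1} < 1/2$. The main technical obstacle is purely bookkeeping: verifying that the change-of-variable and Chernoff updates combine at each step into exactly the recursion stated for $(h_i, g_i)$ and that Lemma \ref{lem:ChernoffTwoTrans} is recovered when $m=2$ as a sanity check.
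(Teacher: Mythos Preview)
Your proposal is correct and matches the paper's approach exactly: the paper's proof is a single sentence stating that the result follows by iteratively applying the change-of-variable/Chernoff argument of Lemma~\ref{lem:ChernoffTwoTrans}, which is precisely the peeling procedure you describe in detail. Your identification of the inductive invariant (the exponential rate $h_i$ on the remaining variables and the accumulated constant $g_i$) and the sanity check against the $m=2$ case are the right way to organize the bookkeeping.
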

\begin{proof}
The proof follows from changing the variables iteratively similar to the proof of Lem.~\ref{lem:ChernoffTwoTrans}.
\end{proof}

Several versions of lower bounds of the joint error probability with $m$ transmissions can be obtained by different expansions of the joint events, and the recursion formulas follow closely to those in Thm.~\ref{thm:GeneralUpper} and Lem~\ref{lem:ChernoffTwoTrans}. The following corollary gives an example of one specific expansion that yields a lower bound in a recursive fashion and the other formulas are omitted. 
\begin{corollary}
\label{cor:GeneralLower}
With the same recursion as in Thm.~\ref{thm:GeneralUpper},  the lower bound is given as:
\begin{equation}
	\label{eqn:GeneralLowerBound}
	\P\left[E_{n_m} \right] \geq 	\max\left\{0, p\right\}\,,
\end{equation}
where $p$ is given as
\begin{equation}
\P\left[\bigcap\limits_{j = 2}^m \zeta_{n_j} \right] - \inf\limits_{u_1^m} \frac{g_{m-1}(u_1^m)\P\left[\chi_{I_1}^2 \leq (1-2h_{m-1})r_1^2\right] }{(1-2h_{m-1})^{I_1/2}}.
\end{equation}
\end{corollary}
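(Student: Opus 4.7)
The plan is to write $\P[E_{n_m}]$ as the difference of two joint probabilities and then upper bound the subtracted term using exactly the same Chernoff recursion that was used to prove Thm.~\ref{thm:GeneralUpper}. Non-negativity of probability then supplies the outer $\max\{0,\cdot\}$.

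First, since $\zeta_{n_1}$ and $\zeta_{n_1}^c$ partition the sample space,
\begin{align}
\P\left[\bigcap_{j=1}^m \zeta_{n_j}\right]
= \P\left[\bigcap_{j=2}^m \zeta_{n_j}\right] - \P\left[\zeta_{n_1}^c \cap \bigcap_{j=2}^m \zeta_{n_j}\right].
\end{align}
The first term on the right-hand side is exactly the quantity that appears in \eqref{eqn:GeneralLowerBound}, so it remains to produce the claimed upper bound on $\P\bigl[\zeta_{n_1}^c \cap \bigcap_{j=2}^m \zeta_{n_j}\bigr]$.

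Next, I would write this joint probability as an iterated integral in the style of \eqref{eqn:Pzeta_i}. The only structural change from the $\P[E_{n_m}]$ expression is that the innermost integration variable $t_1$ (corresponding to $\|Z^{I_1}\|^2$) now ranges over $[0, r_1^2]$ instead of $(r_1^2, \infty)$; the outer integration limits for $t_2,\ldots,t_{m-1}$ are unchanged. I would then apply the Chernoff inequality tier by tier from the outside in, introducing the parameters $u_m, u_{m-1}, \ldots, u_1$, each with $u_i < 1/2$, and performing the same change of variables $z_\ell' = \sqrt{1 - 2h_{i-1}} \, z_\ell$ that was used in \eqref{eqn:ChangeVar}–\eqref{eqn:ChangeVar2} of Lem.~\ref{lem:ChernoffTwoTrans}. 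Because the recursion for the transmissions $j=2,\ldots,m$ is identical to the one in Thm.~\ref{thm:GeneralUpper}, it produces the same accumulated factors $g_{m-1}(u_1^m)$ and the same effective shrinkage $(1-2h_{m-1})$ on the first radius. Only the innermost integral changes, and since the integrand is the same density but the domain is now $[0, (1-2h_{m-1})r_1^2]$ after the change of variables, the integral evaluates to $\P[\chi_{I_1}^2 \leq (1-2h_{m-1})r_1^2]$ rather than the tail probability appearing in Thm.~\ref{thm:GeneralUpper}. Taking the infimum over the admissible parameters $u_1^m$ yields
\begin{align}
\P\left[\zeta_{n_1}^c \cap \bigcap_{j=2}^m \zeta_{n_j}\right]
\leq \inf_{u_1^m} \frac{g_{m-1}(u_1^m)\,\P\!\left[\chi_{I_1}^2 \leq (1-2h_{m-1})r_1^2\right]}{(1-2h_{m-1})^{I_1/2}}.
\end{align}
Substituting this bound into the partition identity proves the inequality $\P[E_{n_m}] \geq p$, and since probabilities are non-negative we may sharpen to $\P[E_{n_m}] \geq \max\{0, p\}$.

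The main obstacle is bookkeeping: verifying that the tiered Chernoff bound plus change of variables for the transmissions indexed $j=2,\ldots,m$ reproduces the same recursion $(h_i, g_i)$ as in Thm.~\ref{thm:GeneralUpper}, independently of whether the innermost (first-transmission) integration domain is the tail $(r_1^2,\infty)$ or the complement $[0, r_1^2]$. This is essentially the same observation that underlies the parallel structure of \eqref{eqn:TwoJoints1App} and \eqref{eqn:TwoJointsCompApp} in Lem.~\ref{lem:ChernoffTwoTrans}, extended inductively to $m$ transmissions; once one confirms that the outer Chernoff steps never interact with the innermost domain, the result follows by inspection from the proof of Thm.~\ref{thm:GeneralUpper}.
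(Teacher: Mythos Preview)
Your proposal is correct and follows essentially the same approach as the paper: the paper also expands $E_{n_m}$ via the set identity $\bigcap_{j=1}^m \zeta_{n_j} = \bigcap_{j=2}^m \zeta_{n_j} \setminus \bigl(\bigcap_{j=2}^m \zeta_{n_j} \cap \zeta_{n_1}^c\bigr)$ and then applies the Chernoff recursion of Thm.~\ref{thm:GeneralUpper} to the subtracted term, noting that only the innermost integration domain changes from $(r_1^2,\infty)$ to $[0,r_1^2]$, which replaces the tail probability by $\P[\chi_{I_1}^2 \leq (1-2h_{m-1})r_1^2]$. Your more detailed bookkeeping discussion is accurate and simply spells out what the paper leaves implicit.
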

\begin{proof}
Expand $E_m$ as 
\begin{equation}
\bigcap_{1\leq j \leq m } \zeta_{n_j}  = \bigcap_{2\leq j \leq m } \zeta_{n_j} \setminus \left(\bigcap_{2\leq j \leq m } \zeta_{n_j} \cap \zeta_{n_1}^c \right)
\end{equation}
and the proof follows from applying Thm.~\ref{thm:GeneralUpper} except for the last event $\zeta_{n_1}$, which gives $\P\left[\chi_{I_1}^2 \leq (1-2h_{m-1})r_1^2\right]$ instead of $\P\left[\chi_{I_1}^2 \geq (1-2h_{m-1})r_1^2\right] $.
\end{proof}

Applying Thm.~\ref{thm:GeneralUpper} for the case of $m = 2$ gives a proof of Thm.~\ref{thm:ChernoffBounds} as shown in the following.

\begin{proof}[Proof of Thm.~\ref{thm:ChernoffBounds}]
We first show that both \eqref{eqn:LowerMain} and \eqref{eqn:UpperMain} follow immediately from properties of probability. For any two sets $A, B$ we can write a disjoint union of A as
\begin{equation}
(A\cap B) \cup (A\cap B^c) = A\,.
\end{equation}
Letting $A = \zeta_{n_m}$ and $B = E_{n_m} \setminus \zeta_{n_m} = E_{n_{m-1}}$, we can rewrite the expression of $\P[E_{n_m}]$ as 
\begin{equation}
\P[E_{n_m}] +  \P[\zeta_{n_m}\cap E_{n_{m-1}}^c]= \P[\zeta_{n_m}]\,.
\end{equation}
We therefore have the following:
\begin{align}
 &\P\left[E_{n_m}\right] = \P[\zeta_{n_m}] - \P\left[\zeta_{n_m}\cap \bigcup_{i= 1}^{m-1}\zeta_{n_i}^c\right]
 \\
 \label{eqn:LowerEg}
 &\geq  \max\left[0, \P[\zeta_{n_m}] - \sum\limits_{j = 1}^{m-1}\P\left\{\zeta_{n_m} \cap \zeta_{n_j}^c\right]\right\},
\end{align}
where the last inequality can be seen as the union bound on the second term of the first equality. 

To show the upper bound, a straightforward probability upper bound of $E_{n_m}$ gives
\begin{align}
\label{eqn:UpperSimple}
\P\left[E_{n_j}\right] \leq \P[\zeta_{n_j}]\,,
\end{align}
which can be computed with the tail of the chi-square CDF directly. Since $\P\left[E_{n_j}\right] \leq \P[\zeta_{n_{j-1}}\cap \zeta_{n_j}]$, applying Lem.~\ref{lem:ChernoffTwoTrans} for upper bounds on $\P[\zeta_{n_{j-1}}\cap\zeta_{n_j}^c]$ finishes the proof.
\end{proof}

Note that alternatively, we can rewrite the joint error probability to obtain different upper bounds. For example, write $\P[E_{n_j}]$, $j \leq  m$ as 
\begin{align}
 &\P\left[\bigcap_{i=1}^{j}\zeta_{n_i}\right] 
 = \P\left[\bigcap_{i=1}^{j}\zeta_{n_i} \cap \zeta_{n_m}\right] +\P\left[\bigcap_{i=1}^{j}\zeta_{n_i} \cap \zeta_{n_m}^c\right]
 \\
 \label{eqn:UpperEg}
 &\leq  \P[\zeta_m, \zeta_j] + \P[\zeta_j, \zeta_{j-1}] - \P[\zeta_j,\zeta_{j-1},\zeta_m]\,.
\end{align}
Applying Thm.~\ref{thm:GeneralUpper} to the first two terms and Cor.~\ref{cor:GeneralLower} to the last term gives another version of the upper bound. Also note that for rates above capacity, only \eqref{eqn:UpperSimple} is active since the Chernoff bounds give trivial results.

\section{The AWGN Power Constraint for RCSP}
\label{sec:PowerConstraintRCSP}
Recall that for an $n$-dimensional input $X^n$ for the AWGN channel with unit noise variance and SNR $\eta$, the power constraint is given as $\sum_{j = 1}^n X_j ^2 \leq n\eta$. Assuming perfect packing of $M = 2^k$ identical Euclidean balls $D_i$ with radii $r$ into the outer sphere with radius $r_{\text{outer}} = \sqrt{n(1+\eta)}$, the radii for $D_i$ is $r = \frac{\sqrt{n(1+\eta)}}{2^{k/n}}$. The codeword point, which is located at the center of the decoding region $D_i$, is at least distance $r$ from the outer sphere surface assuming perfect packing and within the outer sphere.

Let $S_n\left(r\right) = \{y^n : \sum_{i = 1}^n y_i^2 = r^2\}$ be the sphere surface with radius $r$. The set of $x\in\mb{R}^n$ that is at least distance $r$ away from the sphere surface  with radius $r_{\text{outer}} = \sqrt{n(1+\eta)}$ is given as 
\begin{equation}
H(r) = \left\{x^n: \sum_{i = 1}^n (x_i - y_i)^2 \geq r^2, \forall y\in S_n\left(r_{\text{outer}}\right)\right\}.
\end{equation}
Assuming $R = k/n \leq \frac{1}{2}\log_2(1+\eta)$ we have $2^{2k/n} \leq 1+\eta$ and hence
\begin{align}
r^2 &= \frac{n(1+\eta)}{2^{2k/n}} 
\\
&\geq n\,.
\end{align}
Therefore by setting $R = 1/2\log(1+\eta)$ the set $H(r)$ becomes
\begin{align}
H(r) = \left\{x^n: \sum_{i=1}^n (x_i - y_i)^2 \geq n, y^n\in S_n\left(\sqrt{n(1+\eta)}\right) \right\}.
\end{align}
Let $B_n(r)$ denote a ball with radius $r$. With an additional constraint that $x^n$ must also be in the ball with radius $\sqrt{n(1+\eta)}$, we conclude that the codeword points must be in the ball $B_n\left(\sqrt{n\eta}\right)$.
The maximum energy of a codeword is therefore within the power constraint $\sum_{j = 1}^n X_j ^2 \leq n\eta$ if $k/n \leq \frac{1}{2}\log_2(1+\eta)$, the capacity of the AWGN channel. 

\bibliographystyle{IEEEtran}
\bibliography{IEEEabrv,Feedback_Journal}

\end{document}